\def\sqr#1#2{{\vcenter{\vbox{\hrule height.#2pt
        \hbox{\vrule width.#2pt height#1pt \kern#2pt
        \vrule width.#2pt}
        \hrule height.#2pt}}}}
\newcommand{\nc}{\newcommand}
\nc{\parent}[1]{$[\![#1]\!]$}
\newtheorem{theorem}{Theorem}[section]
\newtheorem{lemma}{Lemma}[section]
\newtheorem{proposition}{Proposition}[section]
\newtheorem{remark}{Remark}
\newtheorem{definition}{Definition}[section]
\DeclareMathOperator*{\argmin}{arg\,min}
\DeclareMathOperator*{\argmax}{arg\,max}
\nc{\cadlag}{c\`{a}dl\`{a}g } \nc{\ba}{\begin{array}}
\nc{\ea}{\end{array}} \nc{\be}{\begin{equation}}
\nc{\ee}{\end{equation}} \nc{\bea}{\begin{eqnarray}}
\nc{\eea}{\end{eqnarray}} \nc{\bean}{\begin{eqnarray*}}
\nc{\eean}{\end{eqnarray*}} \nc{\bu}{\bullet} \nc{\nn}{\nonumber}
\nc{\cA}{{\mathcal A}} \nc{\cB}{{\mathcal B}} \nc{\cC}{{\mathcal C}}\nc{\cX}{{\mathcal X}}
\nc{\cD}{{\mathcal D}} \nc{\bbD}{\mathbb{D}}\nc{\bbH}{\mathbb{H}}
\nc{\bbF}{\mathbb{F}}\nc{\bbG}{\mathbb{G}}\nc{\cG}{{\mathcal G}} \nc{\cF}{{\mathcal F}}
\nc{\cS}{{\mathcal S}} \nc{\cU}{{\mathcal U}} \nc{\cH}{{\mathcal H}}
\nc{\cK}{{\mathcal K}} \nc{\cL}{{\mathcal L}} \nc{\cM}{{\mathcal M}}
\nc{\cO}{{\mathcal O}} \nc{\cP}{{\mathcal P}} \nc{\cQ}{{\mathcal Q}}\nc{\bbE}{\mathbb{E}}
\nc{\bbEQ}{\mathbb{E}_{\mathbb{Q}}} \nc{\eps}{\varepsilon}
\nc{\bbEP}{\mathbb{E}_{\mathbb{P}}}\nc{\bbL}{\mathbb{L}}
\nc{\bbP}{\mathbb{P}} \nc{\bbQ}{\mathbb{Q}} \nc{\del}{\partial}
\nc{\Om}{\Omega} \nc{\om}{\omega} \nc{\bbR}{\mathbb{R}}
\nc{\bbC}{\mathbb{C}} \nc{\bfr}{\begin{flushright}}
\nc{\efr}{\end{flushright}} \nc{\dXt}{\Delta X_{t}} \nc{\dXs}{\Delta
X_{s}} \nc{\bs}{\blacksquare} \nc{\dX}{\Delta X} \nc{\dY}{\Delta Y}
\nc{\dnkx}{\left(X(T^{n}_{k})-X(T^{n}_{k-1})\right)}
\nc{\esssup}{\mathrm{ess}\mbox{ }\mathrm{sup}}
\nc{\essinf}{\mathrm{ess}\mbox{ } \mathrm{inf}}
\nc{\dhats}{\widehat{\delta_s}} \nc{\tX}{\tilde{X}}
\nc{\tZ}{\tilde{Z}}
\nc{\what}{\widehat}
 \nc{\half}{\frac{1}{2}}
\def\rar{\rightarrow}
\nc{\chf}{\mbox{$\mathbf1$}} \nc{\eid}{\stackrel{d}{=}}
\begin{document}
\title{Order routing and market quality: Who benefits from internalization?}
\author{Umut \c{C}et\.in}
\address{Department of Statistics, London School of Economics and Political Science, 10 Houghton St., London, WC2A 2AE}
\email{u.cetin@lse.ac.uk}
\author{Albina Danilova}
\address{Department of Mathematics, London School of Economics and Political Science, 10 Houghton St., London, WC2A 2AE}
\email{a.danilova@lse.ac.uk}
\date{\today}
\begin{abstract}
	We analyse two models of liquidity provision to determine the retail traders' preference for marketable order routing. Order internalization is captured by a model of market makers competing for the retail order flow in a Bertrand fashion. On the other hand, the price-taking competitive liquidity providers characterize the open exchange model.  We show that, when liquidity providers are risk averse, routing of the marketable orders to the wholesalers is preferred by {\em all} retail traders: informed, uninformed and noisy.  The unwillingness of liquidity providers to bear risk causes the strategic trader (informed or not) to absorb large shocks in their inventories.  This results in mean reverting inventories, price reversal, and lower  market depth.  The equilibria in both models coincide with \cite{Kyle} when liquidity providers are risk neutral. We also identify a universal parameter that allows comparison of market liquidity, profit and value of information across  different markets. 
\end{abstract}
\maketitle

\section{Introduction}

{The retail internalization, i.e. the practice of routing the marketable retail orders to wholesalers such as Citadel, Susquehanna, and Wolverine,  has been a major focus for regulators since the late 1990s. The size of the market and the proportion of retail orders affected by this practice completely justify  the regulators' concern.  Indeed,  \cite{BPS} report that  the average share of internalized trades in the total weekly stock trading volume  is 17\%, and has an upward trend.  Moreover, most of the retail orders received by brokers are routed to wholesalers\footnote{The US Securities and Exchange Commission (2010) Concept Release on Equity Market Structure states: ``A review of the order routing disclosures required by Rule 606 of Regulation NMS of eight broker-dealers with significant retail customer accounts reveals that nearly 100\% of their customer market orders are routed to OTC market makers.'' The concept release estimates the amount of payment for order flow is 0.1 cent per share or less.
	https://www.sec.gov/rules/concept/2010/34-61358.pdf (2010)}.

The  regulatory debate on internalization is far from over: while UK Financial Conduct Authority (FCA) has banned payment for order flow (PFOF) since 2012 and the European Union is following the suit in their 2021 Capital Markets Union Package\footnote{See https://ec.europa.eu/commission/presscorner/detail/en/ip\_21\_6251},  the US Securities and Exchange Commission (SEC) is still assessing the impact of such a ban on market quality\footnote{In their earlier, 1997,  assessment  SEC did not find internalization harmful to market quality.}. 

The  main argument for banning PFOF is that it reduces liquidity on exchanges since the internalization of {\em uninformed} retail orders increases the information asymmetry on the exchanges leading to higher spreads. This in turn  increases their execution costs as the price improvement that the wholesalers must provide is based on the best quotes from the exchange\footnote{See the CFA Institute study at https://www.cfainstitute.org/en/advocacy/policy-positions/payment-for-order-flow-in-the-united-kingdom.}. \cite{EKO96} and \cite{BK97} similarly argue that PFOF lowers the market quality since the wholesalers {\em cream skim} the uninformed traders. 

This argument hinges on the assumption that the retail orders are uninformed, consistently with empirical studies  at that time (see, e.g. \cite{BO00,BO08}. However, more recent studies (\cite{KST08}, \cite{BOZ08}, \cite{KLST12}, \cite{KT13}, \cite{FGL14}, \cite{BKS16}, \cite{BJZZ21})  show that the retail order flow can predict future returns, which suggests it can contain informed trading. 

Our paper enters the debate  on optimal routing of the retail order flow by comparing two  continuous time models for liquidity provision: in the first one, that  extends  \cite{GP},  the liquidity is supplied by perfectly  competitive agents of unit mass (as a model for the lit exchange); in the second, that extends  \cite{Kyle}, it is provided by a finite number of imperfectly competitive market makers, i.e. wholesalers.  Considering both informed and uninformed types of strategic traders allows us to analyse the retail agents' preferences for trading venues whether the retail order is informed or uninformed. We find that independently of the information content of the retail order strategic traders prefer their orders to be routed to the wholesalers when the liquidity providers are risk averse.  Remarkably this profit  gain  of strategic traders doesn't come at the expense of noise traders who are indifferent between two venues.  This result stems from the fact that the competition drives the market makers' utility gain to zero whereas the competitive agents enjoy positive utility gain in equilibrium.  

These findings are consistent with the proponents of internalization who argue that it enhances market quality by increasing the number of agents competing to execute order flow. For example, \cite{HNV98}  show that internalized trades pay lower spreads than non-internalized ones and that bid-ask spreads are not affected by  the level of internalization.   \cite{Bat97} and \cite{BGJ97} demonstrate, correspondingly, that the transaction costs are not increased by the  introduction of PFOF and internalizing dealers.  

This paper also contributes to theoretical market microstructure literature in two ways.  First, the market structure with competitive agents extends   \cite{GP} to allow adverse selection. Second, the model with market makers considered in this paper  extends the continuous time version of the model introduced in \cite{Kyle} by allowing the market makers to be risk averse and by providing an explicit description of their competition.  Endowing the liquidity providers with  CARA utility  is not only consistent with  empirical research\footnote{ The literature indicates that the trading behaviour of the market makers is affected by their inventory positions even after isolating the adverse selection component of their quotes (see \cite{HS97} and \cite{MS93} for NYSE, \cite{HNV98} for LSE, \cite{BR05} for FX; for a survey of related literature and results, see Sections 1.2 and 1.3 in \cite{BGSsurvey}). This  leads to the conclusion that the market makers are risk averse as inventory should play no role for a risk neutral market maker once the adverse selection is accounted for.},  but also allows to assess the impact of risk aversion on market liquidity.

The risk aversion of liquidity providers alters the equilibrium outcome drastically. One of the main differences is that, in contrast with the earlier extensions of the Kyle model, the insider's trades induce a reversal in the total demand (and hence { in} the price) { in both models of liquidity provision}  process, i.e. { insider trading} ceases to be {\em inconspicuous}.  This is a remarkable departure from the results of \cite{Kyle} and related literature, where the distribution of the total demand is unaltered by the presence of private information, and is fully determined by the noise trading. Inventories of liquidity suppliers are mean reverting in equilibrium  irrespectively  of the nature of liquidity provision and independently of whether the strategic trader is informed or not.  The driving force behind this result is that the risk aversion of liquidity providers makes them unwilling to bear risk. Instead of paying the extra compensation for the risk, the strategic trader chooses to reduce it by absorbing the part of  large fluctuation in the demand from the noise traders. This causes the total demand to mean revert -- a result unanimously supported by empirical studies (see, among others, \cite{HS97}, \cite{MS93}, \cite{HNV98}, and \cite{BR05}).
 Moreover, the fact that the mean reversion is more pronounced in the presence of private information for reasonable market parameters suggests a new paradigm of empirical research which has so far attributed the mean reversion solely to the inventory costs.
 
In addition to studying equilibrium outcomes resulting from different models of liquidity provision and risk aversion of agents, we are also able to compare the market depth and value of information across different models and markets due to identification of a universal parameter, {\it the market adjusted risk aversion} $\rho_M$. Remarkably, the market outcomes, once normalized with their counterpart in Kyle's model, turned out  be fully determined by $\rho_M=\rho\sigma\gamma$ rather then by individual market characteristics such as risk aversion $\rho$, volatility of the noisy trades, $\sigma$, and volatility of the asset, $\gamma$, in isolation.

We observe that the risk aversion considerably reduces the market depth as expected. Moreover,  higher $\rho_M$ leads to  lower market depth since the liquidity providers require additional compensation for the risk that they bear.  The same mechanism is responsible for the monotone relationship between normalized market depth and market adjusted risk aversion.

A surprising outcome of the model, however, is that the value of private information  is non-monotone  in $\rho_M$ in both markets. In particular we observe that it is decreasing for small market adjusted  risk aversion and increasing for sufficiently large values. The reason for this apparently counter-intuitive behaviour is the fact that the value of information consists of two components. One component reflects insider's profits purely due to her private information, while the other is due to her participation in risk sharing. The mean reverting demand causes the former to decrease in the market adjusted  risk aversion, since this profit is collected by bringing the price to its fundamental value, and the stronger is the mean reversion, the more effort is needed to do so.  On the other hand, the latter must increase in the market adjusted  risk aversion as a result of risk sharing. These two components combined result in a non-monotone dependency of the value of information on market adjusted risk aversion: for small $\rho_M$ the value of information purely due to private information dominates, whereas for higher $\rho_M$ this value is largely determined by profits due to market's response.

%In the range of reasonable market parameters the value of information is smaller than in Kyle's model and  is higher in the market with market makers. This implies that an agent trading with market makers has more incentives to acquire private information. On the other hand, the analysis of the profit shows that a strategic agent (informed or otherwise) will always choose to trade with market makers as her profit are higher even if the value of information is lower. 

These observations show that a mere introduction of risk averse market makers to the setting of \cite{Kyle} changes the equilibrium outcome fundamentally. The previous attempt to investigate the effect of such an extension was carried out in \cite{subRA}, who considered a one-period model. Due to the strategic behaviour of the insider, a direct extension of this one-period model to a multi-period, let alone a continuous-time one, is not possible, as intuited in \cite{subRA}. As we demonstrate in this paper one can, however, derive the closed-form equilibrium in a continuous-time setting in the spirit of \cite{Back92}.

Related work contains two streams of literature.  In the first stream that stems from the extension of the Kyle's model liquidity providers are market makers who compete in a Bertrand fashion and set the quotes.  Extensions of the Kyle's model include, but are not limited to, continuous-time formulation by \cite{Back92},  risk-averse insider by \cite{Bar02},  long-lived private information by \cite{BP98},  markets with default risk by \cite{CCDdef}, competition among insiders by \cite{BCWmult}, and stochastic volatility of noise trades by \cite{CDF16}.  In all these models the optimal strategy of the insider is inconspicuous. In the second strand  of literature that stems from   \cite{GP} the liquidity providers are price-taking dealers that correspond our competitive agents in Section \ref{s:LP}. \cite{BCEL20} is the work in this stream of literature  that is closest to  the model studied in Section \ref{s:LP}.  \cite{BCEL20} recasts \cite{GP} as the insider trading model with risk-averse dealers by employing optimal transport and formulating the problem under the risk-neutral measure rather than physical as we do.  As their focus is primarily on the solvability of this extension, they do not consider the liquidity providers that set the quotes, i.e. market maker, precluding the analysis of the choice of the trading venue that is the main focus of our paper. 

The rest of the paper is structured as follows. In Section \ref{model} we describe the market structure. Section \ref{s:LP} establishes the existence of equilibrium when liquidity is provided by  competitive agents and characterises it. Section \ref{s:MM} derives an equilibrium in the market makers' case. Section \ref{s:cd} discusses market parameters resulting from equilibria derived in Sections \ref{s:LP} and \ref{s:MM}  and their relationship with the Kyle model. Section \ref{s:C} concludes.

}

\section{Market structure} \label{model}
{Let $(\Omega , \cF , (\cF_t)_{t \in [0,1]} , \bbP)$ be a filtered probability space  satisfying the usual conditions of right continuity and $\bbP$-completeness. Assume that on this probability space there exist a normal random variable $V\in \cF_0$ with mean $\mu$ and variance $\gamma^2$-- the fundamental value of the asset,  and a standard Brownian  motion $B$, independent of $V$. Since all information shocks in the model will come from $V$ and $B$, we can and will assume that $\cF$ is generated by $V$ and $(B_t)_{t=0}^1$.  We also assume that all filtrations in this paper satisfy the usual conditions.

We consider a market in which a single risky asset is traded. The value of this asset, $V$, will be  public knowledge at some future time $t=1$. For simplicity of exposition, we assume that the risk free interest rate is $0$.

There are three types of agents that interact in this market:
\begin{itemize}
\item[i)] Noise traders, whose demands are random, price inelastic, and do not reveal any information about the value of $V$. In particular, we assume that their cumulative demand at time $t$ is given by $Z_t$ -- a Brownian motion with mean $0$ and variance $\sigma^2$, independent of $V$. That is, $Z_t=\sigma B_t$, where $B$ is a standard Brownian motion independent of $V$.
\item[ii)] A single insider, who knows $V$ from time $t=0$ onwards, and is risk neutral. We will denote insider's cumulative demand at time $t$ by $X_t$. The filtration of the insider, $\cF^I$, is generated by observing the price of the risky asset and $V$.
\item[iii)] Liquidity suppliers observe only the net demand\footnote{That is, we assume that the traders cannot choose a particular liquidity supplier to trade with. Hence, the demand is aggregated before being sent to the liquidity suppliers.} of the risky asset, $Y=X+Z$, thus, their filtration, $\cF^M$, is generated by $Y$.

    We also assume that they have identical CARA utilities $U$ with the common risk aversion parameter $\rho$. More precisely, $U(x)=1-e^{-\rho x}$. We will consider two markets where the liquidity suppliers are either
    \begin{itemize}
    	\item[a)] {\em perfectly competitive agents} that form a continuum of mass one and take prices $P$ as given, or
    	\item[b)] {\em market makers} who compete in a Bertrand fashion for the net demand of the risky asset. The number of market makers is assumed to be finite.
    \end{itemize}
\end{itemize}
The asymmetry of information between the insider and liquidity suppliers also leads to the conclusion that they have different probability measures. Indeed, since the insider is given the realization of $V=v$, she assigns probability $1$ to this event. On the other hand, the liquidity suppliers do not observe $V$, thus they assign probability $0$ to the event $V=v$ since $V$ is a continuous random variable. Thus, their probability measures are not only different but {\em singular}. We will denote by $\bbP^v$ the probability measure of the insider who is given the information that $V=v$. Note that $\bbP^v$ is the joint distribution of $V$ and $B$ evaluated at $V=v$, and the liquidity suppliers' probability measure $\bbP$ satisfies
\[
\bbP(E)=\int_{\bbR}\bbP^v(E)\bbP(V \in dv)
\]
for any $E \in \cF$.}
\section{Competitive agents equilibrium}\label{s:LP}
{Before defining equilibrium in this market we will define the admissible strategies for the competitive agents and the insider. 
We will look for an equilibrium where the price is of the following form:
\be \label{e:LPEprice}
dP_t^X=m_tdt +\lambda_t(dZ_t+dX_t),
\ee
for some appropriately measurable $m,\lambda$ and $X$ such that $P^X$ is $(\bbP,\cF^M)$-semimartingale. We will denote the class of $\cF^M$-predictable processes that are $P^X$-integrable by $\cL(P^X)$ (see Section IV.2 of \cite{Pro}). Moreover, since equilibrium price processes should not allow for arbitrage opportunities, we will only consider process $P^X$ admitting an equivalent local martingale measure; that is, there exists a  $\bbQ\sim \bbP$ under which $P^X$ is an $\cF^M$-local martingale. The processes satisfying these conditions are said to be {\em compatible with no-arbitrage} and the set of such processes is denoted by $\cP(X)$. 
\begin{definition}
	\label{d:lp:LPadmissible} Let $m$, $\sigma$ and $X$ be given and consider $P^X\in \cP(X)$. For each competitive agent $a\in[0,1]$ $\theta$ is admissible given $P^X$ if  $(\int_0^t \theta_s dP^X_s)_{t \in [0,1]}$ is a $(\bbQ,\cF^M)$-martingale, where $\bbQ$ is the unique equivalent local martingale measure for $P^X$. 
	The set of admissible strategies for competitive agents is denoted by $\cA(m,\lambda,X)$.
\end{definition}
The set of admissible strategies for the insider will be different as she takes into account the feedback effect of her trading.
\begin{definition}
	\label{d:lp:insadm}   Let $m$ and $\lambda$  be given.  The trading strategy $X$ is admissible if it is absolutely continuous, the associated price process is compatible with no-arbitrage, that is, 
	\[
	P_t := P_0+\int_0^t m_u du +\int_0^t \lambda_u(dZ_u +dX_u) \in \cP(X),
	\]
	 and no doubling strategies are allowed, i.e.
	\[
	\bbE^v\int_0^1P_t^2dt<\infty.
	\]
	
	The set of admissible strategies for the insider is denoted by $\cA^I(m,\lambda)$.
\end{definition}

In addition to the standard {\em no-doubling} condition as in \cite{Back92} we also require that the  price process associated with the insider's trading strategy should not offer arbitrage opportunities to the competitive agents since, otherwise, the model will degenerate.
\begin{definition}
	\label{d:lp:equilibrium} An equilibrium in this market is given by $((m^*,\lambda^*),\theta^*=(\theta^{*,a})_{a\in [0,1]},X^*)$ such that 
	\[
	P^*_u=P^*_0 +\int_0^u m^*_tdt +\int_0^u\lambda^*_t(dZ_t+dX^*_t),
	\]
	$P^* \in \cP(X^*)$, $\theta^{*,a}\in \cA(m^*,\lambda^*,X^*)$ for all $a \in [0,1]$, $X^*\in \cA^I(m^*,\lambda^*)$ and the following are satisfied:
	\begin{enumerate}
		\item {\bf (Optimality for competitive agents)} For all $a\in [0,1]$ $\theta^{*,a}$ solves
		\[
		\sup_{\theta \in \cA(m^*,\lambda^*,X^*)}\bbE\left[1-\exp\left\{-\rho\left(\int_0^1\theta_tdP^*_t +\theta_1(V-P^*_1)\right)\right\} \right];
			\]
		\item {\bf (Optimality for the insider)} $X^{*}$ solves
		\[
		\sup_{X\in \cA^I(m^*,\lambda^*)}\bbE^v\left[\int_0^1X_tdP_t +X_1(V-P_1) \right],
		\]
		where 
		\[
		P_t = P^*_0+\int_0^t m^*_u du +\int_0^t \lambda^*_u(dZ_u +dX_u);
		\]
		\item {\bf (Market clearing)} $\int_0^1 \theta^{*,a}_tda=-(X^*_t+Z_t)$ for all $t \in [0,1]$.
	\end{enumerate}
\end{definition}
This definition of equilibrium formalizes a continuous-time market in which Kyle's insider trades with  ``dealers'' of \cite{GP}. To be more precise this notion of equilibrium  postulates that the demand pressure faced by the dealers comprises of the optimal strategy of the insider and the inelastic demand of the noise traders. By doing so it partially endogenizes the total order flow that was fully exogenous in  \cite{GP}. As in Kyle, the insider takes into account the fact that her actions influence the prices whereas the liquidity suppliers are price takers as in \cite{GP}. 

In case of risk-neutral competitive agents the price process is a martingale and, therefore, the equilibrium will coincide with that of \cite{Kyle} in which the price is an affine function of demand. The next theorem establishes the existence of an equilibrium where the price process is still an affine function of demand for an arbitrary risk aversion parameter.
\begin{theorem}  \label{t:eqLP}
	There exists an equilibrium $((m^*,\lambda^*),\theta^*=(\theta^{*,a})_{a\in [0,1]},X^*)$, where
	\bean
	m^*&=&0,\\
	\lambda^*&=&\frac{\rho\gamma^2}{2}+\sqrt{\frac{\rho^2\gamma^4}{4}+\frac{\gamma^2}{\sigma^2}},\\
	P^*_0&=&\mu,\\
	dX^*_t&=&\frac{\frac{V-\mu}{\lambda^*}-(X^*_t+Z_t)} {1-t}dt, \quad \mbox{ and }\\
	\theta^{*,a}&=&-(X^*_t+Z_t), \quad \forall a\in [0,1].
	\eean
Moreover, the equilibrium characteristics of the market are as follows:
\begin{enumerate}
	\item From the point of view of the insider the equilibrium demand and price dynamics are given by
	\bea 
	dY^*_t&=&\sigma dB_t+\frac{\frac{V-\mu}{\lambda^*}-Y^*_t} {1-t}dt,  \label{e:demandLP}\\
	dP^*_t&=&\lambda^*\sigma dB_t +\frac{V-P^*_t}{1-t}dt \label{e:priceLP},
	\eea
		and her ex-ante profit is given by
	\[
	\bbE\left[\frac{(P_0-V)}{2\lambda^*}+\frac{\sigma^2}{2}\lambda^*\right]=\frac{\gamma^2}{2\lambda^*}+\frac{\sigma^2}{2}\lambda^*.
	\]
	\item From the point of view of the competitive agents the equilibrium demand and price dynamics are given by
	\bea 
	dY^*_t &=& \sigma d\beta^*_t -\sigma^2 \frac{\lambda^*\rho}{1+ \lambda^* \rho\sigma^2(1-t)}Y^*_t dt,
	\label{e:demandLPc}\\
	dP^*_t &=& \lambda^* \sigma d\beta^*_t -\sigma^2 \frac{\lambda^*\rho}{1+ \lambda^* \rho\sigma^2(1-t)}(P^*_t-\mu) dt \label{e:priceLPc},
	\eea
	where $\beta^*$ is an $\cF^M$-Brownian motion. Their expected utility  equals
	\be \label{e:lputility}
	1-\frac{\lambda^*\sigma}{\gamma}e^{-\frac{\lambda^*\rho\sigma^2}{2}}.
	\ee	
\end{enumerate}	
	\end{theorem}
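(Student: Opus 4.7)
The plan is to verify the three equilibrium conditions separately, together with $P^*\in\cP(X^*)$. Market clearing is immediate because every competitive agent holds the same position $\theta^{*,a}=-Y^*$ and hence the aggregate equals $-(X^*+Z)$. The main obstacle will be establishing the global optimality of $\theta^{*,a}=-Y^*$, because a direct HJB ansatz fails: $-Y^*$ is not the instantaneous Merton fraction implied by the $\bbP$-drift of $P^*$. I sidestep this by identifying the pricing measure associated with the conjectured terminal wealth.

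For the insider, integration by parts (using $X_0=0$ and absolute continuity of $X$) rewrites the profit as $\int_0^1(V-P_t) dX_t$. Applying Ito to $(V-P_t)^2$ gives
\[
\int_0^1(V-P_t) dX_t = \frac{(V-P_0)^2-(V-P_1)^2}{2\lambda^*}+\frac{\lambda^*\sigma^2}{2}-\sigma\int_0^1(V-P_t) dB_t,
\]
and the admissibility bound $\bbE^v\int_0^1 P_t^2 dt<\infty$ makes the stochastic integral a true $\bbP^v$-martingale with zero expectation. The profit is therefore maximised when $P_1=V$ $\bbP^v$-almost surely, and the Brownian-bridge SDE for $X^*$ is the unique absolutely continuous strategy achieving this terminal condition; standard bridge moment estimates confirm its admissibility. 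Averaging over $V\sim N(\mu,\gamma^2)$ yields the ex-ante profit $\gamma^2/(2\lambda^*)+\lambda^*\sigma^2/2$.

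To handle the filtration $\cF^M$ I apply Kalman--Bucy filtering to the linear Gaussian observation $dY^*_t=\sigma dB_t+[(V-\mu)/\lambda^*-Y^*_t]/(1-t) dt$. The posterior variance $R_t$ solves $\dot R=-R^2/[\sigma^2\lambda^{*2}(1-t)^2]$ and equals $R_t=\lambda^{*2}\sigma^2(1-t)/g(t)$ with $g(t)=1+\lambda^*\rho\sigma^2(1-t)$; the initial condition $R_0=\gamma^2$ is precisely the defining quadratic $\lambda^{*2}-\rho\gamma^2\lambda^*-\gamma^2/\sigma^2=0$ and pins down $\lambda^*$. The posterior mean is then $\what V_t=\mu+\lambda^* Y^*_t/g(t)$, the innovation representation yields (\ref{e:demandLPc})--(\ref{e:priceLPc}), and Girsanov with market-price-of-risk $-\lambda^*\rho\sigma Y^*/g$ produces an EMM $\bbQ$ for $P^*$, so $P^*\in\cP(X^*)$.

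For a competitive agent holding $\theta=-Y^*$, Ito gives $W_1=\lambda^*\sigma^2/2-(V-\mu)^2/(2\lambda^*)$; with $V\sim N(\mu,\gamma^2)$, a Gaussian moment generating function evaluation and the identity $1-\rho\gamma^2/\lambda^*=\gamma^2/(\sigma^2\lambda^{*2})$ (equivalent to the defining quadratic) deliver the claimed expected utility $1-(\lambda^*\sigma/\gamma)\exp(-\lambda^*\rho\sigma^2/2)$. For global optimality the first-order condition is that $\bbE[e^{-\rho W_1}\int_0^1\eta dP^*]=0$ for every admissible perturbation $\eta$, equivalently that $P^*$ is a martingale under $\bbP^*$ with $d\bbP^*/d\bbP\propto e^{-\rho W_1}\propto\exp(\rho\lambda^* Y^{*2}_1/2)$. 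A Gaussian conditional expectation computes $\Psi(t,y):=\bbE[\exp(\rho\lambda^* Y^{*2}_1/2)\mid\cF^M_t]=\sqrt{g(t)}\exp(\rho\lambda^* y^2/[2g(t)])$; a direct Ito check verifies that $\Psi$ is a $\bbP$-martingale in $\cF^M$, and Girsanov relative to $\Psi$ cancels exactly the $\bbP$-drift of $Y^*$, so that $Y^*$, and hence $P^*=\mu+\lambda^*Y^*$, is a $\bbP^*$-martingale. Strict concavity of $w\mapsto 1-e^{-\rho w}$ makes the first-order condition sufficient for a global maximum, completing the verification.
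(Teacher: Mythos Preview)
Your argument is correct and reaches the same equilibrium, but the route differs from the paper's in two places. First, to obtain the $\cF^M$-law of $Y^*$ you run Kalman--Bucy filtering on the conditionally Gaussian observation and read off $\lambda^*$ from the initial condition $R_0=\gamma^2$; the paper instead writes $X^*$ as an $h$-transform of the Ornstein--Uhlenbeck process (\ref{e:Rsde}) and invokes a bridge result (Theorem~2.2 of \cite{CDbridge}) to conclude that $Y^*$ has the OU law in its own filtration, with $\lambda^*$ pinned down by matching the law of $R_1$ to that of $(V-\mu)/\lambda^*$. Second, for the competitive agents you verify the first-order condition by showing that the marginal-utility weighted measure $d\bbP^*/d\bbP\propto\exp(\rho\lambda^*(Y^*_1)^2/2)$ is an equivalent martingale measure for $P^*$; the paper instead cites the exponential-utility duality of \cite{DGRSSS02} and computes the optimal terminal wealth $\frac{1}{\rho}(\bbE^{\bbQ}[\log L_1]-\log L_1)$ explicitly. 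Your approach is more self-contained; the paper's is shorter but leans on external references. Two small points: your claim that the bridge SDE is ``the unique absolutely continuous strategy'' forcing $P_1=V$ is not literally true (many drifts achieve this), but you only need existence of one admissible optimal $X^*$, which you have; and your sufficiency-by-concavity step tacitly uses that $\bbP^*$ coincides with the \emph{unique} EMM $\bbQ$ (so that admissibility, defined via $\bbQ$-martingality, forces $\bbE^{\bbP^*}[\int\eta\,dP^*]=0$)---this follows from your Girsanov computation together with the Brownian generation of $\cF^M$, but is worth stating.
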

%dX^*_t&=&\left\{-\sigma^2 \frac{\lambda\rho}{1+ \lambda \rho\sigma^2(1-t)}(X^*_t+Z_t)+\frac{\frac{V-\mu}{\lambda}-\frac{X^*_t+Z_t}{1+ \lambda \rho\sigma^2(1-t)}}{1-t}\right\}dt\mbox{ and }

The proof of the above theorem is technical and is delegated to the appendix. However, the idea of the proof is easy to grasp once we observe that  the proposed equilibrium price is a function of demand; thus,  the optimality condition for the insider's trading strategy is the same as in \cite{Back92}. That is, the insider drives the market price to the fundamental value $V$. Moreover, how she achieves that does not affect her utility and in particular the total demand does not have to be a Brownian motion as in \cite{Back92}.  Hence, to determine the optimal demand process that ensures convergence of the price to the fundamental value, one needs to solve the optimization problem of the competitive agents.  This is achieved by  equating the stochastic discount factor to their marginal utility, which allows us to identify  the law of the equilibrium demand from the point of view of the competitive agents. By combining this with Doob's h-transform technique we ensure that the price coincides with the fundamental value at the end of trading. 

In equilibrium the response of the insider to the choice of price as a function of total demand by the competitive agents is the same as in \cite{Back92}, and in particular does not depend on the level of the risk aversion of the competitive agents. On the other hand, the optimal strategy ceases to be inconspicuous in that the equlibrium demand from (\ref{e:demandLPc}) is an Ornstein-Uhlenbeck process as opposed to being a Brownian motion as in \cite{Back92}. 

The risk aversion of competitive agents introduces the inventory as another source of risk to be managed in comparison to \cite{Back92}.  For a given market depth the liquidity suppliers are thus willing to bear fluctuations in their inventory only to a certain extent at any given time. As the insider's strategy is optimal as long as the price converge to the fundamental value at the end, she opts for a strategy that keeps the inventory of competitive agents mean reverting around $0$. 

Keeping the liquidity suppliers' inventory mean reverting around $0$ is still optimal even if the strategic is uninformed as will be shown next.  The speed of mean reversion, however, is smaller in absence of private information.  This implies that although the mean reversion of demand is borne out of inventory management considerations, its speed is determined by the total amount of perceived risk, which is clearly higher in the presence of asymmetric information.
\begin{theorem}\label{t:benchLP} 
	There exists an equilibrium $((m^*,\lambda^*),\theta^*=(\theta^{*,a})_{a\in [0,1]},X^*)$, where
%	\bean
%	m^*=\lambda^*&=&0,\\
%	P^*_0&=&\mu,\\
%	dX^*_t&=&-\frac{(X^*_t+Z_t)} {1-t}dt, \quad \mbox{ and }\\
%	\theta^{*,a}&=&-(X^*_t+Z_t), \quad \forall a\in [0,1].
%	\eean
%	Moreover, the equilibrium dynamics of the demand is given by
%	\[
%	dY^*_t=\sigma dB_t - \frac{Y^*_t}{1-t}dt,
%	\]
%	and the expected profit of the strategic trader as well as the expected utility of the competitive agents are $0$ in equilibrium.
\bean
\lambda^*&=&\frac{\rho \gamma^2}{2},\\
P^*_t&=&\lambda^*Y^*_t +\mu, \quad t<1\\
P^*_1&=&2\lambda^*Y^*_1 +\mu,\\
	dX^*_t&=&-\frac{\lambda^*\sigma^2 \rho} {2+ \lambda^*\rho \sigma^2 (1-t)}(X^*_t+Z_t)dt, \quad t<1,\\
	\Delta X^*_1&=&-\frac{X^*_{1-}+Z_1}{2}, \quad \mbox{ and }\\
	\theta^{*,a}&=&-(X^*+Z), \quad \forall a\in [0,1].
\eean
Moreover, the ex-ante profit of the strategic trader is given by $\frac{\rho\gamma^2\sigma^2}{4}$ and the expected utility of the competitive agents equals
\be \label{e:lputilitys}
1-\sqrt{1+\frac{\lambda^*\rho\sigma^2}{2}}e^{-\frac{\lambda^*\rho\sigma^2}{2}}
\ee
\end{theorem}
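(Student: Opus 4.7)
The proof proceeds by direct verification of the three equilibrium conditions in Definition \ref{d:lp:equilibrium}. Market clearing holds by construction since $\theta^{*,a}=-(X^*+Z)$ for every $a$. The substantive work is to verify the two optimality conditions.

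For the strategic trader, who is uninformed and therefore has $\bbE[V\mid\cF^M_1]=\mu$, integration by parts rewrites the objective as
\[
\bbE\!\left[\int_0^1 X_t\,dP_t + X_1(V-P_1)\right]= -\bbE\!\left[\int_0^1 (P_{t-}-\mu)\,dX_t\right] -\bbE[\Delta X_1\,\Delta P_1].
\]
Under the candidate pricing, $P_{t-}-\mu=\lambda^* Y_{t-}$ on $[0,1]$ and $\Delta P_1=\lambda^* Y_{1-}+2\lambda^*\Delta X_1$ (arising from the switch to $P_1=2\lambda^* Y_1+\mu$ at the terminal instant, together with $\Delta Y_1=\Delta X_1$). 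Decomposing $\int Y_{t-}\,dX_t$ into its continuous and jump parts, the objective reduces to a sum involving $-\lambda^*\bbE[X_{1-}^2/2]-\lambda^*\bbE[\int_0^1 Z_t\,dX_t^c]$ and the quadratic $-2\lambda^* Y_{1-}\Delta X_1-2\lambda^*(\Delta X_1)^2$. Pointwise maximization in $\Delta X_1$ yields $\Delta X_1^*=-Y_{1-}/2$; substituting back, expanding $Y_{1-}^2=(X_{1-}+Z_1)^2$, and invoking the identity $X_{1-}Z_1=\int_0^1 Z_t\,dX_t^c+\int_0^1 X_t\,dZ_t$ (integration by parts, with $[X,Z]=0$) together with $\bbE\bigl[\int X\,dZ\bigr]=0$ (since $X$ is $\cF^M$-adapted and $Z$ is a $\bbP$-martingale) collapses everything to $\lambda^*\sigma^2/2=\rho\gamma^2\sigma^2/4$. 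The key observation is that this optimal value is \emph{independent} of the continuous part of $X$, so $X^*$ is an optimal strategy; the specific form of its continuous dynamics is then selected by the competitive agents' side.

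For the competitive agents I would mirror the HJB argument underlying Theorem \ref{t:eqLP}, with ansatz $J(t,W,Y)=1-e^{-\rho W}\exp(-c(t)Y^2/2+d(t))$. The first-order condition in $\theta$, combined with market-clearing $\theta^*=-Y$, produces a linear equation relating $c(t)$ to the drift of $P^*$, and hence a Riccati ODE for $c$. Its terminal boundary condition is fixed by the time-$1$ CARA/Gaussian optimization: since $V\sim N(\mu,\gamma^2)$ is independent of $\cF^M_1$, the pointwise maximizer is $\theta_1=-(P_1-\mu)/(\rho\gamma^2)$, so $\theta_1=-Y_1$ forces $P_1-\mu=\rho\gamma^2 Y_1$, which identifies $\lambda^*=\rho\gamma^2/2$ and mandates the slope-doubling from $\lambda^*$ to $2\lambda^*$ at the terminal instant. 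Integrating the ODE backwards recovers the drift $-\lambda^*\sigma^2\rho/(2+\lambda^*\rho\sigma^2(1-t))$ of $dX^*$, and evaluating $J(0,0,0)$ via the explicit $c,d$ yields \eqref{e:lputilitys}.

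The main technical obstacle is the coherent handling of the jump at $t=1$: the doubling of the price-impact coefficient must be shown to be \emph{forced} (not merely permitted) by the agents' terminal risk-sharing condition while simultaneously making $\Delta X^*_1=-Y^*_{1-}/2$ optimal for the strategic trader; the two sides of this fixed point reinforce one another, but matching them requires careful bookkeeping of what is observed when. A further subtle point is verifying admissibility $P^*\in\cP(X^*)$, which requires producing an equivalent local martingale measure for the mean-reverting $P^*$ on $[0,1)$ via a Girsanov change of drift, plus the standard integrability checks needed to upgrade the local martingale $\int X\,dZ$ to a true martingale in the strategic trader's calculation.
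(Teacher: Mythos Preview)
Your handling of the strategic trader is correct and essentially matches the paper's: both reduce the problem to showing that, under the stated pricing rule, the optimal value equals $\lambda^*\sigma^2/2$ and is attained by \emph{any} admissible continuous strategy provided the terminal jump satisfies $\Delta X_1=-Y_{1-}/2$. The paper obtains this by invoking Proposition~\ref{p:ioptimality} (the equality case $2\lambda(1)/\lambda(1-)=(1+c_1)^2$ with $c_1=1$, $c_0=-\mu$), whereas you carry out the integration-by-parts calculation directly; the conclusion and the profit $\rho\gamma^2\sigma^2/4$ are identical.

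For the competitive agents your route diverges from the paper, and one premise of your sketch is mistaken: the paper does \emph{not} use an HJB argument in Theorem~\ref{t:eqLP} either. Both theorems are proved by duality. After maximising over $\theta_1$ (which, as you correctly note, gives $\theta_1^*=-Y_1^*$ and is the mechanism that fixes $2\lambda^*=\rho\gamma^2$), the residual endowment $\tfrac{\lambda^*}{4}(Y_{1-}^*)^2$ is absorbed into an exponential density $N_1$, defining $\tilde{\bbP}\sim\bbP$ under which $Y^*$ follows the ``clean'' Ornstein--Uhlenbeck with drift coefficient $a(t)=-\lambda^*\rho/(1+\lambda^*\rho\sigma^2(1-t))$. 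The problem is then a pure CARA portfolio problem with no terminal term, to which Theorem~2.1 of \cite{DGRSSS02} applies: the optimal wealth is $\tfrac{1}{\rho}(\bbE^{\bbQ}[\log L_1]-\log L_1)=\tfrac{\lambda^*}{2}(\sigma^2-(Y_{1-}^*)^2)$, which equals $\int_0^{1-}\theta_t^*\,dP_t^*$ by direct computation, and \eqref{e:lputilitys} falls out of $1-\sqrt{(1+\tfrac12\lambda^*\rho\sigma^2)/(1+\lambda^*\rho\sigma^2)}\cdot e^{-\bbE^{\bbQ}[\log L_1]}$.

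Your HJB ansatz $J(t,W,Y)=1-e^{-\rho W}\exp(-c(t)Y^2/2+d(t))$ would also succeed --- the Riccati for $c$ is solvable and the terminal boundary condition is precisely the one you identify --- but it forces you to carry the quadratic endowment as a boundary term and then extract \eqref{e:lputilitys} from $d(0)$ after integrating two coupled ODEs. The duality approach buys you two things: it folds the terminal endowment into the measure change in one step, and it delivers the optimal wealth (hence the verification that $\theta^*=-Y^*$ is optimal) and the expected utility via the entropy of $L_1$, without ever writing down the HJB equation.
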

}
\section{Market makers equilibrium} \label{s:MM}
{ Before determining the optimal behaviour of the agents in this market, we first need to understand how the orders are allocated among the market makers. Recall that the orders are combined before arriving to the market makers  and the Bertrand competition dictates that the total order is executed at the best available quote. To gain an intuition as to how the winning quote is determined, consider a small time interval $[s,s+ \Delta)$ and observe that the number of shares to be allocated is $\Delta Y:=Y_{s+\Delta}-Y_s$. Assume that $\Delta$ is such that $\Delta Y$ is arbitrarily small. Each market maker provides a quote $(\lambda^i, \phi^i)$, where $\lambda^i>0$. That is, the market maker $i$ is willing to absorb this demand at the price $\lambda^i \Delta Y+\phi^i$. 
	
	Observe that if the market makers quote different $\phi^i$s, the price at which traders buy an infinitesimal amount is smaller than the one at which they sell. This creates an opportunity for any trader to achieve infinite profits.  Indeed,  if $\Delta Y>0$, the order will be priced at $\min_{i}\phi^i +\lambda^{\argmin_i\phi^i}\Delta Y$ while the sell order is priced at $\max_{i}\phi^i +\lambda^{\argmax_i\phi^i}\Delta Y$ leading to a negative a negative bid-ask spread.  Thus, it is optimal for the market makers to quote the same $\phi$. 
	
	If  $\phi^i=\phi$ for all $i$, it is obvious that, independently from the sign of $\Delta Y$, the market maker quoting the smallest $\lambda^i$ provides the best price for this order.

 The order is allocated according to the {\em price-time priority}. That is, the market maker quoting the best price of the given order size gets the whole order. If there are several market makers quoting  the best price, the order is allocated to the one who has submitted this quote first.  Finally, as convention, the order is allocated to a single market maker at random  if several market makers has submitted the best price at the same time. We also rule out {\em shuffling}: The winning market maker continues to receive the total market order as long as he quotes the best price.

\subsection{Agents' objectives} The above discussion makes it clear that the market makers must quote the same $\phi$ at all times. Thus, the demand in any infinitesimal interval is allocated to the market maker quoting the smallest $\lambda$ and the market price evolves as
\be \label{e:trprice}
P_t=\phi(t) +\int_0^t \lambda(s)dY_s, \mbox{ where } \lambda(t) =\min_i \lambda^i(t), \, \forall t \in [0,1].
\ee 
As we are searching for a linear equilibrium we will consider the quotes of the form $(\lambda^i, \phi)$ that are admissible in the following sense.
\begin{definition}
The vector of quotes $(\Lambda, \Phi):=\left[ (\lambda^1, \phi^1) \ldots (\lambda^N, \phi^N)\right]$ submitted by the market makers is admissible if $\lambda^i>0$ and $\phi^i$ are piecewise constant and right-continuous functions of time such that $\phi^i=\phi$ for all $i=1, \ldots, N$. The set of admissible vectors is denoted by $\cH^M$. 
\end{definition}
Let $\cS$ be the set of right-continuous piecewise constant functions on $[0,1]$. Given an admissible vector of quotes, $(\Lambda, \Phi)$ we may define the function $H:\cH^M\mapsto \cS\times \cS$ by
\be \label{d:H}
H((\Lambda, \Phi)):=(\lambda, \phi),
\ee
where $\lambda$ is defined in (\ref{e:trprice}) and $\phi=\phi^1$. We will denote by $\cH$ the set of market quotes $(\lambda, \phi)$ that results from admissible vector of quotes submitted by the market makers. That is,
\[
\cH=\{(\lambda,\phi): (\lambda, \phi)=H((\Lambda, \Phi)), \; (\Lambda, \Phi)\in \cH^M\}.
\]
Note that for the given market quote $(\lambda,\phi)\in \cH$ the market price $P$ evolves as
\be \label{e:trpriceMQ}
P_t=\phi(t) +\int_0^t \lambda(s)dY_s, \, \forall t \in [0,1].
\ee 

Since the insider observes the market price, she observes the market quote  $(\lambda,\phi)$ as well as the demand process $Y$ as $\lambda>0$ for $(\lambda,\phi)\in \cH$. Therefore, the insider can perfectly infer the demand of the liquidity traders since she knows her own demand. This implies that   $\cF^I_t=\sigma(V,Z_s; s\leq t)$. Moreover, the fact that the insider can infer only the market quote rather than the vector of quotes submitted by the market makers implies that she makes her trading decision based on the market quote and, thus, the admissible strategies of the insider can be defined as follows:

\begin{definition} Given  $(\lambda,\phi)\in \cH$, $X$ is an admissible strategy for the insider if it satisfies the following.
	\begin{enumerate}
		\item $X$ is adapted to $\cF^I$.
		\item No doubling strategies are allowed:
		\be \label{a:Xintegrability}
		\bbE^v\int_0^1 P_t^2\,dt =\bbE^v\int_0^1\left(\int_0^t \lambda(s)\left\{dX_s+dZ_s\right\}+\phi(t)\right)^2 dt < \infty,
		\ee
		where  $\bbE^v$ is the expectation taken with respect to $\bbP^v$.
		\item $X$ is of finite variation.
	\end{enumerate}
A strategy satisfying the above conditions is called admissible for the given market quote $(\lambda, \phi)$ and the class of such strategies will be denoted by $\cA(\lambda, \phi)$. 
\end{definition}
 Condition 1) is the minimal requirement  that the strategy is implementable. Condition 2) prevents the insider from following doubling strategies as in \cite{Back92}. The last condition is a relaxation of the restriction to absolutely continuous strategies that is standard in the literature. This restriction is justified in \cite{Back92} under the assumption that $\lambda$ is continuous. Arguments similar to the ones used by Back can be employed to rule out jumps and martingale components on the intervals of continuity of $\lambda$. However, one cannot infer the sub-optimality of jumps at the points of discontinuity of $\lambda$ that leaves us with $X$ being of finite variation.
 
Observe that if $X \in \cA(\lambda, \phi)$, then the terminal wealth of the insider is given by\footnote{For a derivation and motivation of this equation see  \cite{Back92}.}
\be \label{d:iw}
W_1^X:=\int_0^1 X_{s-} dP_s + X_1(V-P_1)=\int_0^1\left(V-P_s\right)dX_s.
\ee
The first term in the terminal wealth corresponds to continuous trading in the risky asset, while the second term exists due to potential discontinuity in the asset price when the value becomes public knowledge at time $t=1$. The second expression for the wealth follows from integration by parts and $X$ being of finite variation.

As we assume that the insider is risk neutral, her optimisation problem consists of finding the strategy that maximises her expected profit for the given market quote. 

Since all  market makers are identical in this model, we will focus on the symmetric equilibria. To analyse the deviations from a symmetric vector of quotes consider $(\Lambda,\Phi)\in \cH^M$ where $(\lambda^i,\phi^i)=(\lambda,\phi)$ for $i\neq j$ and $(\lambda^j,\phi^j)=(\tilde{\lambda},\phi)$.  Before quantifying the profitability of such deviations we need to formalise the order allocation process for given total demand process $Y$. 

The deviating market maker gets the infinitesimal change in the demand at time $t>0$  if $\tilde{\lambda}(t)<\lambda(t)$. Let's denote by $\tau_j$ the first time that the market maker $j$ is not allocated the order. If $\tilde{\lambda}(0)>\lambda(0)$, $\tau_j$ is obviously $0$. If they are equal, $\tau_j$ is still $0$ if the market maker was not allocated the order at time $0$. In case he wins the allocation at time $0$,  this time is given by
\[
\tau^j=\left\{ 
\ba{ll}
\inf\{t\geq 0:  \tilde{\lambda}(t)>\lambda(t)\},& \mbox{ if } \tilde{\lambda}(0)=\lambda(0);\\
\inf\{t\geq 0:  \tilde{\lambda}(t)\geq \lambda(t)\},& \mbox{ if } \tilde{\lambda}(0)<\lambda(0),
\ea\right.
\]
where by convention the infimum of an empty set is $\infty$. Then,  according to our time-priority and no shuffling conventions, his holdings, $Y^j$, evolve as
\be \label{e:Yjdef}
dY^j_t = \left(\chf_{[\tau^j> t]}+\chf_{[\tau^j\leq t]}\chf_{[ \tilde{\lambda}(t)<\lambda(t)]}\right)dY_t.
\ee
To be more precise, the time-priority and no-shuffling imply that after losing the order allocation for the first time, the only way to receive  new market orders is via undercutting the market. 

We now turn to the characterisation of an equilibrium $((\Lambda,\Phi),X)$, where all market makers submit the same quote $(\lambda,\phi)$ and $X$ is an optimal strategy for $(\lambda,\phi)$. 

Recall that the market makers do not compete on $\phi$ and consider a situation where market maker $j$ deviates by submitting the quote $(\lambda^j,\phi)$. Let $\tilde{\lambda}:=\min\{\lambda, \lambda^j\}$ and note that the market quote  corresponding to this deviation is $(\tilde{\lambda},\phi)$.  First consider the case in which the deviation doesn't alter the market quote, i.e. $\tilde{\lambda}=\lambda$. Since the insider bases her decisions on the market quotes, such a deviation will not change her trading strategy. Thus, the total demand process $Y$ remain unchanged.  

Then the  terminal wealth of the deviating market maker is given by
\be \label{e:gainsj}
G^j_1(X):=-\int_0^{1} Y^{j}_{s-}\,dP_s +Y^{j}_{1} (P_{1}-V),
\ee
where $P$ is the price process corresponding to the market quote $(\lambda, \phi)$ and $Y^j$ is as in \eqref{e:Yjdef} with $Y=X+Z$. Thus, such a deviation is not profitable if $ \bbE\left(U\left(G^j_1(X)\right)\right)$ is less than the expected utility had the market maker $j$  quoted $(\lambda,\phi)$.

If $\tilde{\lambda}\neq \lambda$, i.e. the market quote changes, the insider's optimal strategy may change. Denote by  $\cX(\tilde{\lambda},\phi)$ the set of optimal strategies of the insider given this market quote  and fix $\tilde{X}\in \cX(\tilde{\lambda},\phi)$. Then, the  terminal wealth  $G^j_1$ of the deviating market maker is given by  \eqref{e:gainsj}, where $P$ is the price process corresponding to the market quote $(\tilde{\lambda}, \phi)$ and $Y^j$ is as in \eqref{e:Yjdef} with $Y=\tilde{X}+Z$. We will say that this deviation is not profitable if the expected utility of the deviating market maker from terminal wealth is lower for all $\tilde{X}\in \cX(\tilde{\lambda},\phi)$.

Note that in our definition of equilibrium, the market makers take into account the response of the insider when considering deviations. That is, the equilibrium that we consider is of multi-leader Stackelberg-Nash-Bertrand type in which the group of market makers is the leader and the insider is the follower. The market makers compete on price in a non-cooperative Nash game while taking into account the optimal response of the insider.

%%%%%%%%%%%%%%%%%%%%%%%%%%%%%%%%%%%%%%%%%%%%%%%%%%%%%%%%%%%%%%%%%%%%%%%%
\subsection{Equilibrium} \label{s:eq}
The feedback from the insider's strategy makes the analysis of the actions of the deviating market maker quite challenging. Nevertheless, the nature of the optimal response of the insider as given by  Proposition \ref{p:ioptimality} implies  that  deviations in market quotes cannot result in a decrease in $\lambda$ on $[0,1)$ as this will bring an infinite loss to the deviating market maker. Note that the competition among the market makers at time $1$ becomes relevant only if there is a bulk order. This trade can only come from the insider whose gains from the bulk order $\delta$ is given by $\delta (V-P_1)$. As the insider choses $\delta$ to maximise her utility, at the optimal bulk order her gains is strictly positive. Hence, the market maker taking the other side of the trade has strictly negative gains since there is no noise trading at time $1$. This prevents the market makers with zero inventory making the market at time $1$. On the other hand, for the market makers with non-zero inventory, participating in this final trade could be optimal as it might reduce their inventory risk. 

The different nature of the market at time $1$ can result in a decrease of  the market $\lambda$ at closing of the market. However, there is a minimal value for $\lambda$ below which absorbing the final {\em bulk} order will result in infinite loss.

Therefore, as soon as the final $\lambda$ is set at this minimal value, the deviating market  maker  has  only two options: 1) quote $\lambda^*$ at the beginning and quit making the market at some point, and 2) undercut at time $0$ and possibly leave market at some future time.

Note that the market quote (and thus the insider's strategy) remains the same in the first case. Consequently, deviations of this form will be suboptimal provided that the value of leaving market is a martingale. 

Although the market quotes and the corresponding insider strategies change in the second case, any optimal response of the insider leads to the full revelation of the fundamental value at the time the deviating market maker stops making the market. As the market maker's utility increase in $\lambda$, the deviating market maker's utility is negative as soon as the equilibrium one is zero. 

These two observations lead to the following theorem, proof of which is delegated to the appendix.

\begin{theorem} \label{t:eqMM}
	There exist two equilibria $((\Lambda^*,\Phi^*_i,X^*_i))_{i=1,2}$, where
	\bean
	\Lambda^*(t)&=&\Big(\lambda^*\chf_{[t<1]}+ \frac{\lambda^*}{2}\chf_{[t=1]}\Big)\mathbf{1}, \mbox{ where } \lambda^*=\frac{\rho\gamma^2}{2}+\sqrt{\frac{\rho^2\gamma^4}{4}+\frac{\gamma^2}{\sigma^2}},\\
	\Phi^*_i&=&\phi^*_i\mathbf{1}, \mbox{ where }\phi^*_i=\mu-\frac{(-1)^i}{\rho}\sqrt{\rho^2\gamma^2+2\frac{\rho\gamma^2}{\lambda^* \sigma^2}\log\left(\frac{\gamma}{\lambda^* \sigma}\right)},\\
	dX^*_{i,t}&=&\sigma^2 \left(a(t) (X^*_{i,t}+Z_t)+b_i(t) +\frac{p_x^{(i)}}{p^{(i)}}\Big(t,X^*_{i,t}+Z_t;1, \frac{V-\phi^*_i}{\lambda^*}\Big) \right)dt, \mbox{ where }\\
	a(t)&=&-\frac{\lambda^*\rho}{1+ \lambda^* \rho\sigma^2(1-t)},\\
	b_i(t)&=&(-1)^i\frac{1}{2}\log(1+\lambda^* \rho\sigma^2(1-t))\sqrt{\frac{a(t)}{-\lambda^* \rho\sigma^2 (1-t)+\log (1+\lambda^* \rho\sigma^2(1-t))}},
	\eean
	and $p^{(i)}$ is the transition density of the Ornstein-Uhlenbeck process
	\be \label{e:Ri}
	dR^{(i)}_t=\sigma dB_t +\sigma^2 (a(t)R^{(i)}_t+b_i(t))dt.
	\ee
	\begin{enumerate}
		\item From the point of view of the insider the equilibrium demand and price dynamics are given by
		\bea
		dY^{*,i}_t&=&\sigma dB_t+\frac{1}{1-t}\left(\frac{V-\phi_i^*}{\lambda^*}-\frac{\alpha_{1,i}(t)}{\rho\lambda^*}+ b_i(t)\sigma^2(1-t)-Y^{*,i}_t\right)dt,\label{e:demandMMi}\\
		dP^{*,i}_t&=&\lambda^*\sigma dB_t + \frac{1}{1-t}\left(V-\frac{\alpha_{1,i}(t)}{\rho}+ b_i(t)\lambda^*\sigma^2(1-t)-P^{*,i}_t\right)dt\label{e:priceMMi},
		\eea
		where
		\[
		\alpha_{1,i}(t)=(-1)^i\sqrt{-a(t)\left(\lambda^* \rho\sigma^2 (1-t)-\log (1+\lambda^* \rho\sigma^2(1-t))\right)}.
		\]
		The ex-ante profit of the insider is given by
		\[
		\bbE\left[\frac{1}{2 \lambda^*}(P_0-V)^2+ \frac{\lambda^*\sigma^2}{2}\right]=\frac{(\phi_i^*-\mu)^2}{2\lambda^*}+ \frac{\gamma^2}{2\lambda^*}+ \frac{\lambda^*\sigma^2}{2}.
		\]
		\item From the point of view of the market makers the equilibrium demand and price dynamics are given by
		\bea
		dY^{*,i}_t&=&\sigma d\beta_t +\sigma^2 (a(t)Y^{*,i}_t+b_i(t))dt,\label{e:demandMMm}\\
		dP^{*,i}_t&=&\lambda^*\sigma d\beta_t +\sigma^2 (a(t)(P^{*,i}_t-\phi_i^*)+\lambda^*b_i(t))dt\label{e:priceMMm},
		\eea
		where
	$\beta$ is an $\cF^M$-Brownian motion. 
	\end{enumerate}
\end{theorem}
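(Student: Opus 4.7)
The plan is to fix the candidate symmetric market quote $(\lambda^*,\phi_i^*)$, verify that the stated $X_i^*$ is an optimal insider response, confirm the two viewpoint dynamics, and finally rule out profitable market-maker deviations; the precise values of $\lambda^*$ and $\phi_i^*$ will be forced by the no-deviation requirements. For the insider's problem I would invoke Proposition \ref{p:ioptimality}: since $\lambda^*$ is constant on $[0,1)$, the price is an affine function of the total demand $Y$, so Back-type arguments apply and any admissible finite-variation $X$ that drives $P_1=V$ almost surely yields the same expected profit $\bbE^v[(P_0-V)^2/(2\lambda^*)+\lambda^*\sigma^2/2]$. The stated $X_i^*$ decomposes as an autonomous Ornstein--Uhlenbeck drift $\sigma^2(a(t)(X+Z)+b_i(t))$, which corresponds to the market-maker viewpoint \eqref{e:demandMMm}, plus a Doob $h$-transform term $\sigma^2 p_x^{(i)}/p^{(i)}$ that conditions the process $R^{(i)}$ of \eqref{e:Ri} on the event $R_1^{(i)}=(V-\phi_i^*)/\lambda^*$. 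This conditioning forces $Y_1^{*,i}=(V-\phi_i^*)/\lambda^*$ and hence $P_1^{*,i}=V$, which simultaneously confirms optimality and, after substituting the explicit Gaussian log-derivative, yields \eqref{e:demandMMi}--\eqref{e:priceMMi}.

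The constants $\lambda^*$ and $\phi_i^*$ are then determined by the market makers' optimality. Absorbing the time-$1$ bulk trade $\Delta X_{i,1}^*$ without an infinite CARA loss requires $\lambda^*\ge \rho\gamma^2/2+\sqrt{\rho^2\gamma^4/4+\gamma^2/\sigma^2}$, and Bertrand competition drives $\lambda^*$ down to exactly this threshold, matching Theorem \ref{t:eqLP}. Next, Bertrand competition drives each market maker's ex-ante expected utility to zero: computing that utility under the Gaussian OU dynamics \eqref{e:demandMMm} (the CARA expectation is available in closed form because the integrands are affine-quadratic in $Y$) produces a quadratic equation in $\phi_i^*-\mu$ whose two roots give exactly the two equilibria indexed by $i=1,2$. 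The drift $b_i(t)$ itself is not yet pinned down at this stage; it will be fixed by the deviation analysis.

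For the no-deviation step, the preliminary remarks in the paper reduce admissible deviations to (a) quoting $\lambda^*$ on $[0,1)$ and leaving the market at some $\cF^M$-stopping time $\tau$, and (b) undercutting at $t=0$ with some $\tilde\lambda<\lambda^*$ (possibly followed by an exit). For (a) the market quote, and hence the insider's strategy, is unaffected, so the deviation is unprofitable exactly when the market maker's continuation-value process is an $\cF^M$-martingale; imposing this martingale condition is what fixes $b_i(t)$ to the stated form, with the two signs matching the two roots $\phi_i^*$. For (b) one shows that any optimal insider response $\tilde X\in\cX(\tilde\lambda,\phi_i^*)$ still forces full revelation $P_\tau=V$ by the instant $\tau$ at which the deviant leaves the market, so the deviant is compelled to absorb an information-revealing trade at the strictly smaller unit price $\tilde\lambda<\lambda^*$; monotonicity of the market-maker utility in $\lambda$, together with the zero-utility benchmark at $\lambda^*$, then yields strictly negative expected utility.

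The main obstacle is part (b). One has to characterise the entire best-response correspondence $\cX(\tilde\lambda,\phi_i^*)$ rather than just a single representative, verify full revelation at the deviant's exit time $\tau$ for every element of this correspondence, and then quantify the inventory-risk component of the market-maker utility in a way that is monotone in $\lambda$. Extending the Back-type verification argument to the non-equilibrium quote $\tilde\lambda$, where the constant-$\lambda$ simplification breaks down at the switching time and the insider's time-varying exposure must be tracked explicitly, is the genuinely delicate step and is what makes the argument long enough to be relegated to the appendix.
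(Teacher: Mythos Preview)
Your overall architecture---verify the insider's best response via Proposition \ref{p:ioptimality} and the $h$-transform, then eliminate the two kinds of market-maker deviations---matches the paper. But two pieces of the reasoning are off, and they matter.

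\medskip
\textbf{Where $\lambda^*$ and $\phi_i^*$ come from.} There is \emph{no} time-$1$ bulk trade in this equilibrium: since $P_{1-}=V$ and $\lambda(1)=\lambda^*/2$, \eqref{e:dY1opt} gives $\Delta Y_1=0$. So your ``CARA cannot absorb the bulk order unless $\lambda^*\geq\cdots$'' argument has no object to act on. In the paper's proof the value of $\lambda^*$ is forced by the \emph{consistency of the $h$-transform}: the bridge construction requires $R_1^{(i)}\stackrel{d}{=}(V-\phi_i^*)/\lambda^*$, and once the martingale condition for $M_t:=1-\bbE_t[U(G_t)]$ (Lemma \ref{l:martingaleU}) has fixed $a(t)=-\lambda\rho/(1+\lambda\rho\sigma^2(1-t))$, the variance of $R_1$ equals $\sigma^2/(1+\lambda\rho\sigma^2)$. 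Setting this equal to $\gamma^2/\lambda^2$ gives exactly the quadratic whose positive root is $\lambda^*$. Likewise $\phi_i^*$ is not obtained from a ``zero-utility quadratic'': zero utility is automatic because $M_0=\exp(\alpha_0(0))=1$ regardless of $\phi$. The value of $\phi_i^*$ comes from the \emph{mean} part of the same distributional matching, $\bbE[R_1^{(i)}]=(\mu-\phi_i^*)/\lambda^*$, after $b_i$ has been fixed by the martingale condition.

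\medskip
\textbf{The undercut deviation.} Your plan for case (b)---characterise the whole set $\cX(\tilde\lambda,\phi_i^*)$, argue revelation at the exit time, then invoke an unspecified ``monotonicity of the MM utility in $\lambda$''---is both vaguer and harder than what the paper actually does. The paper bypasses all of this with a \emph{pathwise} identity (Remark \ref{r:fictitiousgain}): summing the gains of all market makers and using $Y=X+Z$ together with the insider's value function $\psi$ yields
\[
\hat G_1=\frac{(P_{1-}-V)^2}{2\lambda(1-)}-(V-P_1)\Delta Y_1-\frac{(P_0-V)^2}{2\lambda(0)}+\tfrac{\sigma^2}{2}\!\int_0^1\!\lambda(s)\,ds.
\]
Combined with Remark \ref{r:DMM1} (the first two terms are nonpositive for any optimal insider response) and $\lambda<\lambda^*$ on the undercut interval, this gives the deterministic bound $G_1^j\le -\frac{(P_0-V)^2}{2\lambda(0)}+\frac{\sigma^2\lambda^*}{2}$, from which $\bbE[U(G_1^j)]<0$ is immediate. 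No tracking of the time-varying insider response or of the switching time is needed; Proposition \ref{p:ioptimality} already handles arbitrary piecewise-constant $\lambda$, so there is nothing delicate about the ``non-equilibrium'' quote. The step you flag as the main obstacle is, in the paper, a two-line consequence of this gains identity.
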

Comparison of  this result with Theorem \ref{t:eqLP} reveals that the equilibrium market parameters are the same apart from $P_0$ and the mean reversion level of total demand. As the target level of inventory for competitive agents is $0$, we conclude that competition among makers results in them taking additional inventory risk. The inventory risk depends on the magnitude of inventory and not its sign giving raise to two equilibria, with the magnitude of the target inventory determined by the zero-utility condition. This phenomenon is not an artefact of asymmetric information as it appears in the equilibrium with uninformed strategic stated below. As in the case of competitive agents, the speed of mean reversion is smaller in absence of private information due to smaller perceived risk by the liquidity suppliers.
\begin{theorem} \label{t:benchMM}
There exist equilibria $(\Lambda^*,\Phi^*_i,X^*_i)_{i=1,2}$, where
\bean
\Lambda^*&=&\lambda^* \mathbf{1}, \mbox{ with } \lambda^*(t)=\frac{\rho\gamma^2}{2}\chf_{[t<1]}+\frac{\rho\gamma^2}{4}\chf_{[t=1]},\\
\Phi^*_i&=&\phi^*_i\mathbf{1}, \mbox{ with }\phi^*=(-1)^i\frac{\rho \sigma \gamma^2}{2\sqrt{3}}+\mu,\\
dX^*_{i,t}&=&\chf_{[t<1]}\sigma^2 \left(a(t) (X^*_{i,t}+Z_t)+b_i(t)\right)dt +\chf_{[t=1]}\frac{\mu-\phi^*_i-\lambda^*(1-)(X_{i,1-}^*+Z_1)}{\lambda^*(1-)}, \mbox{ where }\\
a(t)&=&-\frac{\rho^2\gamma^2}{4},\\
b_i(t)&=&-\frac{\rho(\mu-\phi^*_i)}{2}.
\eean
Moreover, the equilibrium dynamics of the demand is given by
\[
\begin{split}
	dY^*_{i,t}&=\sigma dB_t -\frac{\rho\sigma^2}{2}\Big(\frac{\rho\gamma^2}{2}Y^*_{i,t}+\mu-\phi^*_i \Big)dt, \quad t<1;\\
	\Delta Y^*_{i,1}&=2\frac{\mu- \phi^*_i}{\rho \gamma^2}-Y_{i,1-}^*,
\end{split}
\]
and the expected profit of the strategic trader in equilibrium equals 
$
\frac{\rho \sigma^2 \gamma^2}{3}$.
\end{theorem}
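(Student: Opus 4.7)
The plan is to adapt the structure of the proof of Theorem~\ref{t:eqMM} to the uninformed strategic trader, exploiting the simplification that $V$ is independent of her filtration $\sigma(Z_s,\,s\le t)$. Under this independence her risk-neutral objective reduces to $\sup_X \bbE\!\int_0^1(\mu-P_s)\,dX_s$, which plays the role taken by the private signal in Theorem~\ref{t:eqMM}.

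First I would characterize the trader's best response to a generic admissible market quote $(\lambda,\phi)\in\cH$ that is piecewise constant with $\lambda(t)=\lambda_0$ on $[0,1)$ and $\lambda(1)$ possibly smaller. With $P_t=\phi+\lambda_0(X_t+Z_t)$ for $t<1$, I would apply dynamic programming to the quadratic ansatz $J(t,y)=\tfrac{\lambda_0}{2}y^2+(\phi-\mu)y+k(t)$. Linearity in the trading rate reduces the HJB equation to the first-order condition $J_y(t,y)+(\mu-\phi-\lambda_0 y)\equiv 0$, which the ansatz satisfies identically; hence the trader is indifferent among all admissible absolutely continuous strategies sharing the same law of $Y_{1-}$. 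Matching $J(1-,\cdot)$ with the value $(\mu-P_{1-})^2/(4\lambda(1))$ of the optimal terminal bulk trade forces $\lambda(1)=\lambda_0/2$, $k(t)=(\mu-\phi)^2/(2\lambda_0)+\sigma^2\lambda_0(1-t)/2$, and the bulk $\Delta X_1=(\mu-\phi)/\lambda_0-Y_{1-}$, which agree with the theorem.

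Second, I would single out among this continuum of optimal responses the Ornstein--Uhlenbeck strategy with the specific $(a,b_i)$ in the statement, verify admissibility and that the induced $Y^*_i$ solves the stated SDE, and then impose the Bertrand zero-utility condition. As in Theorem~\ref{t:eqMM}, unbounded undercutting forces each symmetric market maker's expected CARA-utility of the terminal gain $G^j_1(X^*_i)$ defined in \eqref{e:gainsj} to equal the reservation value $U(0)=0$. The gain is a quadratic functional of the OU path together with the discrete trade at $t=1$, so its CARA Laplace transform admits a closed form; combining the resulting equation with the constraint $\lambda(1)=\lambda_0/2$ from Step~1 pins down $\lambda^*=\rho\gamma^2/2$ and a quadratic in $\phi-\mu$ whose two symmetric roots yield $\phi^*_i=\mu+(-1)^i\rho\sigma\gamma^2/(2\sqrt{3})$. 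Substituting into $J(0,0)=(\mu-\phi^*_i)^2/(2\lambda^*)+\sigma^2\lambda^*/2$ and simplifying produces the announced profit $\rho\sigma^2\gamma^2/3$.

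Finally, I would rule out profitable deviations along the lines of the discussion preceding Theorem~\ref{t:eqMM}: any undercut $\tilde\lambda<\lambda^*$ on $[0,1)$ pushes $\lambda(1)$ below $\lambda^*/2$, so that absorbing the trader's optimal bulk trade inflicts an unbounded loss on the deviator, while exiting at any stopping time has martingale continuation value and cannot strictly improve upon staying at $\lambda^*$. The principal obstacle is the computation in Step~2: evaluating $\bbE\exp(-\rho G^j_1(X^*_i))$ for $G^j_1$ quadratic in the OU path with a superimposed discrete contribution at $t=1$, and then verifying that the zero-utility equation together with the first-order condition $\lambda(1)=\lambda_0/2$ admit precisely the two symmetric solutions claimed in the theorem.
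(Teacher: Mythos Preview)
Your overall structure matches the paper's: trader optimality (the paper invokes Proposition~\ref{p:ioptimality} rather than an HJB ansatz, but these are equivalent), zero expected utility for the symmetric market maker via an explicit martingale $M_t=\bbE_t[\exp(-\rho G_t)]$, and exclusion of deviations. However, there is a genuine gap in your deviation step, and a secondary issue in how you pin down $\lambda^*$.

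Your claim that ``any undercut $\tilde\lambda<\lambda^*$ on $[0,1)$ pushes $\lambda(1)$ below $\lambda^*/2$, so that absorbing the trader's optimal bulk trade inflicts an unbounded loss'' is false. After undercutting, the market $\lambda(1-)$ equals $\tilde\lambda$, and the deviator may quote $\lambda^j(1)\ge\tilde\lambda/2$ at the close; condition~C1 of Proposition~\ref{p:ioptimality} then holds and the trader's profit is finite. The paper instead handles undercutting by an explicit accounting identity (Remark~\ref{r:fictitiousgain}): the trader's best response to an undercut $(\lambda,\phi)$ that the deviator abandons at $t^*\le 1$ drives $P_{t^*}=\mu$, and the deviator's terminal gain reduces to
\[
G^j_1=-\frac{(\phi-\mu)^2}{2\lambda(0)}+\frac{\sigma^2}{2}\int_0^{t^*}\lambda(s)\,ds+\frac{\mu-\phi}{\lambda(0)}(\mu-V),
\]
which has the same affine-in-$V$ form as the equilibrium gain in \eqref{e:StrUtil}. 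Comparing the two CARA expectations and using $\lambda<\lambda^*$ on $[0,t^*)$ then shows $\bbE[U(G^j_1)]<0$. An analogous comparison via \eqref{e:BGuf} and Remark~\ref{r:DMM1} disposes of the case where the deviator undercuts on all of $[0,1]$. Your argument never reaches this comparison and therefore does not rule out undercutting.

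On the secondary point: the zero-utility equation together with $\lambda(1)=\lambda_0/2$ does not pin down $\lambda^*$; the latter is merely the boundary case of C1, not an independent equation. In the paper $\lambda^*=\rho\gamma^2/2$ is forced by the \emph{martingale} requirement you yourself invoke to rule out early exit. Writing $M_t=\exp\big(c_2(Y^*_t)^2+c_1Y^*_t+c_0(t)\big)$ from the explicit formula for $G_t$ (which gives $c_2=\tfrac{\rho}{2}(\rho\gamma^2-\lambda_0)$), and letting the trader play the constant-coefficient OU strategy $dY^*=\sigma\,dB+\sigma^2(aY^*+b)\,dt$, the drift of $M$ vanishes identically only if $a=-c_2$, $b=-c_1/2$, \emph{and} the constant term forces $\lambda_0=\rho\gamma^2/2$. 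The zero-utility condition $\bbE[M_1]=1$ then yields the quadratic in $\phi-\mu$ with the two roots $\pm\rho\sigma\gamma^2/(2\sqrt3)$.
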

%STRATEGIC TRADER OVERBUYS/OVERSELLS. THE INVENTORY MOVES FROM ONE MEAN REVERSION LEVEL TO THE OTHER ($Y_1=$ - MEAN REVERSION LEVEL)}

%%%%%%%%%%%%%%%%%%%%%%%%%%%%%%%%%%%%%%%%%%%%%%%%%%%%%%%%%%%%%%%%%%%%%%%%%%%%%%%%%%%%%%%
\section{Comparative dynamics} \label{s:cd}
{ In this section we consider the case $V\sim N(0,\gamma)$, i.e. $\mu=0$. Our aim is to   analyse the impact of risk aversion on the market liquidity, in particular {\em depth} and {\em resilience}, {\em price efficiency}, {\em reversal} of the equilibrium prices, {\em insider's profit}, and the value of private information, i.e. the difference between the profits of the insider and a strategic trader with no information. Recall that depth corresponds to the order size necessary to move the price by one unit; the resilience is the speed with which the prices converge to the fundamental value; price efficiency is a measure of remaining uncertainty in the prices; and, finally, the reversal reflects the autocorrelation of returns.

Our comparative analysis will show that there is single quantity that parametrises the above measures of liquidity across markets that differ on the risk aversion of liquidity suppliers, inventory and adverse selection risks.  More precisely, all measures of liquidity when normalised by their Kyle (risk-neutral) counterparts will only depend on {\em the market adjusted risk aversion}  parameter $\rho_M:=\rho\gamma\sigma$.  One way to understand the universality of this parameter is to observe that the utility of liquidity providers from wealth $W$ is $1-\exp(-\rho_M\frac{W}{\gamma\sigma})$. Note that $\frac{W}{\gamma\sigma}$ is dimensionless since the unit of $W$ is $\mbox{time}\times \mbox{currency}$ whereas $\gamma^2$ is in units of $\mbox{time}\times \mbox{currency}^2$ and $\sigma^2$ is in units of $\mbox{time}$.  By making the risk dimensionless we can shift our attention from idiosyncratic features of each market towards the general mechanism underpinning  the market liquidity.  Although in what follows we shall consider all values of $\rho_M$, the interval $[0,1)$ is of particular interest since it encompasses all reasonable values observed in markets\footnote{As dividends are less volatile than asset prices, if we consider risk aversion parameter less than 10 and yearly volatility less than 0.3, we will be in that range.}.

For the equilibria considered in Sections \ref{s:LP} and \ref{s:MM} we will have two benchmark models to compare to. The first model is the one with risk-neutral market makers  as in \cite{Kyle}, in which case the depth of the market is constant and equals $\frac{1}{\lambda_K}:=\frac{\sigma}{\gamma}$, resilience is a function of time and is given by $R(t)=\frac{1}{1-t}$, efficiency equals $\Sigma_K(t):=\gamma^2(1-t)$, and  the ex-ante profit of the insider  is  $\pi_K:=\gamma \sigma$. We will also consider\footnote{The additional normalisation by $\eps$ is an artefact of continuous time models. For $t$ and $u$ close to $s$, the covariance is of order $(t-s)^2$, thus, a further normalisation is needed since the variances are also of order $t-s$.} \be \label{e:momentum}
M(s):=\lim_{\substack{t-s=\eps \\u-s=\eps\\ \eps \rar 0}}\frac{\mbox{Cov}_s\left(P_t-P_s, P_u-P_t\right)}{\sqrt{\mbox{Var}_s(P_t-P_s)\mbox{Var}_s(P_u-P_t)}\eps},
  \ee
where $\mbox{Cov}_s$ and $\mbox{Var}_s$ are covariance and variance given market's information at time $s$. If $M$ is positive, it indicates a price momentum whereas a negative value implies price reversal. In Kyle's model $M$  is equal to $0$ at all times.  

We will also compare our models to the benchmark models with strategic trader  described in the previous section. As the strategic trader does not possess any private information, the equilibrium prices do not carry any information about the fundamental value, and therefore the resilience is undefined and efficiency is constant, $\gamma^2$. However, we can calculate the market depth and the profit of the strategic trader.  According to Theorems  \ref{t:benchLP} and \ref{t:benchMM}  the market depth is $\frac{1}{\lambda_0}:=\frac{2}{\rho \gamma^2}$ in both  models.  While the  profit of the strategic trader equals $\frac{\rho\gamma^2\sigma^2}{4}$ in the market with competitive agents, her profit becomes $\frac{\rho \gamma^2 \sigma^2}{3}$ when she trades against market makers.  Thus, it is evident that given a choice the strategic trader would choose  market  makers to execute their orders. 
% Moreover, since the total order follows an Ornstein-Uhlenbeck process,
%\be \label{eq:Yst}
%dY_t=\sigma dB_t -\lambda^{0}\mu  Y_t dt,
% \ee
% with $\lambda^{0}=\frac{\rho}{4 N}$ and $\mu=\frac{\sigma^2 \rho}{2 N}$, we have for $s <t <u$
%\begin{align*}
%\mbox{Cov}\left(\lambda^{0} Y_t-\lambda^{0} Y_s, \lambda^{0} Y_u-\lambda^{0} Y_t\right)=&-\frac{ \lambda^{0}}{\rho^0}\left\{1-e^{- \lambda^{0}\mu (u-t)}\right\}\left\{1- e^{-\lambda^{0}\mu (t-s)}\right\} \\
%&+\frac{ \lambda^{0}}{\rho^0}e^{-2 \lambda^{0}\mu t}\left\{1-e^{- \lambda^{0}\mu (u-t)}\right\}\left\{1- e^{\lambda^{0}\mu (t-s)}\right\}, \\
%\mbox{ and } \; \mbox{Var}(\lambda^{0} Y_t-\lambda^{0} Y_s)=&\frac{\lambda^{0}}{\rho^0}\left[\left\{1- e^{- \lambda^{0}\mu (t-s)}
%\right\}^2\left(1-e^{-2 \lambda^{0}\mu s}\right)+ 1- e^{- 2\lambda^{0}\mu (t-s)}\right],
%\end{align*}
%where $\rho^0:=\frac{\rho}{N}$, we obtain in the limit
%\be \label{eq:Mst}
%M(s)=-\frac{\lambda^0 \mu}{2}\left(1+\exp\left(-\lambda^0\sigma^2 \rho^0s\right)\right).
%\ee

To analyse the relative impact of inventory and private information on price formation, we need to measure the change in price impact  as a result of private information.  Since the equilibrium market depth does not depend on the market type,  we shall use the results  of Theorems \ref{t:eqMM} and  \ref{t:benchMM} to conduct our analysis.  We denote by $\lambda$ and $\lambda_0$ the  inverse market depths in case of insider and  strategic (and uninformed) trader correspondingly. It is evident that $\lambda>\lambda_0$. Thus, the market depth decreases when the strategic trader gains access to private information. This decrease is  due to asymmetric information and we will use it as a measure of the component of the  equilibrium price affected by  adverse selection. Direct calculations show that the adverse selection component, $\lambda -\lambda_0=\sqrt{\frac{\rho^2\gamma^4}{4}+\frac{\gamma^2}{\sigma^2}}$ is  increasing in risk aversion and decreasing in noise volatility. This is intuitive since the liquidity providers require higher compensation for facing adverse selection as their risk aversion increases. Moreover, higher noise volatility makes total demand less informative to the liquidity providers which decreases the informational component of the price adjustment as well as its proportion. Naturally, one expects that the cost of holding an inventory increases in risk aversion, even in the absence of private information, thus the component of the price determined by purely inventory considerations is increasing in risk aversion, too.   In Figure \ref{fig:advcomp} we plot  $(\lambda -\lambda_0)/\lambda$ (proportion of price due to adverse selection) as a function of market adjusted risk aversion. As one can observe,  the increase in the cost of inventory holding is strong enough to reduce the proportion of price affected by  private information. However, this inventory component never dominates, and accounts for a half of the price differential at most.

\begin{figure}[h]
\centering
\includegraphics[height=6cm, width=11cm]{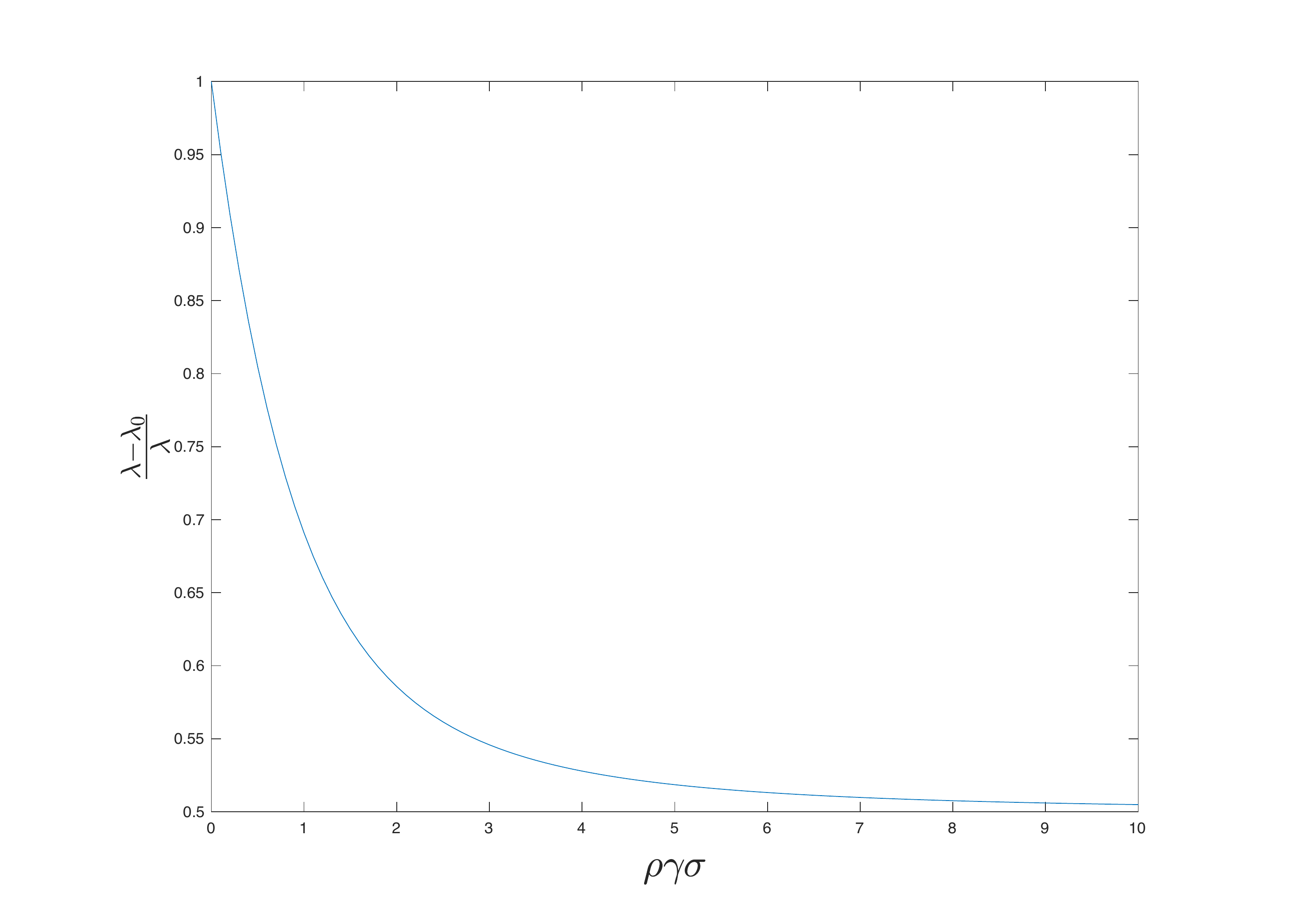}
%\begin{center}
%\scalebox{0.4}{\includegraphics{rel_impact.pdf}}
%\end{center}
\caption{The figure shows the dependency of the adverse selection proportion, which is defined to be  $(\lambda- \lambda_0)/\lambda$, of the equilibrium price on the market adjusted risk aversion. }
\label{fig:advcomp}
\end{figure}

\begin{table}[tb]
	\caption{Market depth}
	\label{T:depth}\vspace{1mm}
	\par
	\begin{center}
		
		\bgroup
		\def\arraystretch{1.25}
		
		\begin{tabular}{ccc}
			& Competitive Agents & Market Makers  \\
			\hline
			Insider trader & $\frac{1}{\lambda_K}\frac{1}{\frac{\rho_M}{2}+\sqrt{\frac{\rho_M^2}{4}+1}}$ & $\frac{1}{\lambda_K}\frac{1}{\frac{\rho_M}{2}+\sqrt{\frac{\rho_M^2}{4}+1}}$\\
			\hline
			Strategic trader & $\frac{2}{\rho \gamma^2}=\frac{1}{\lambda_K}\frac{2}{\rho_M}$	&  $\frac{2}{\rho \gamma^2}=\frac{1}{\lambda_K}\frac{2}{\rho_M}$   \\
			\hline
		\end{tabular}
		
		\egroup
		
	\end{center}
\end{table}

As observed above, in  both equilibria with an insider  the market depth  is the same (for convenience we report the market depth in all four equilibria in Table \ref{T:depth}) and smaller than $\frac{1}{\lambda_K}$. This  implies that the depth of the market with risk averse liquidity suppliers is smaller than the depth observed in \cite{Kyle}.   The reason for the observed decrease in the market depth is the risk sharing between the liquidity suppliers and the insider as the risk aversion makes the liquidity providers require additional compensation for large quantities of risk. This effect gets more pronounced as  the risk aversion coefficeint $\rho$ increases and  the market approaches to complete illiquidity when the risk aversion rate gets high.   One can also show that the market depth increases with the volatility of noise trading since the total order process contains less information and therefore the price is less sensitive to it. 

On the other hand, if we compare the loss in liquidity due to risk aversion, that is, the depth with risk averse liquidity providers normalised by the one in \cite{Kyle},  we observe  in Figure \ref{fig:depth1} that the loss in liquidity become larger as the market adjusted risk aversion parameter increases, i.e. the market conditions get riskier from the point of view of liquidity providers.  The Figure \ref{fig:depth1}  emphasises that to study  the market liquidity behaviour we must focus on the collective impact of market parameters via $\rho_M$ rather than their effect in isolation.  For instance, by keeping the risk aversion $\rho$ and $\frac{\gamma}{\sigma}$ constant, we can obtain strikingly different market behaviour from complete illiquidity when $\gamma$ and $\sigma$ are large to liquidity levels of risk-neutral liquidity suppliers for small $\gamma$ and $\sigma$. 

\begin{figure}[h]
\centering
\includegraphics[height=6cm, width=10cm]{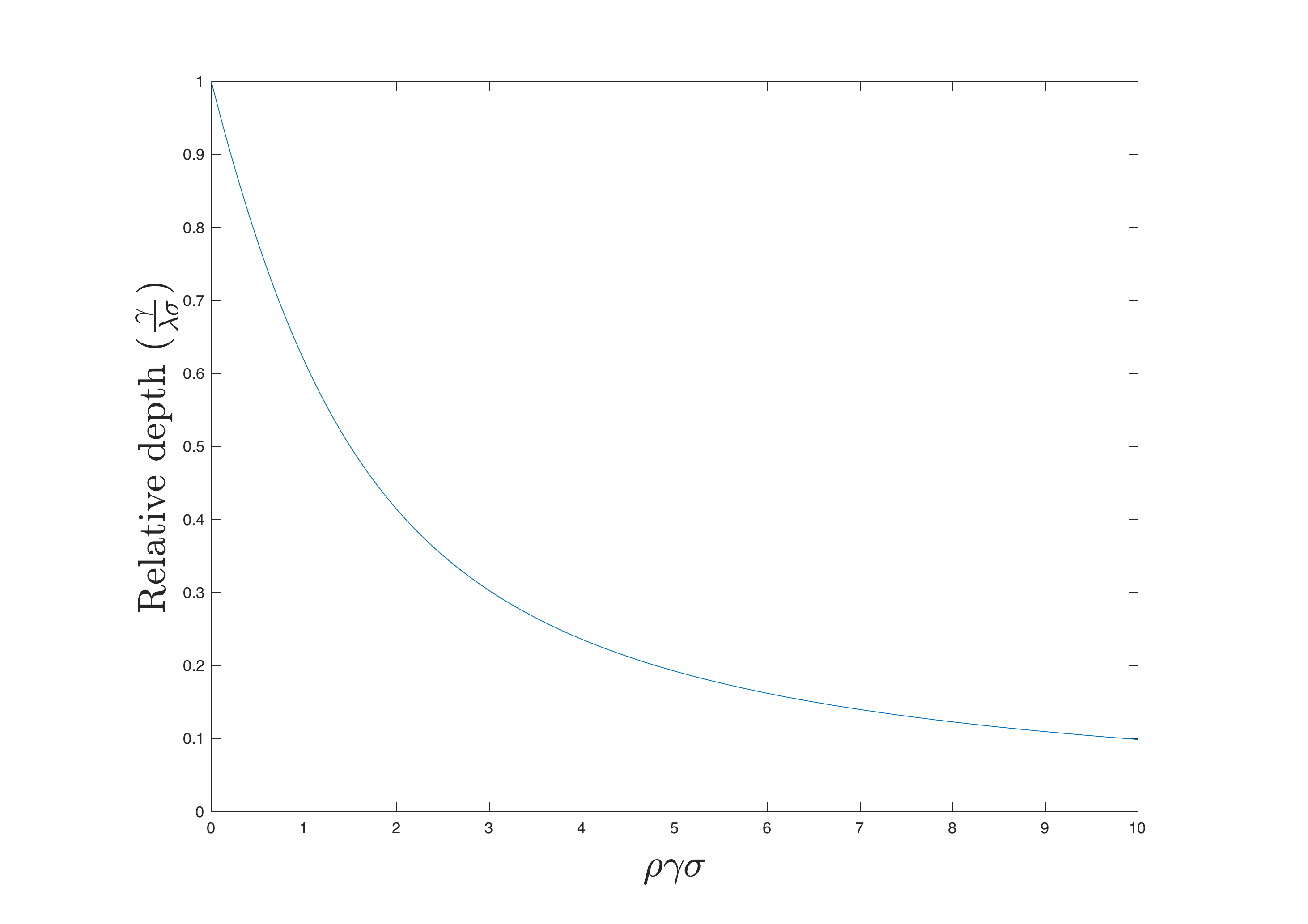}
\caption{The figure depicts how the relative depth, i.e. $\frac{\gamma}{\lambda \sigma}$ depends on $\rho_M$.}
\label{fig:depth1}
\end{figure}

It is evident from price dynamics (\ref{e:priceLP}) and (\ref{e:priceMMi}) that in both equilibria convergence to the fundamental value occurs at rate $\frac{1}{1-t}$, which is the same level of resilience as  \cite{Kyle}. Thus, we can conclude that the resilience is unaffected by risk aversion or market design. 

Efficiency is a measurement of how informative the market prices are. Using  Lemma \ref{l:martingaleU}, we get
\begin{align*}
\Sigma(t)&=\mbox{Var}(P_1|\cF^M_t)=\mbox{Var}(\lambda Y_1|Y_t)=\frac{\lambda^2\sigma^2(1-t)}{1+\lambda\rho\sigma^2(1-t)}\\
&=\Sigma_K(t)\frac{2+\rho_M(1+\sqrt{\rho_M^2+4})}{2+\rho_M(1+\sqrt{\rho_M^2+4})(1-t)}
\end{align*}
in both market makers and competitive agents equilibrium. 
 
Direct differentiation show that $\Sigma$ is decreasing in $t$, which reflects the fact that prices become more informative as time progresses as a result of the insider's effort to drive prices to the fundamental value. Moreover, it is clear   that $\Sigma(t) \geq \Sigma_K(t)$, i.e. the prices become less efficient  when the  liquidity providers become risk averse.  This behaviour is not completely driven by $\rho$: as Figure \ref{fig:efficiency} illustrates  the loss of efficiency is monotone in $\rho_M$. In particular, as $\rho_M \rar \infty$, $\Sigma$ behaves like $\gamma^2$, independent of $t$, indicating a complete loss of efficiency.

The monotonicity in $\rho_M$ is to be expected as the insider has to participate in risk sharing to make the equilibrium possible as the market gets riskier for the liquidity providers. She does so by forcing the total demand to mean revert around a time varying level which provides  sufficient level of risk sharing. More precisely, the total demand dynamics in both equilibria is given by
\be \label{e:demandparm}
dY_t= \sigma d\beta_t + \sigma^2 a(t)\left(Y_t +\frac{b(t)}{a(t)}\right)dt,
\ee
where 
\[
\sigma^2 a(s)=-\frac{2}{1 +\sqrt{1 +\frac{4}{\rho_M^2}}-2s},
\]
 $b=0$ for the competitive agents equilibrium or $b_i(t)$ in the market makers one.  

 It is evident that the level of mean reversion is determined by the liquidity suppliers, and thus carry no information about the fundamental value.   This results in a slower convergence to the fundamental value, which in turn implies lower efficiency. When the market adjusted risk aversion parameter increases, liquidity suppliers demand more risk sharing from the insider and, therefore, stronger mean reversion, which manifests itself in $|a|$ being increasing in $\rho_M$. As a result, the total order stays closer to the long run mean, which makes the speed of convergence to the fundamental value  decrease with $\rho_M$, i.e.  the market becomes less efficient when market becomes riskier for the  liquidity suppliers.  
 
This loss of efficiency and the associated off-load of the risk by the liquidity providers becomes particularly striking  as $\rho_M$ increases to infinity. In this case the scaled equilibrium demand, $\tilde{Y}:=\frac{Y}{\sigma}$, converges to the solution of 
  \[
  d\tilde{Y}_t = dB_t -\frac{\tilde{Y}_t}{1-t}dt
  \]
 since $\sigma b_i$ vanishes in the limit. As the solution to the above equation is a standard Brownian bridge with $\tilde{Y}_1=0$, this implies that, after we control for the noise volatility, {\it all} inventory risk is shifted from the liquidity providers to the insider and {\it all} asymmetric information risk is shifted to the noise traders.
 
This off-load of liquidity risk by liquidity suppliers also features in both strategic trader equilibria. Indeed, in the model with market makers simple calculations reveal that
\[
d\tilde{Y}_t=dB_t-\frac{\rho_M^2}{4}(\tilde{Y}_t \pm \frac{1}{\sqrt{3}})dt,
\]
the solution of which is $\tilde{Y}_1= \pm \frac{1}{\sqrt{3}} +e^{-\frac{\rho_M^2}{4}}\left(\pm \frac{1}{\sqrt{3}}+\int_0^1e^{\frac{\rho_M^2s}{4}}dB_s\right)$. As the market adjusted risk aversion parameter gets larger $\tilde{Y}_1\rar \pm \frac{1}{\sqrt{3}} $ -- a level of inventory that provides zero expected utility. 

On the other hand, if the liquidity is supplied by competitive agents,  the limiting dynamics of $\tilde{Y}$  as $\rho_M \rar \infty$ is given by
\ \[
d\tilde{Y}_t = dB_t -\frac{\tilde{Y}_t}{1-t}dt
\]
when the strategic trader is uninformed. Thus, once again, we get $\tilde{Y}_1=0$ and the competitive agents carry no liquidity risk in the limit.

As higher volatility of the noise trades implies that liquidity suppliers face higher inventory risk and therefore require more risk sharing which, in turn, slows down price discovery, one would expect that the market efficiency is affected by $\sigma$. The above discussion indeed supports this observation as  $\rho_M$ is proportional to $\sigma$.  This is in stark contrast with the risk neutral case, where the efficiency is independent of the volatility of the noise trading as the inventory risk is not priced in.

\begin{figure}[h]
	\centering
	\includegraphics[height=6cm, width=10cm]{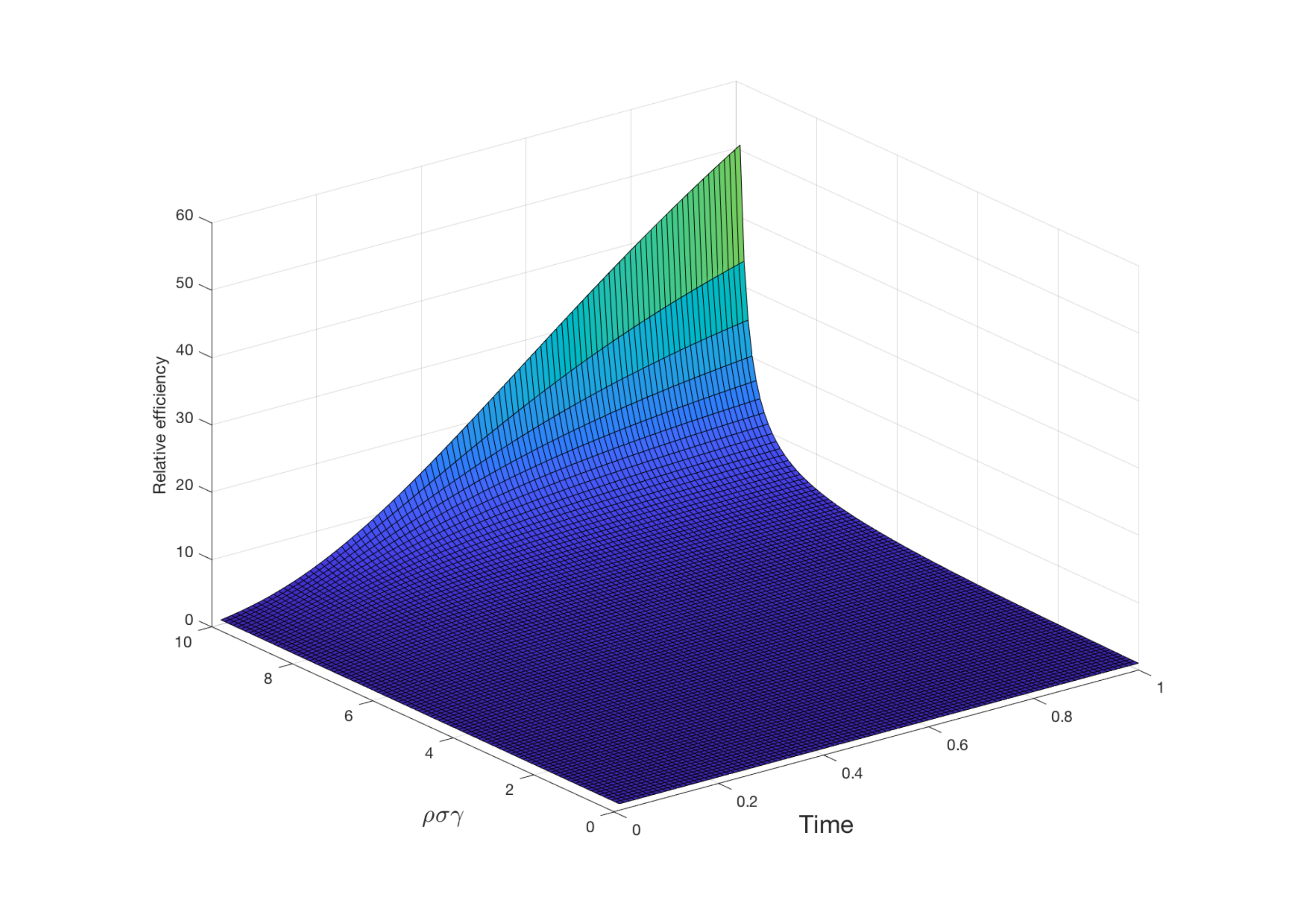}
\caption{Relative efficiency is a measurement for residual uncertainty.  The above shows it as a function of time and market adjusted risk aversion $\rho_M$. }
	\label{fig:efficiency}
\end{figure}

When the liquidity suppliers become risk averse, the prices exhibit  reversal.  That is, the large buy orders are usually followed by sell orders and vice versa. This is apparent from the evolution of equilibrium demand given in (\ref{e:demandparm}). To compute the intensity of reversal, observe that in all equilibria $P$ is of the form $P=\lambda Y +\phi$ and therefore
\bean
M(s)&:=&\lim_{\substack{t-s=\eps \\u-s=\eps\\ \eps \rar 0}}\frac{\mbox{Cov}_s\left(Y_t-Y_s, Y_u-Y_t\right)}{\sqrt{\mbox{Var}_s(Y_t-Y_s)\mbox{Var}_s(Y_u-Y_t)}\eps}\\
&=&\lim_{\substack{t-s=\eps \\u-s=\eps\\ \eps \rar 0}}\frac{\bbE_s\left((Z_t-Z_s)(Z_u-Z_t)\right)}{\sqrt{\mbox{Var}_s(Z_t-Z_s)\mbox{Var}_s(Z_u-Z_t)}\eps},
\eean
where $Z_t= Y_t-\bbE_s[Y_t]$. Direct calculations yield
%\[
%M(s)=-\frac{\lambda \rho \sigma^2}{1+\lambda \rho \sigma^2}=-\frac{\rho}{\lambda}=-\frac{2}{1 +\sqrt{1 +\frac{4}{\rho^2 \sigma^2}}}.
%\]
\[
M(s)=\sigma^2 a(s)=-\frac{1}{\frac{1 +\sqrt{1 +\frac{4}{\rho_M^2}}}{2}-s}
\]
in both markets when an insider is present.  Similar calculations show that $M(s)=-\frac{\rho_M^2}{4}$, thus, the reversal is constant in time   in the model with market makers and uninformed strategic trader. Finally, when the uninformed strategic trader trades against the competitive agent, we have
\[
M(s)=-\frac{\rho_M^2}{4+\rho_M^2(1-s)},
\]
i.e. prices exhibit reversal  in all models that we consider. 
For the convenience of the reader the intensity of price reversal, i.e.  $-M(s)$, as a function of $\rho_M$ and time for all four models is summarised in Table \ref{T:momentum} and is plotted in Figures \ref{fig:momentumLP} and  \ref{fig:momentumMM}.

\begin{table}[tb]
	\caption{Intensity of price reversal}
	\label{T:momentum}\vspace{1mm}
	\par
	\begin{center}
		
		\bgroup
		\def\arraystretch{1.25}
		
		\begin{tabular}{ccc}
			& Competitive Agents & Market Makers  \\
			\hline
			Insider trader & $\frac{2}{1 +\sqrt{1 +\frac{4}{\rho_M^2}}-2s}$& $\frac{2}{1 +\sqrt{1 +\frac{4}{\rho_M^2}}-2s}$\\
			\hline
			Strategic trader & $\frac{\rho_M^2}{4+\rho_M^2(1-s)}$	&  $\frac{\rho_M^2}{4}$   \\
			\hline
		\end{tabular}
		
		\egroup
		
	\end{center}
\end{table}

It is evident from Figures \ref{fig:momentumLP} and  \ref{fig:momentumMM} that the intensity of price reversal is an increasing function of market adjusted risk aversion parameter $\rho_M$.  The above discussion of market efficiency implies that this monotonicity is due to the risk sharing. Indeed, as observed above, risk sharing is responsible for stronger mean reversion in the demand as perceived risk faced by the liquidity providers grows, which in turn  implies that reversal should be stronger as market adjusted risk aversion increases. 

 Moreover, the intensity  of price reversal is non-decreasing in time.  This seems to be in conflict with  the risk-sharing idea between the insider and the liquidity suppliers since the remaining risk at time $0$ is higher.  However, taking into account that the liquidity suppliers derive their utility only from their consumption at time $1$, it becomes apparent that they place more emphasis on their inventory being close to the target at time $1$. Thus, risk-sharing  implies  stronger reversal at time $1$ as we observe from all models.  
 
 Although the price reversal seems to be affected by the nature of liquidity provision when the strategic trader is uninformed, if we focus on the dynamics close to market termination, we see that it does not depend on the type of the provision.  For  both types of liquidity suppliers the reversal  is stronger in the presence of insider due to the additional adverse selection risk and the resulting higher demand for risk sharing.
\begin{figure}[h]
	\begin{center}
		$
		\ba{cc}
		\scalebox{0.26}{\includegraphics{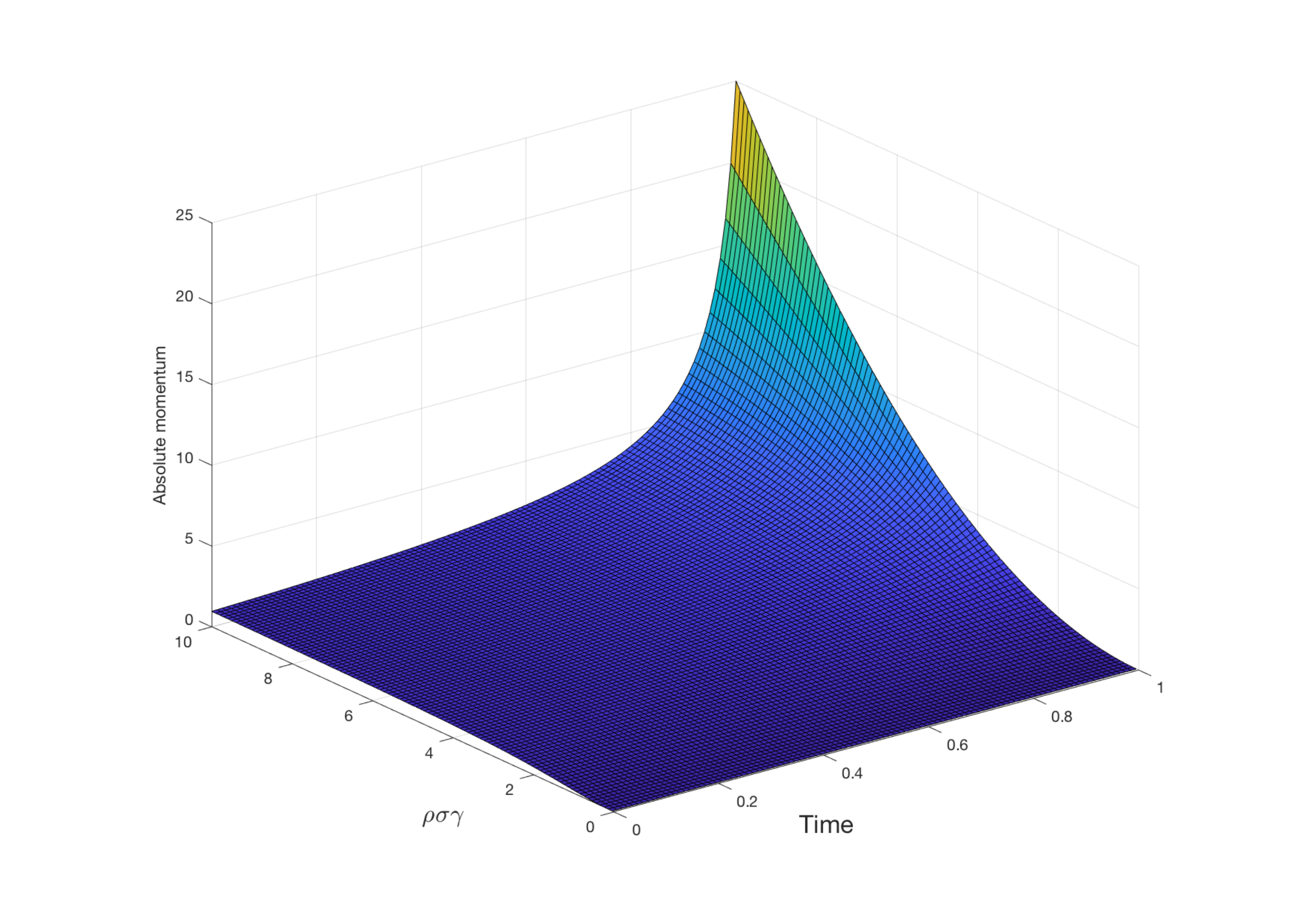}} & \scalebox{0.26}{\includegraphics{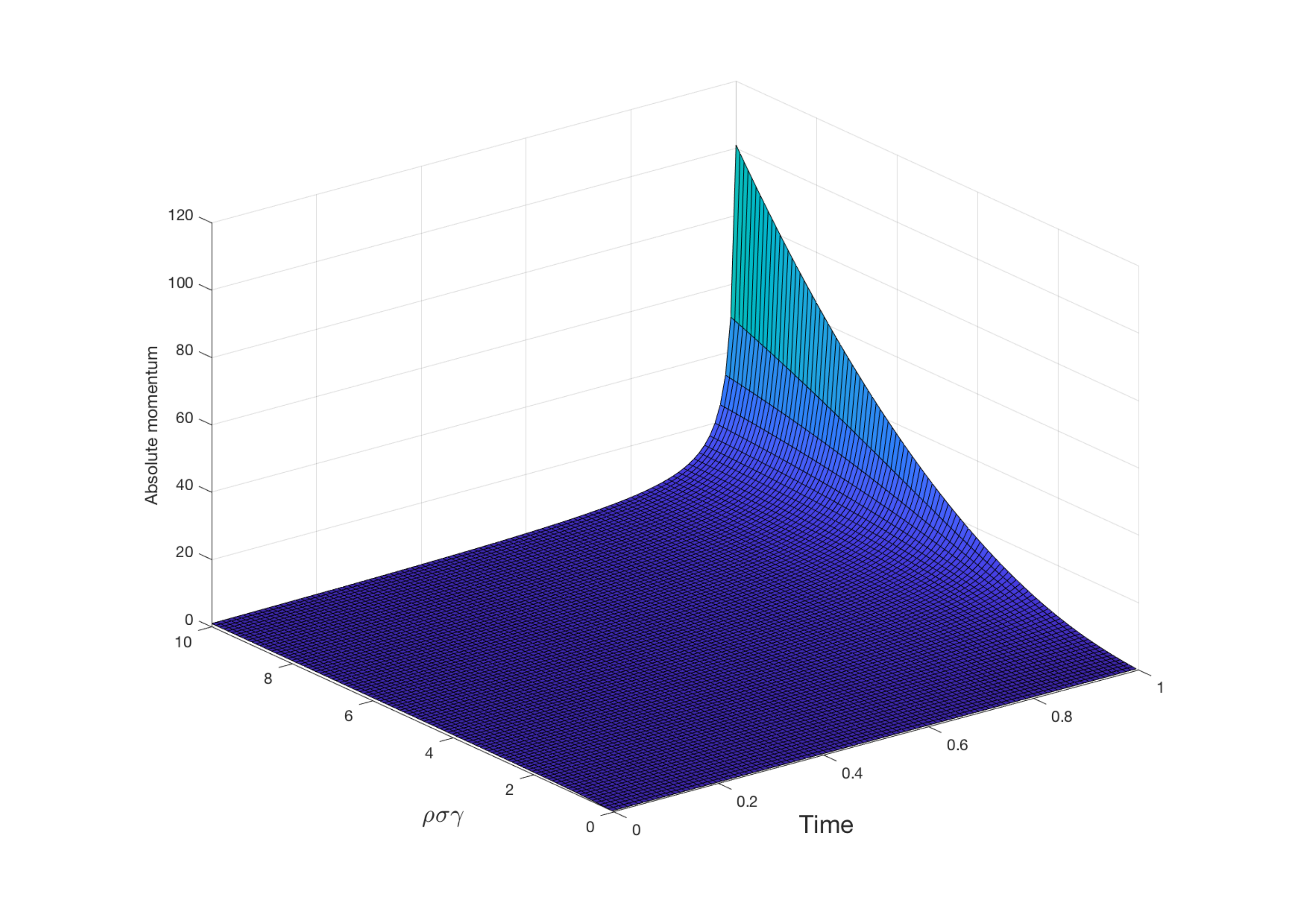}}
		\ea
		$
	\end{center}
	\caption{Intensity of price reversal in the market with competitive agents as a function of time and $\rho_M$. The left figure depicts price reversal in the presence of  strategic trader  and the right one in the   presence of insider. }
	\label{fig:momentumLP}
\end{figure}
\begin{figure}[h]
\begin{center}
$
\ba{cc}
\scalebox{0.26}{\includegraphics{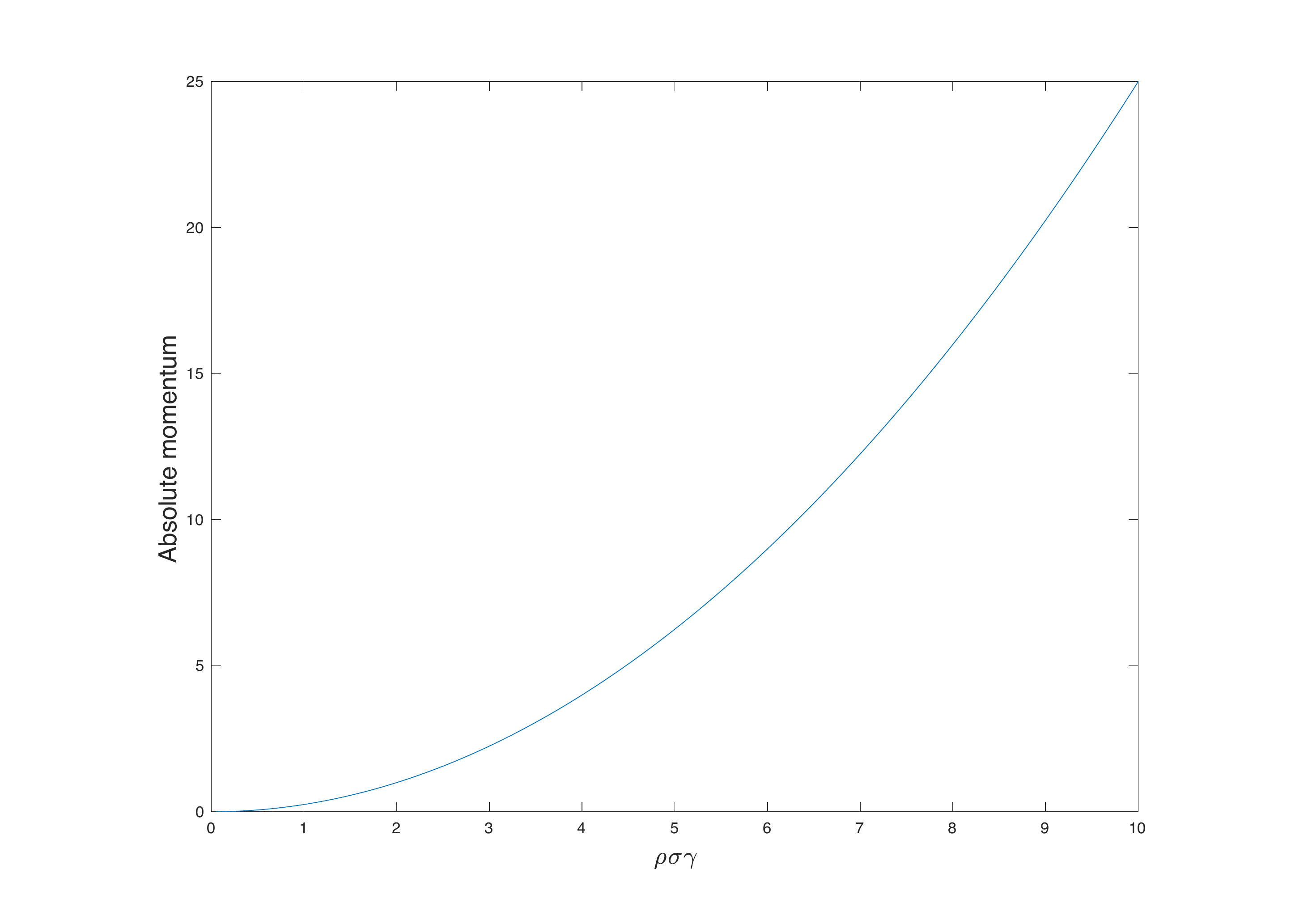}} & \scalebox{0.26}{\includegraphics{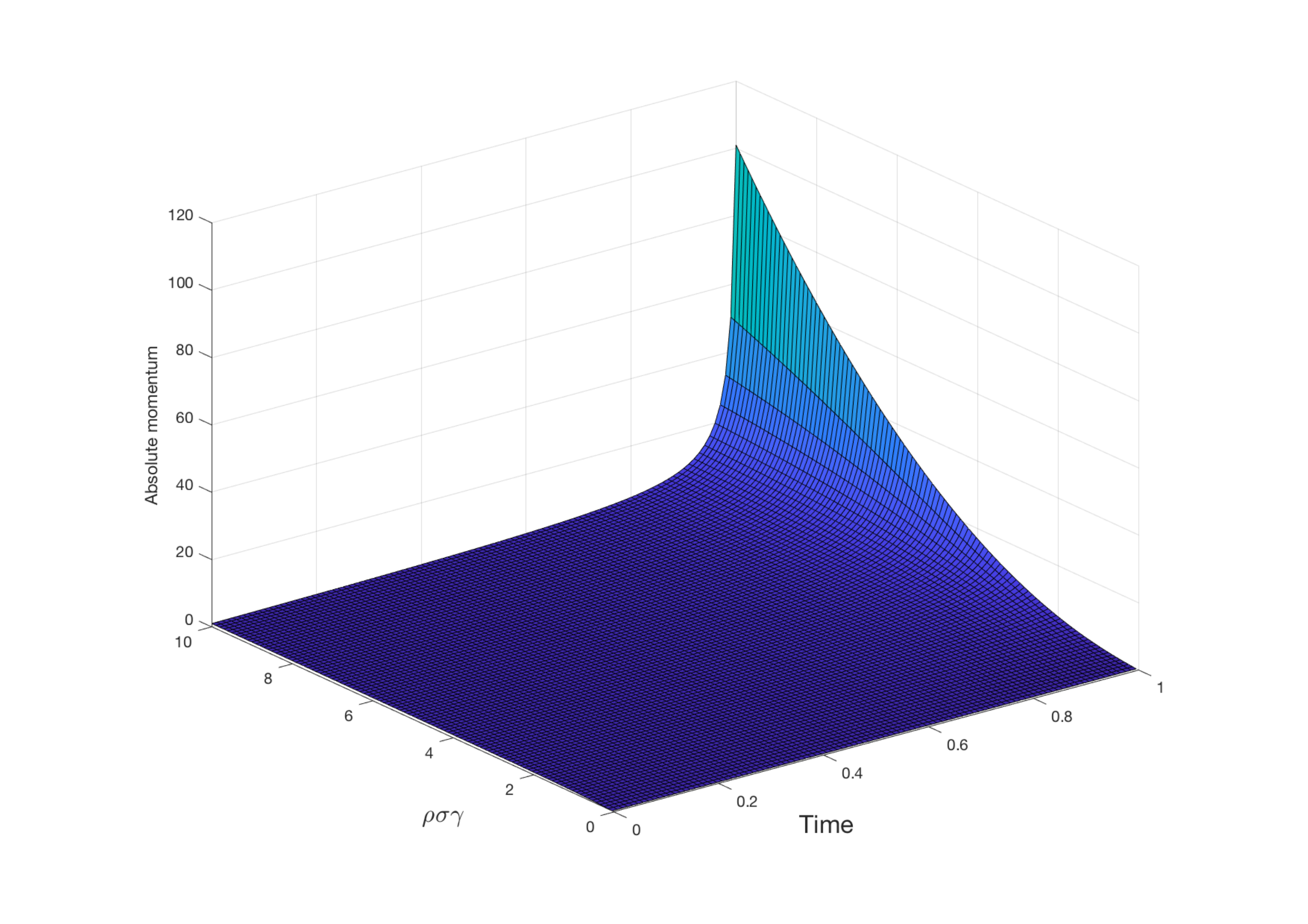}}
\ea
$
\end{center}
\caption{Intensity of price reversal in the market with market makers as a function of time and $\rho_M$. The left figure depicts price reversal in the presence of  strategic trader  and the left one in the   presence of insider. }
\label{fig:momentumMM}
\end{figure}

The last market characteristic we consider is the ex-ante profit of the insider. The expected profits of the insider, conditional on the value of $V$,  and the ex-ante profits of uninformed strategic trader  are summarised in Table \ref{T:profit}. This table also reports  the value of information in both models of liquidity provision normalized by the corresponding value in the Kyle model. 

 Observe that in both markets the profit of the insider can be expressed as 
\be \label{e:xpostIp}
\frac{1}{2\lambda }\left(P_0-V\right)^2+\frac{\sigma^2}{2}\lambda=\frac{1}{2\lambda }\left(\Delta^{P_0}-\Delta^V\right)^2+\frac{\sigma^2}{2}\lambda,
\ee
where $\Delta^{P_0}=P_0-\mu$ and $\Delta^V=V-\mu$. Since $\bbE[\Delta^V]=0$, the ex-ante profit of the insider is given by
\be \label{e:xanteIp}
\frac{1}{2\lambda }\left(P_0-\mu\right)^2+\frac{\gamma^2}{2\lambda} + \frac{\sigma^2}{2}\lambda.
\ee
%\begin{table}[tb]
%\caption{Expected profit of insider/strategic trader}
%\label{T:profit}\vspace{1mm}
%\par
%\begin{center}
%
%\bgroup
%\def\arraystretch{1.25}
%{\color{blue}
%\begin{tabular}{cccc}
%  & Competitive Agents & Market Makers & Additional value \\
%\hline
%Insider trader & $	\frac{\gamma^2}{2\lambda} +	\frac{\lambda \sigma^2}{2}$	& $\frac{\gamma^2}{\lambda}\left(1+\frac{1}{\lambda \rho\sigma^2}\log\left(\frac{\gamma}{\lambda \sigma}\right) \right)+\frac{\lambda \sigma^2}{2}$ & $\frac{1}{2\rho}\Delta\left(\frac{\gamma}{\lambda \sigma}\right)$\\
%\hline
%Strategic trader & $\frac{\rho\sigma^2\gamma^2}{4}$&  $\frac{\rho\sigma^2\gamma^2}{3}$&  $\frac{\rho\sigma^2\gamma^2}{12}$  \\
%\hline
%Normalised \\value of information & $\sqrt{1+\frac{\rho_M^2}{4}}-\frac{\rho_M}{4}$ & $\frac{u(\rho_M)\log u(\rho_M)+\frac{1+7 u(\rho_M)-8u^2(\rho_M)}{3 u(\rho_M)}}{2\rho_M}$& \\
%\hline
%\end{tabular}
%}
%\egroup
%
%\end{center}
%\par
%\begin{spacing}{1.0}
%\footnotesize In above $\lambda=\frac{\rho\gamma^2}{2}+\sqrt{\frac{\rho^2\gamma^4}{4}+\frac{\gamma^2}{\sigma^2}}$ and $u(x):=1- \frac{2}{1+\sqrt{1+\frac{4}{x^2}}}$. $\Delta(x)=2x^2\log x -x^2 +1$. 
%\end{spacing}\vspace{2mm}
%\end{table}
\begin{table}[tb]
	\caption{Expected profit of insider/strategic trader}
	\label{T:profit}\vspace{1mm}
	\par
	\begin{center}
		
		\bgroup
		\def\arraystretch{1.25}
		{
			\begin{tabular}{cccc}
				& Competitive Agents & Market Makers & Additional value \\
				\hline
				Insider trader & $\gamma \sigma \sqrt{1+\frac{\rho_M^2}{4}}$	& $\gamma \sigma \pi (\rho_M)$ & $\gamma \sigma \Delta(\rho_M)$\\
				\hline
				Strategic trader & $\gamma\sigma\frac{\rho_M}{4}$&  $\gamma\sigma\frac{\rho_M}{3}$&  $\gamma\sigma\frac{\rho_M}{12}$  \\
				\hline
				Normalized value of information & $\sqrt{1+\frac{\rho_M^2}{4}}-\frac{\rho_M}{4}$ & $v(\rho_M)$& \\
				\hline
			\end{tabular}
		}
		\egroup
		
	\end{center}
	\par
	\begin{spacing}{1.0}
			\footnotesize In above $\Delta(x):=\frac{\Delta_0(u(x))}{2x}, \,\Delta_0(x):=x\log x -x +1, \, u(x):=1- \frac{2}{1+\sqrt{1+\frac{4}{x^2}}}$, $v(x):=\frac{u(x)\log u(x)+\frac{1+7 u(x)-8u^2(x)}{3 u(x)}}{2x}$, and $\pi(x):=\frac{x}{3}+v(x)$. 
	\end{spacing}\vspace{2mm}
\end{table}
This profit consists of three parts,  first of which disappears when there is no mispricing by the liquidity providers, i.e. when $P_0=\bbE[V]$. We will identify this part of the profit as the gains due to the initial mispricing by the market makers resulting from their competition.  To understand the role of the second component, i.e. $\frac{\gamma^2}{2\lambda }$, let's first consider the market with competitive agents. In this case, there is no mispricing and the anticipated profit (\ref{e:xpostIp}) upon receiving the signal that corresponds to this part of the ex-ante profit is given by $\frac{(\Delta^V)^2}{2\lambda}$. When $V=0$, i.e. the time-$0$ expected value  of the asset by the competitive agents and the insider coincide, this part of the anticipated profit vanishes, justifying the identification of the expected value of this part as {\em purely informational profit}. Similarly, in the market with market makers the term $\frac{\gamma^2}{2\lambda}$ is identified as purely informational profit, too.

The profit of the strategic trader in market makers' equilibrium,
\be \label{e:xprftstr}
\frac{1}{2 \lambda_0}(P_0-\mu)^2+ \frac{\lambda_0\sigma^2}{2},
\ee
 has two components with the first one being due to the mispricing only and the second one due to providing insurance to the market makers. Thus, comparison of the latter to the third component of the insider's ex-ante profit reveals that the third part of the insider's profit is generated by providing insurance to the risk averse liquidity suppliers.  The difference between the ex-ante profit of the insider and that of the strategic trader is the value of information. This value of information normalized by its Kyle counterpart, i.e. $\gamma\sigma$, is reported in Table \ref{T:profit} and depicted in Figure \ref{fig:profit}. 
 
It is apparent from Figure \ref{fig:profit} that  for reasonable values of market adjusted risk aversion parameter (i.e. $\rho_M\in [0,1)$), the value of information is smaller when the liquidity providers are risk averse. On the other hand, for  large values of $\rho_M$ the value of information in a market with risk averse liquidity providers becomes much higher then that  in the Kyle's model, with the ratio going to infinity in the limit. Moreover, further examination of the Figure \ref{fig:profit} and Table \ref{T:profit} reveals\footnote{Note that normalization doesn't involve $\rho$ and therefore doesn't change the shape of the value of information as a function of $\rho$.} that there is a value of the risk aversion parameter $\rho$ that minimizes\footnote{This value will be outside the reasonable range for risk aversion as it is larger than 10 (20) for competitive agents (market makers) equilibrium.} the value of information and, therefore, reduces the incentive for the insider to participate in the market.

Figure \ref{fig:profit} also reveals that  although the normalized value of information is  decreasing within the reasonable range of market adjusted risk aversion, it is non-monotone in general . The reason for that is the fact that the purely informational profit and the excess insurance profit  react in opposite directions to changes in $\rho_M$.  Indeed, as the risk aversion increases the market makers are willing to pay more for an insurance against large fluctuations in the demand. Moreover, in the presence of adverse selection this effect is magnified, which entails that $\lambda$ is increasing faster than $\lambda_0$ as is evident in Table \ref{T:depth}. On the other hand, the purely informational component is decreasing with an increase in the market adjusted risk aversion since, in the presence of strong mean reversion, it becomes more costly for the insider to drive the price to its fundamental value.
\begin{figure}[h]
	\begin{center}
		$
		\ba{cc}
		\scalebox{0.27}{\includegraphics{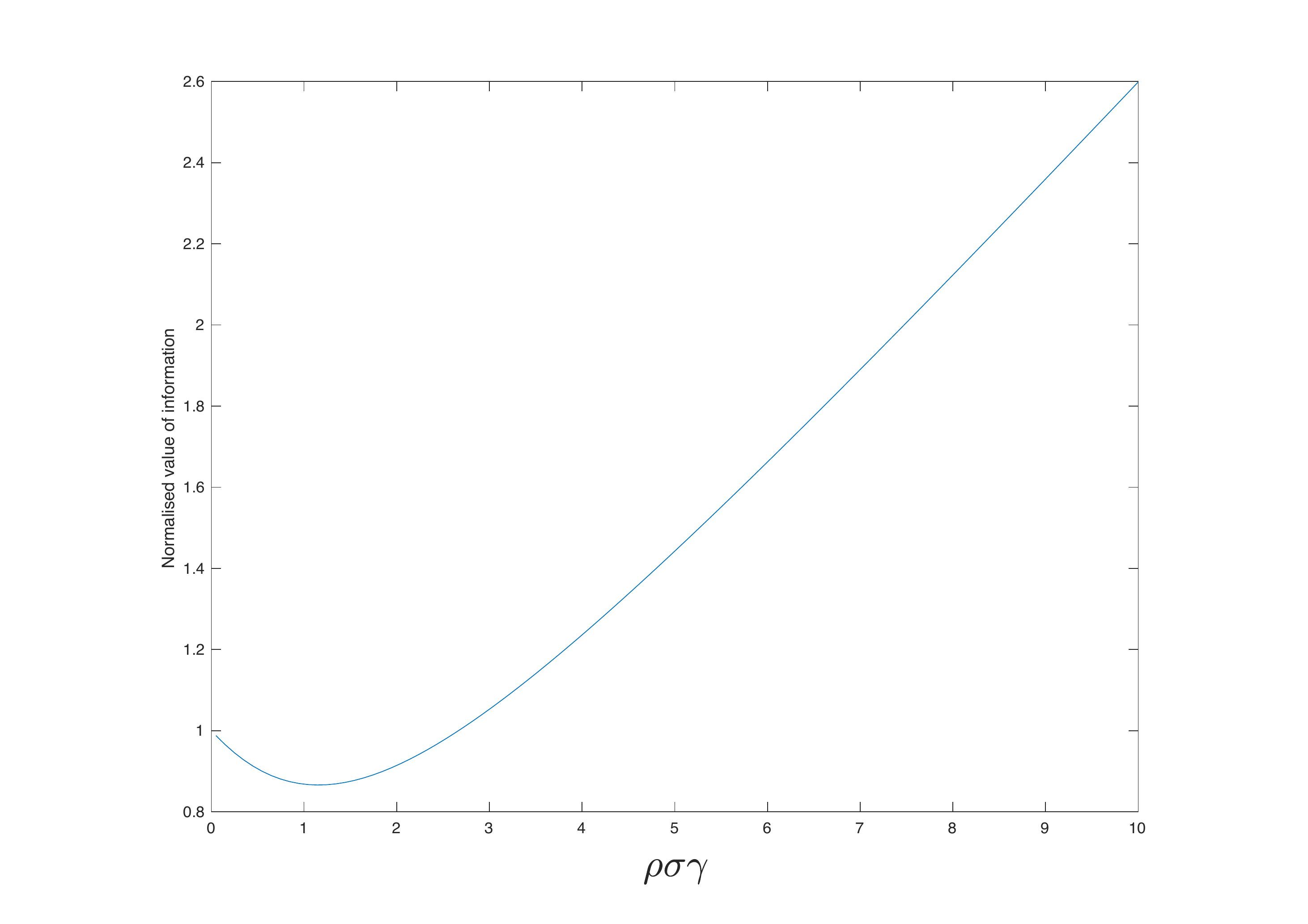}} &\scalebox{0.27}{\includegraphics{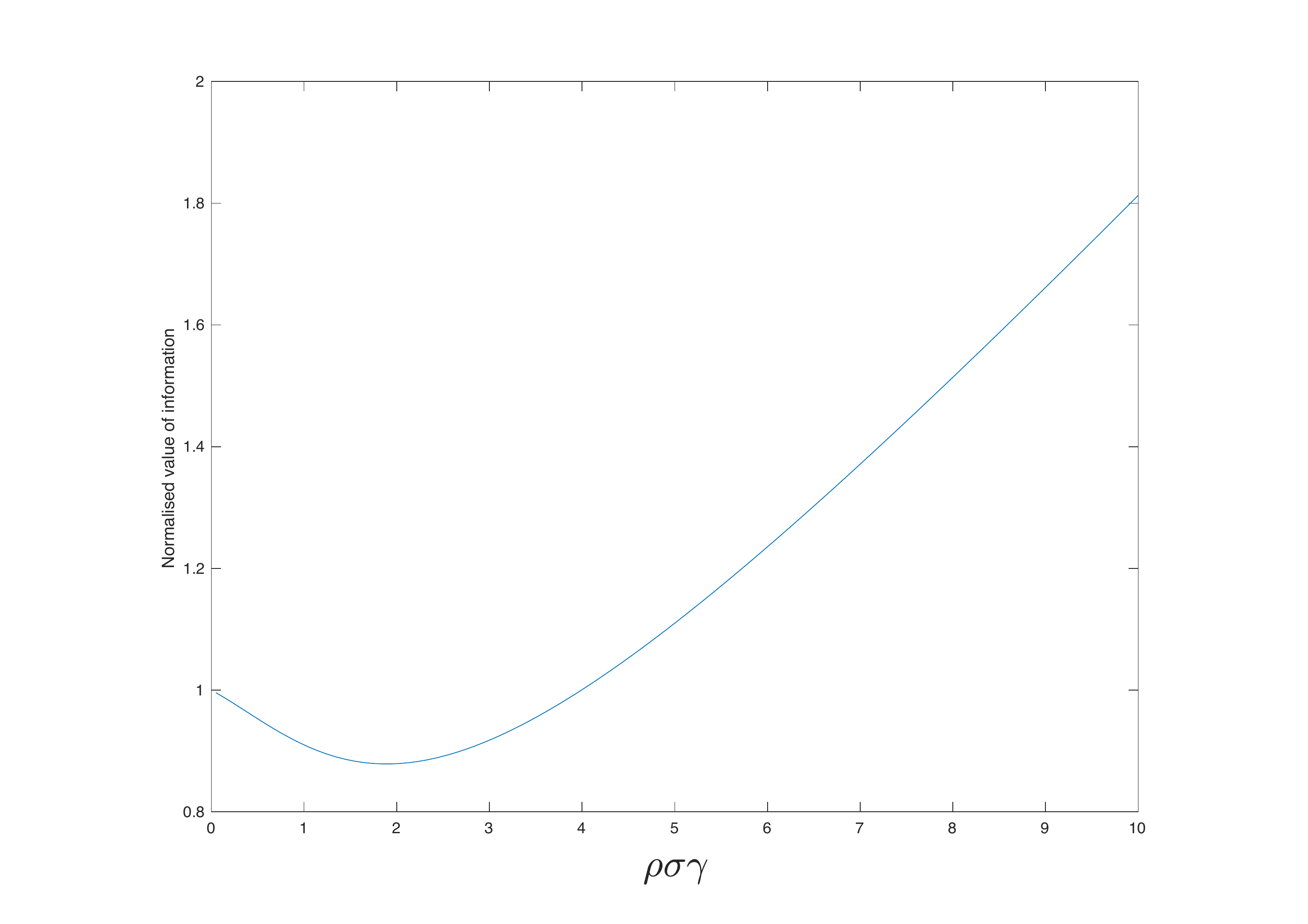}}
		\ea
		$
	\end{center}
	\caption{Normalized value of information in competitive agents equilibrium is reported in the left pane and the right plot illustrates corresponding value for  the market makers equilibrium.}
	\label{fig:profit}
\end{figure}

The difference in the normalized value of information between the markets with market makers and competitive agents displays a similar non-monotone pattern as illustrated by Figure \ref{fig:xvalinf}.  As one can observe, the information is less valuable when the liquidity is supplied by competitive agents  in the reasonable range of  market adjusted risk aversion . This implies that strategic traders in markets with market makers  have more incentives to acquire private information under typical market conditions. 
\begin{figure}[h]
	\centering
	\includegraphics[height=6cm, width=10cm]{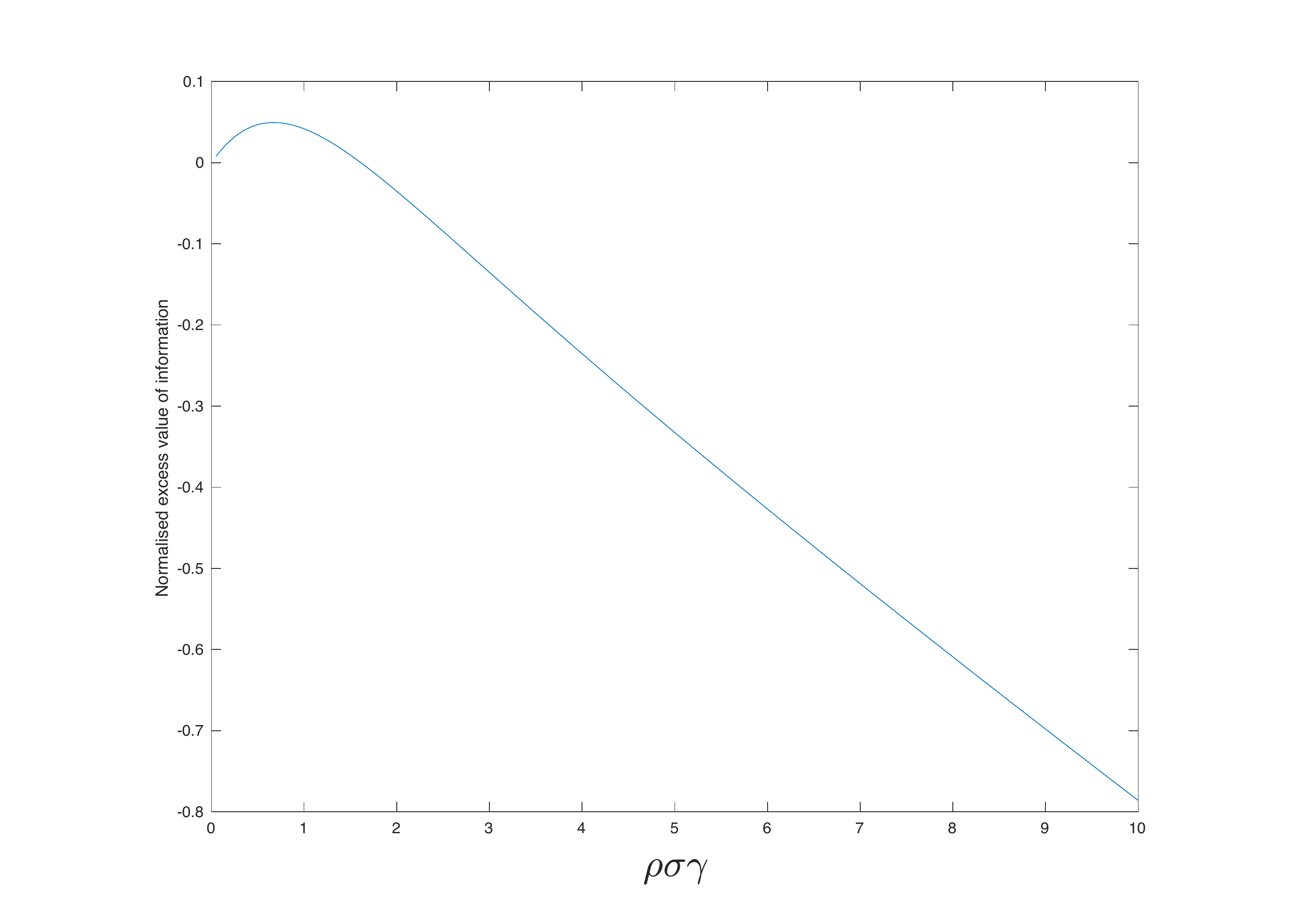}
	\caption{The difference between the normalized value of information in the market makers and competitive agents equilibrium as a function of $\rho_M$.}
	\label{fig:xvalinf}
\end{figure}
Despite this nonlinear pattern, an insider who already possesses private information chooses to trade with market makers as direct calculations demonstrate that the function $\Delta$ from Table \ref{T:profit} is positive. The same table also shows that strategic traders prefer to trade with market makers as well. The noise traders are indifferent between the two types of liquidity provision.  Indeed, their expected profit is given by 
\[
\bbE[\sigma \int_0^TB_t dP_t]=\sigma \bbE[B_1 V-\int_0^1P_t dB_t -\lambda \sigma]=-\lambda\sigma^2,
\]
which is identical across the markets. This in particular implies that the profits of the competitive agents above the zero-utility level are passed to the strategic traders in market makers equilibrium.  Thus, both informed and uninformed strategic traders enjoy larger profits with market makers. It is easy to see from Table \ref{T:profit}  that  the uninformed strategic trader's profit increases by $\frac{1}{3}$ when trading with market makers. Moreover, direct  calculations show that 
\[
\frac{1}{3} >\frac{\Delta(\rho_M)}{\sqrt{1+\frac{\rho^2_M}{4}}}
\]
 for all $\rho_M>0$, which implies  that the increase in profit is higher for uninformed strategic traders.  Finally, since all the traders prefer liquidity provided by the market makers, the trading desks executing client orders should optimally choose market makers over competitive agents. }}

%\begin{figure}[h]
%	\begin{center}
%		$
%		\ba{cc}
%		\scalebox{0.25}{\includegraphics{numerics/MisprS.pdf}} &\scalebox{0.25}{\includegraphics{numerics/MisprI.pdf}}
%		\ea
%		$
%	\end{center}
%	\caption{Profit due to the mispricing is increasing in both $\rho$ and $\sigma$ for strategic trader, whereas it exhibits a hump shape behaviour for insider.}
%	\label{fig:mispricing}
%\end{figure}
%\begin{figure}[h]
%	\begin{center}
%		$
%		\ba{cc}
%		\scalebox{0.25}{\includegraphics{numerics/xNPLP.pdf}} &\scalebox{0.25}{\includegraphics{numerics/xNPMM.pdf}}
%		\ea
%		$
%	\end{center}
%	\caption{Excess normalised profits of the insider in both equilibria increase in $\rho$ and $\sigma$. Competitive agents equilibrium is reported in the left pane while the right plot depicts  the market makers equilibrium.}
%	\label{fig:xnprofit}
%\end{figure}
\section{Conclusion} \label{s:C}
We have investigated the role of market mechanism in liquidity and trader profits.   Identification of a universal parameter, $\rho_M$, allowed us to compare equilibrium outcomes across different markets. The resulting equilibria let us conclude that the retail agents (informed, uninformed and noisy) would prefer their order routed to a market maker.  

The risk aversion of liquidity providers alters the equilibria not only quantitatively but also qualitatively.  The most significant distinction from the earlier models is that, as a result of risk sharing between the insider and the market makers,  the demand process becomes mean reverting although the cumulative noise demand is postulated to be a Brownian motion.

This mean reversion is responsible for the other departures from the previous literature that we highlighted in this paper. In particular, it entails conspicuous insider trading, and price reversal.

We also studied the sensitivity of market parameters to risk aversion. We confirm the intuition that  the market depth  decreases, prices exhibit stronger reversal, become less efficient and converge to the fundamental value at a lower rate as the liquidity providers become more risk averse. An unforeseen consequence of our models is the non-monotone relationship between the market adjusted  risk aversion and the value of private information. In particular, we observe that there exists a critical risk aversion level at which the amount a strategic trader is willing to pay for the private information is minimal.

%We demonstrated that the risk aversion of market makers has a significant impact on the equilibrium and on the market parameters.
%Another important result that sets our model apart from the others is that the distribution of the equilibrium demand process depends on whether the strategic trader possesses private information. This distinction inspires a novel approach to test the presence of private information. Although, independent of the presence of private information, the total demand and the price follows
%\[
%Y_t= \sigma \beta_t -\Lambda \frac{\sigma^2 \rho^0}{2}\int_0^t Y_s\, ds, \qquad H(t, Y_t)=\Lambda Y_t,
%\]
%until the announcement date, $1$, the relationship  between $\Lambda$ and the other parameters is strikingly different in the two cases. If no private information exists in the market, $\Lambda=\lambda_0=\frac{\rho^0 a^2}{4}$. Otherwise, $\Lambda=\lambda$ with $\lambda$ being the solution of (\ref{eq:depth}). As $\Lambda$ can be estimated from the relationship between the total demand and the price, one can infer $\sigma$ and $\rho^0$ from the time series data for $Y$. Thus, the detection of insider trading boils down to testing the hypothesis $\Lambda=\lambda_0$ vs. the hypothesis $\Lambda=\lambda$, which is equivalent to determining the unknown parameter $a$, the variance of the fundamental value.

\bibliographystyle{plainnat}
\bibliography{ref.bib}
\appendix
 \section{Proofs of main theorems}
\begin{proof}[Proof of Theorem \ref{t:eqLP}]
	We will start with proving that $P^* \in \cP(X^*)$.  Note that
	\be \label{e:X*lp}
	dX^*_t= \left\{-\sigma^2 \frac{\lambda^*\rho}{1+ \lambda \rho\sigma^2(1-t)}(X^*_t+Z_t)+\sigma^2 \frac{p_x(t,X^*_t+Z_t;1,\frac{V-\mu}{\lambda^*})}{p(t,X^*_t+Z_t;1,\frac{V-\mu}{\lambda^*})}\right\}dt,
	\ee
	where $p$ is the transition density of the Ornstein-Uhlenbeck process $R$ defined by
	\be \label{e:Rsde}
	dR_t=\sigma d\beta_t -\sigma^2 \frac{\lambda^*\rho}{1+ \lambda^* \rho\sigma^2(1-t)}R_t dt,
	\ee
	where $\beta$ is a Brownian motion. Thus, it follows from Theorem 2.2 in \cite{CDbridge} that $X^*+Z$, in its own filtration, is a weak solution of (\ref{e:Rsde}) since $R_1$ has the same distribution as $\frac{V-\mu}{\lambda}$.  Therefore,  
	\be \label{e:P*projection}
	dP^*_t=\lambda^* \left(\sigma d\beta^*_t  -\sigma^2 \frac{\rho(P^*_t-\mu)}{1+ \lambda^* \rho\sigma^2(1-t)} dt\right),
	\ee
	and $\beta^*$ is an $\cF^M$-Brownian motion, which clearly admits a unique martingale measure.  
	
	Moreover, Theorem 2.2 in \cite{CDbridge} also yields that $X^*+Z$ converges a.s. to $\frac{V-\mu}{\lambda^*}$. Thus, in view of Proposition \ref{p:ioptimality},  $X^*$ is optimal provided it is admissible.  Indeed, using the transition density of the bridge process provided by Theorem 2.2 in \cite{CDbridge}, it can be directly verified that
	\[
	\bbE^v\int_0^1 (\lambda^* (X^*_t+Z_t))^2 dt <\infty.
	\]
	Thus, $X^*$ is admissible, and, therefore optimal. Moreover, the value function reads 
	\[
	\psi(0,0)=\frac{1}{2 \lambda^*}(P_0-V)^2+ \frac{\lambda^*\sigma^2}{2}.
	\]
	Taking expectation yields ex-ante profit as claimed.

	We next turn to the optimisation problem for competitive agents, which is given by 
	\[
	\sup_{\theta \in \cA(0,\lambda^*,X^*)}\bbE\left[1-\exp\left\{-\rho\left(\int_0^1\theta_tdP^*_t \right)\right\} \right],
	\]
	where $P^*$ is given by  (\ref{e:P*projection}) and $P^*_0=\mu$.
	
	Let us introduce $Y^*:=X^*+Z$ and observe that
	\[
	dY^*_t=\sigma d\beta^*_t  -\sigma^2 \frac{\lambda^* \rho Y^*_t}{1+ \lambda^* \rho\sigma^2(1-t)} dt.
	\]
	In view of Theorem 2.1 in \cite{DGRSSS02} the optimal final wealth of  the competitive agent is given by
	\[
	\frac{1}{\rho}\left(\bbE^{\bbQ}[\log L_1]-\log L_1\right),
	\]
	where 
	\[
	L_1=\exp\left(\int_0^1  \frac{\rho\lambda^*Y^*_t }{1+ \lambda^* \rho\sigma^2(1-t)}dY^*_t +\frac{\sigma^2\rho^2 (\lambda^*)^2}{2}\int_0^1  \left(\frac{Y^*_t}{1+ \lambda^* \rho\sigma^2(1-t)}\right)^2dt\right)
	\]
	and $\bbQ$ is the equivalent measure defined by $\frac{d\bbQ}{d\bbP}=L_1$. Note that  $Y^*$ is a $\bbQ$-martingale with quadratic variation $d[Y^*_t,Y^*_t]=\sigma^2dt$.  Moreover, integration by parts applied to 
	\[
	\frac{(Y_t^*)^2}{1+\lambda^*\rho \sigma^2(1-t)}
	\]
	yields
	\[
	\log L_1=\frac{\lambda^*\rho}{2}\left( (Y^*_1)^2-\log(1+\lambda^*\rho\sigma^2))\right).
	\]
	
	This in turn implies
	\[
	\bbE^{\bbQ}[\log L_1]=\frac{\sigma^2}{2}\int_0^1  \frac{\rho^2(\lambda^*)^2\sigma^2 t}{(1+ \lambda^* \rho\sigma^2(1-t))^2}dt=\frac{1}{2}\left(\lambda^*\rho\sigma^2-\log (1+\lambda^*\rho\sigma^2)\right).
	\]
	Therefore, the final optimal wealth of the competitive agent is given by
	\[
	\frac{\lambda^*}{2}(\sigma^2-(Y^*_1)^2).
	\]
	This proves the optimality of $\theta^*$ since $\int_0^1 \theta^*_t dP^*_t= \frac{\lambda^*}{2}(\sigma^2-(Y^*_1)^2)$.
	
	Moreover, the expected utility is given by
	\[
	\bbE\left[1-\exp\left(-\bbE^{\bbQ}[\log L_1]+\log L_1\right)\right]=1-e^{-\bbE^{\bbQ}[\log L_1]}=1-\sqrt{1+\lambda^*\rho\sigma^2}e^{-\frac{\lambda^*\rho\sigma^2}{2}}.
	\]
	Using the identity $\lambda^*\rho\gamma^2=(\lambda^*)^2-\frac{\gamma^2}{\sigma^2}$ we arrive at (\ref{e:lputility}).
	
	Since it is obvious that the market clears, $((m^*,\lambda^*),\theta^*=(\theta^{*,a})_{a\in [0,1]},X^*)$ is an equilibrium. Finally, the equilibrium price process follows (\ref{e:priceLP}) due to (\ref{e:X*lp}) and the explicit form of $p$.
	
\end{proof}
\begin{proof}[Proof of Theorem \ref{t:benchLP}]
	Observe that the pricing rule in terms of the notation of Proposition \ref{p:ioptimality} can be expressed as $\lambda(t)=\lambda^*, \, t<1$, $\lambda(1)=2\lambda^*$, $\phi=\mu$, $m_0=\mu$ and $c_0=2\lambda^*$. Thus, Condition C2 is satisfied with $2\lambda(1)\lambda(1-)=c_0^2$. Moreover, since $\Delta Y^*$ satisfies \eqref{e:dY1opt}, it suffices to show the admissibility to prove the optimality of the stated strategy. This, however, follows from the fact that $Y^*$ is Gaussian and $P^*$ is an affine function of $Y^*$.
	
	We next turn to the optimisation problem for competitive agents, which is given by 
	\[
	\sup_{\theta \in \cA(0,\lambda^*,X^*)}\bbE\left[1-\exp\left\{-\rho\left(\int_0^{1}\theta_{t-}dP^*_t + \theta_1(V-P_1^*)\right)\right\} \right].
	\]
	Since all stochastic processes involved in the above expression are independent of $V$, the above reduces to
	\[
	\sup_{\theta \in \cA(0,\lambda^*,X^*)}\bbE\left[1-\exp\left\{-\rho\left(\int_0^{1}\theta_{t-}dP^*_t -2\lambda^*Y_1^* \theta_1-\frac{\rho\gamma^2}{2}\theta_1^2\right)\right\} \right].
	\]
	Maximising over $\theta_1$ yields that $\theta_1^*=-Y^*_1$. Thus, the optimisation problem becomes
	\[
	\begin{split}
		&\sup_{\theta \in \cA(0,\lambda^*,X^*)}\bbE\left[1-\exp\left\{-\rho\left(\int_0^{1}\theta_{t-}dP^*_t +\lambda^*(Y_1^*)^2 \right)\right\} \right]\\
		&=\sup_{\theta \in \cA(0,\lambda^*,X^*)}\bbE\left[1-\exp\left\{-\rho\left(\int_0^{1}\theta_{t-}dP^*_t +\frac{\lambda^*}{4}(Y_{1-}^*)^2 \right)\right\} \right]
	\end{split}	
	\]
	Next, consider the process
	\[
	N_t:=\exp(-\alpha_2(t)(Y^*_t)^2 -\alpha_0(t)),
	\]
	where
	\[
	\begin{split}
		\alpha_2(t)&:=-\frac{a(t)}{2}- \frac{\lambda^*\rho}{4+2\lambda^*\rho\sigma^2(1-t)}\\
		a(t)&:=-\frac{\lambda^*\rho}{1+\lambda^*\rho\sigma^2(1-t)}\\
		\alpha_0(t)&=\sigma^2 \int_t^1\alpha_2(s)ds.
	\end{split}
	\]	
	Direct calculations show that 
	\[
	dN_t=-2\alpha_2(t)Y_t \sigma dB_t.
	\]
	Thus, under $\tilde{\bbP}$ defined by $\frac{d\tilde{\bbP}}{d\bbP}=\sqrt{\frac{1+\lambda^*\rho \sigma^2}{1+\half\lambda^*\rho \sigma^2}}N_1$, $Y^*$ follows
	\[
	dY^*_t= \sigma d\tilde{B}_t  +\sigma^2 a(t)Y^*_t dt, \quad t<1,
	\]
	where $\tilde{B}$ is a $\tilde{\bbP}$-Brownian motion. Consequently, the optimisation problem can be cast under $\tilde{\bbP}$ as follows
	\[
	\sup_{\theta \in \cA(0,\lambda^*,X^*)}\tilde{\bbE}\left[1-\sqrt{\frac{1+\half\lambda^*\rho \sigma^2}{1+\lambda^*\rho \sigma^2}}\exp\left\{-\rho\left(\int_0^{1-}\theta_{t}dP^*_t \right)\right\} \right]
	\]
	In view of Theorem 2.1 in \cite{DGRSSS02} the optimal final wealth of  the competitive agent is given by
	\[
	\frac{1}{\rho}\left(\bbE^{\bbQ}[\log L_1]-\log L_1\right),
	\]
	where 
	\[
	L_1=\exp\left(\int_0^1  \frac{\rho\lambda^*Y^*_t }{1+ \lambda^* \rho\sigma^2(1-t)}\sigma d\tilde{B}_t -\frac{\sigma^2\rho^2 (\lambda^*)^2}{2}\int_0^1  \left(\frac{Y^*_t}{1+ \lambda^* \rho\sigma^2(1-t)}\right)^2dt\right)
	\]
	and $\bbQ$ is the equivalent measure defined by $\frac{d\bbQ}{d\tilde{\bbP}}=L_1$. Note that  $Y^*$ is a $\bbQ$-martingale with quadratic variation $d[Y^*_t,Y^*_t]=\sigma^2dt$.  Moreover, integration by parts applied to 
	\[
	\frac{(Y_t^*)^2}{1+\lambda^*\rho \sigma^2(1-t)}
	\]
	yields
	\[
	\log L_1=\frac{\lambda^*\rho}{2}\left( (Y^*_{1-})^2-\log(1+\lambda^*\rho\sigma^2))\right).
	\]
	
	This in turn implies
	\[
	\bbE^{\bbQ}[\log L_1]=\frac{\sigma^2}{2}\int_0^1  \frac{\rho^2(\lambda^*)^2\sigma^2 t}{(1+ \lambda^* \rho\sigma^2(1-t))^2}dt=\frac{1}{2}\left(\lambda^*\rho\sigma^2-\log (1+\lambda^*\rho\sigma^2)\right).
	\]
	Therefore, the final optimal wealth of the competitive agent is given by
	\[
	\frac{\lambda^*}{2}(\sigma^2-(Y^*_{1-})^2).
	\]
	This proves the optimality of $\theta^*$ since $\int_0^{1-} \theta^*_t dP^*_t= \frac{\lambda^*}{2}(\sigma^2-(Y^*_{1-})^2)$.
	
	Moreover, in view of $\tilde{E}[L_1]=1$, the expected utility is therefore given by
	\[
	\tilde{\bbE}\left[1-\sqrt{\frac{1+\half\lambda^*\rho \sigma^2}{1+\lambda^*\rho \sigma^2}}\exp\left(-\bbE^{\bbQ}[\log L_1]+\log L_1\right)\right]=1-\sqrt{1+\half \lambda^*\rho\sigma^2}e^{-\frac{\lambda^*\rho\sigma^2}{2}},
	\]
	which proves (\ref{e:lputilitys}).
	
	Since it is obvious that the market clears, $((m^*,\lambda^*),\theta^*=(\theta^{*,a})_{a\in [0,1]},X^*)$ is an equilibrium.
	%	Clearly, $X^*$ is admissible since $P^*$ is constant. Also note that given $P^*$, the expected profit of the strategic trader is given by
	%	\[
	%	\bbE[X_1(V-\mu)]=0
	%	\]
	%	for any strategy $X$. 
	%	
	%	We next turn to the optimisation problem for competitive agents. Observe that given $P^*$ they each solve
	%	\[2
	%	\sup_{\theta \in \cA(0,0,X^*)}\bbE\left[1-\exp\left\{-\rho\theta_1\left(V-\mu\right)\right\} \right]=	\sup_{\theta \in \cA(0,0,X^*)}\left(1-e^{\frac{\rho^2\theta_1^2\gamma^2}{2}}\right)\leq 0.
	%	\]
	%	Note that zero expected utility is achieved by $\theta_1=0$. In the proposed equilibrium we indeed have $-\theta^*_1=X^*_1+Z_1=0$ since $X^*+Z$ is a Brownian bridge from $0$ to $0$.
\end{proof}
{
\begin{proof}[Proof of Theorem \ref{t:eqMM}]
	It follows from Theorem 2.2 in \cite{CDbridge} that $X^*+Z$, in its own filtration, is a weak solution of (\ref{e:Ri}) since $R_1^{(i)}$ has the same distribution as $\frac{V-\phi_i^*}{\lambda^*}$. In particular,
	\be \label{e:Y*mm}
	dY^*_t=\sigma^2(a(t)Y_t^* + b(t))dt +\sigma d\beta^*_t
	\ee
	for some Brownian motion $\beta^*$. This readily proves (\ref{e:demandMMm}) and (\ref{e:priceMMm}).
	
	Moreover, Theorem 2.2 in \cite{CDbridge} also yields that $X^*+Z$ converges a.s. to $\frac{V-\phi_i^*}{\lambda^*}$. Thus, in view of Proposition \ref{p:ioptimality},  $X^*$ is optimal provided it is admissible.  Indeed, using the transition density of the bridge process provided by Theorem 2.2 in \cite{CDbridge}, it can be directly verified that
	\[
	\bbE^v\int_0^1 (\lambda^* (X^*_t+Z_t))^2 dt <\infty.
	\]
	Thus, $X^*$ is admissible, and, therefore optimal. Moreover, the value function given by Proposition \ref{p:ioptimality} reads 
	\[
	\psi(0,0)=\frac{1}{2 \lambda^*}(P_0-V)^2+ \frac{\lambda^*\sigma^2}{2}.
	\]
	Taking expectation yields ex-ante profit as claimed.

	Next  define 
	\[
	G_t:=-\int_0^{t} Y^{\ast}_s\,dP^{\ast}_s +Y^{\ast}_{t} (P^{\ast}_{t}-V)
	\]
	and note that for $t<1$, $G_t$ is the final wealth of an agent who got the allocation at time $0$ and decided to quit market making at time $t$. Moreover, since $\Delta Y^*_1=0$, $G_1$ is the wealth of a market maker who follows the stated strategy until time $1$ and wins the allocation at time $0$ when all market makers quote $\lambda^*$. 
	
	Note that the  above in conjunction with (\ref{e:Y*mm}) yield that if all market makers quote $\lambda^*$, their utility  is $0$. Indeed, since $P^*_t= \lambda^* Y^*_t + \phi^*$ and $P^*_1=V$,
	\bean
	M_t:=1-\bbE_t[U(G_{t})]&=&\bbE_t\left[\exp\left(\rho \lambda^* \int_0^{t} Y^{\ast}_s\,dY^{\ast}_s -\rho\lambda^*Y^{\ast}_{t} (Y^{\ast}_{t}-Y^*_1)\right)\right]\\
	&=&\exp\left(-\frac{\rho \lambda^*}{2}((Y^*_t)^2 +\sigma^2 t)\right)\bbE_t\left[\exp\left(\rho \lambda^* Y^{\ast}_{t}Y^*_1)\right)\right]\\
	&=&\exp\left(-\frac{\rho \lambda^*\sigma^2 t}{2}+\alpha_1(t)Y_t +\alpha_2(t)Y_t^2\right),
	\eean
	where
	\bean
	\alpha_1(t)&:=&\lambda^*\rho \sigma^2 e^{\sigma^2 A(1)}\int_t^1e^{-\sigma^2 A(s)}b(s)ds, \mbox{ and }\\
	\alpha_2(t)&:=& \lambda^*\rho e^{\sigma^2(A(1)-A(t))}+ \half (\lambda^*)^2\rho^2\sigma^2 \int_t^1 e^{2\sigma^2(A(1)-A(s))}ds- \frac{\lambda^* \rho }{2},
	\eean
	and $A(t):=\int_0^t a(s)ds$ using the moment generating function of the normal distribution associated with the conditional distribution of $Y^*_1$ given $Y^*_t$. By Lemma \ref{l:martingaleU} $M$ is a martingale. Therefore, $\bbE[U(G_1)]=1-M_0=0$. Since $U(0)=0$, the ex-ante utility of an arbitrary market maker is given by $\frac{1}{N}\bbE[U(G_1)]=0$ when all market makers quote $\lambda^*$. 

	Observe that Proposition \ref{p:ioptimality} implies that the market $\lambda$ must be an increasing function of time on $[0,1)$.  Moreover, quoting a $\lambda$ below $\frac{\lambda^*}{2}$ at time $1$ leads to infinite loss and, therefore, is  suboptimal.  In conjunction with our conventions on time priority this yields only two options for the deviating market maker: i) quote a higher $\lambda$ starting from time $s\geq 0$, which amounts to stopping making the market on $[s,1]$, or ii) quote a smaller $\lambda$ at time $0$ with the possibility of increasing it at later dates and opportunity to decrease at time $1$.  Let $(\lambda, \phi)$ denote the market quote resulting from the deviation. 
	
Consider Case i). In this case we can assume without loss of generality that deviating market maker $j$ quotes
	\[
	\lambda^j(t)=\lambda^* \chf_{[t<\tau^j]}+ (\lambda^*+1)\chf_{[\tau^j\leq t\leq 1]}.
	\]
Observe that the market quote resulting from this deviation is  still the equilibrium market quote $(\lambda^*,\phi^*)$.  Thus, by our convention, the insider follows the same equilibrium strategy in this setting as well. In particular $\Delta Y_1=0$ and $P_{1-}=V$. Thus, the final wealth of the market maker is given by
	\[
	G^j_1=- \int_0^{\tau^j}Y^*_tdP^*_t +Y^*_{\tau^j}(P^*_{\tau^j}-V)=G_{\tau^j}.
	\]
	Since $M$ is a martingale, it follows that $ \bbE[U(G^j_1)]=0$.

	 If the deviating market maker chooses to undercut at time $0$, then he has the option to increase $\lambda$ on $[0,1)$,  can quit the market at $t \in (0,1)$ by quoting $\lambda^*$ or above, and can decide to meet the bulk order at $t=1$.   Let $(\lambda, \phi)$ denote the market quote resulting from such a deviation and $\hat{G}_1$ be the gains of a fictitious market maker as defined in Remark \ref{r:fictitiousgain}. Suppose that the market maker quits making the market at $t^*\in (0,1]$ with $t^*=1$ and $\Delta Y^j_1 \neq 0$ meaning that he makes the market until the end. Denoting his gains process by $G^j$ we have
	 \[
	 \begin{split}
	 	G^j_1=&\hat{G}_1-\frac{\lambda(1-)\sigma^2}{2}(1-t^*)-\frac{(P_{1-}-V)^2}{2\lambda(1-)}\chf_{[t^*<1]} +\Delta Y_1 (V-P_1)\chf_{[\Delta Y^j_1 =0]}\\
	 	&=-\frac{(P_0-V)^2}{2\lambda(0)}+\frac{\sigma^2}{2}\int_0^{t^*}\lambda(s)ds-\Delta Y_1 (V-P_1)\chf_{[\Delta Y^j_1 \neq 0]}+\frac{(P_{1-}-V)^2}{2\lambda(1-)}\chf_{[t^*=1]} \\
	 	&\leq -\frac{(P_0-V)^2}{2\lambda(0)}+\frac{\sigma^2}{2}\lambda^*,
	 \end{split}
	 \]
where the second equality follows from \eqref{e:BGIf} and the last one follows from Remark \ref{r:DMM1} when $t^*=1$ and $\Delta Y^j_1 \neq 0$. In case $t^*=1$ and $\Delta Y^j_1 =0$, $\lambda(1)>\lambda(1-)/2$ and, therefore $P_{1-}=V$ in view of Proposition \ref{p:ioptimality}.  Observe that the price priority rule implies that $\Delta Y^j_1 =0$ if $t^*<1$. This shows $\bbE[U(G^j_1)]< 0$.
	 
	 Hence,  $((\Lambda^*, \Phi_i^*),X^*)$ is an equilibrium. 
	
	To conclude, observe that the dynamics given by (\ref{e:demandMMi}) and (\ref{e:priceMMi}) follow from the fact that $Y^*=X^*+\sigma B$ and the explicit form of the transition density of $R$.
\end{proof}

%%%%%%%%%%%%%%%%%%%%%%%%%%%%%%%%%%%%%%%%%%%%%%%%%%%%
\begin{proof}[Proof of Theorem \ref{t:benchMM}]
	In view of Proposition \ref{p:ioptimality},  $X^*$ is optimal provided it is admissible. Since the price process is Gaussian, (\ref{a:Xintegrability}) clearly holds. Moreover, the value function reads 
	\[
	\psi(0,0)=\frac{1}{2 \lambda^*(0)}(P_0-\mu)^2+ \frac{\lambda^*(0)\sigma^2}{2}=\frac{\rho\gamma^2\sigma^2}{4}.
	\]
	
Next recall that for $t<1$
\[
G_t=-\int_0^{t} Y^{\ast}_s\,dP^{\ast}_s +Y^{\ast}_{t} (P^{\ast}_{t}-V)
\]
is the final wealth of a market maker who got the allocation at time $0$ and decided to quit market making at time $t$ as in the proof of Theorem \ref{t:eqMM}. Note that since in equilibrium $Y$ has a jump at $1$,  the final   wealth of a market maker who follows the stated strategy until time $1$ and wins the allocation at time $0$ would be different. To be more precise
\[
G_1=-\int_0^{1-} Y^{\ast}_s\,dP^{\ast}_s +Y^{\ast}_{1-} (P^{\ast}_{1-}-V)+\Delta Y^{\ast}_1(P^{\ast}_{1}-V).
\]

Note that the  above yields that if all market makers quote $\lambda^*$, their utility  is $0$. Indeed, since $P^*_t= \lambda^*(0) Y^*_t + \phi^*$ for $t<1$, we can define $M$ via
\bean
M_t:=1-\bbE_t[U(G_{t})]&=&\bbE_t\left[\exp\left(\rho \lambda^*(0) \int_0^{t} Y^{\ast}_s\,dY^{\ast}_s -\rho Y^{\ast}_{t} (\lambda^*(0)Y^{\ast}_{t}+\phi^*-V)\right)\right]\\
&=&\exp\left(-\frac{\rho^2\gamma^2\sigma^2 t}{4}+\rho(\mu-\phi^*)Y^*_t +\frac{\rho^2\gamma^2}{4}(Y_t^*)^2\right),
\eean
where the last equality follows from the independence of $V$ and $Y^*$ and the expression for $\lambda^*$. Direct application of Ito's formula yields that $M$ is a martingale on $[0,1)$. Furthermore, defining 
\[
M_1:=1-\bbE_1[U(G_{1})]
\]
one can directly verify that 
\be \label{e:StrUtil}
M_1=\exp\left(V\rho\frac{\mu-\phi^*}{\lambda^*(0)}-\frac{\sigma^2\rho\lambda^*(0)}{2}-\rho \frac{\mu^2 -(\phi^*)^2}{2\lambda^*(0)}\right)=1-U\left(-\frac{(\mu-\phi^*)^2}{2\lambda^*(0)}+\frac{\sigma^2\lambda^*(0)}{2}+
\frac{\mu-\phi^*}{\lambda^*(0)}(\mu-V)\right).
\ee
Thus, $\bbE[M_{1}]=1$. Since $U(0)=0$, utility of any market maker is given by $\frac{1}{N}\bbE[U(G_1)]=0$. 
	
	Observe that Proposition \ref{p:ioptimality} implies that the market $\lambda$ must be an increasing function of time on $[0,1)$ and 
	\[
	2\lambda(1) \lambda(1-)\geq 1.
	\]
This implies that a deviating market maker who doesn't make the market on $[0,1)$ cannot submit a winning quote at time $1$ as undercutting at time $1$ would cause infinite loss according to Proposition \ref{p:ioptimality}. In conjunction with our conventions on time priority this yields only two options for the deviating market maker: i) quote a higher $\lambda$ starting from time $s\geq 0$, which amounts to stopping making the market on $[s,1]$, or ii) quote a smaller $\lambda$ at time $0$ with the possibility of increasing it at later dates and an opportunity to decrease at time $1$.
	
	In the first case, if the market maker $j$  quotes a higher $\lambda$ starting at $s<1$, he will get no order on $[s,1]$ due to the no-shuffling rule and the fact that by undercutting market lambda at time $1$  will result in infinite loss.  Note that for this deviating strategy the resulting market quote does not change and therefore the equilibrium strategy of the strategic trader stated in the theorem is still optimal.  Since $M$ is a  martingale and the utility of the deviating market maker is given by $1- \bbE[M_s]$, such a deviation is suboptimal.

	If the deviating market maker chooses to undercut at time $0$, then he has two options: i) quit making the market at $t^* \in (0,1]$ by quoting the equilibrium $\lambda$ or above, and ii) make the market on $[0,1]$ by undercutting throughout. 
	
Let $(\lambda, \phi)$ denote the market quote resulting from such a deviation and $\hat{G}_1$ be the gains of a fictitious market maker as defined in Remark \ref{r:fictitiousgain}. Suppose that the market maker quits making the market at $t^*\in (0,1]$. Denoting his gains process by $G^j$,  we have via  Remark \ref{r:fictitiousgain} that 
	\[
	\begin{split}
		G^j_1=&\hat{G}_1-\frac{\lambda(1-)\sigma^2}{2}(1-t^*)+\frac{(\mu-V)^2-(P_{1-}-V)^2}{2\lambda(1-)}+\Delta Y_1 (V-P_1)\\
		&=-\frac{(P_0-\mu)^2}{2\lambda(0)}+\frac{\sigma^2}{2}\int_0^{t^*}\lambda(s)ds+ Y_{t^*}(\mu-V)\\
		&=-\frac{(\phi-\mu)^2}{2\lambda(0)}+\frac{\sigma^2}{2}\int_0^{t^*}\lambda(s)ds+ \frac{\mu-\phi}{\lambda(0)}(\mu-V),
	\end{split}
	\]
	where we use the fact that $P_{t^*}=\mu$ in view of Proposition \ref{p:ioptimality} since the strategic trader constructs a bridge at $t^*$.  Thus, direct comparison with \eqref{e:StrUtil} and the fact that $\lambda <\lambda^*$ on $[0,t^*)$ reveals that deviation is not profitable. 
	
	Moreover, if the deviating market maker improves the equilibrium quotes on $[0,1]$, his total gains are given by $\hat{G}_1$.  If $2\lambda(1)=\lambda(1-)$, Proposition \ref{p:ioptimality} implies $Y_1=\frac{\mu-\phi}{\lambda(1-)}$.  Otherwise, $Y_1=\frac{\mu-\phi}{\lambda(0)}$.  Also note that Remark \ref{r:DMM1} and \eqref{e:BGuf} yield that
	\[
	\hat{G}_1\leq -\frac{(\phi-\mu)^2}{2\lambda(0)}+\frac{\sigma^2}{2}\int_0^{1}\lambda(s)ds+Y_1(\mu-V).
	\]
	Therefore, another comparison with \eqref{e:StrUtil} and the fact that $\lambda <\lambda^*$ on $[0,1]$ leads to the claimed suboptimality. 

	Finally the dynamics of $Y^*$ follows from that  $Y^*_i=\sigma B + X^*_i$.
\end{proof}
}
%%%%%%%%%%%%%%%%%%%%%%%%%%%%%%%%%%%%%%%%%%%%%%%%%%%%

\section{Exponential martingales of Ornstein-Uhlenbeck processes}
\begin{lemma}
	Suppose $Y$ solves
\[
dY_t=\sigma^2\left(a(t)Y_t +b(t)\right)dt + \sigma dB_t. \quad Y_0=0,
\]
for some continuously differentiable functions $a$ and $b$. Then,
\[
M_t:=\exp\left(\alpha_0(t)+\alpha_1(t)Y_t +\alpha_2(t)Y_t^2\right)
\]
is a martingale such that 
\bean
\alpha_0(1)&=&-\frac{\lambda \rho\sigma^2}{2},\\
\alpha_1(1)&=&0,\\
\alpha_2(1)&=&\frac{\lambda \rho}{2}
\eean
and $MY$ is a martingale
if and only if 
\bea
-2 \alpha_2(t)&=&a(t)=-\frac{\lambda \rho}{1+ \lambda\rho\sigma^2(1- t)},\\
b(t)&=&\alpha_1(t)=0,\\
\alpha_0(t)&=& -\frac{\lambda \rho\sigma^2 }{2}+\half\log(1+\lambda\rho\sigma^2(1-t)).
\eea
\end{lemma}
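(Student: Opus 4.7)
The plan is to apply It\^o's formula to $M$, read off its drift as a quadratic polynomial in $Y_t$, and use the martingale property to force the three coefficients (constant, linear, and quadratic in $y$) to vanish identically. This will yield a coupled system of three ODEs for $\alpha_0,\alpha_1,\alpha_2$ together with terminal conditions. A second application of It\^o's formula to the product $MY$ will then produce two further algebraic relations linking $(a,b)$ to $(\alpha_1,\alpha_2)$, and these together with the ODE system will determine everything explicitly.

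More concretely, with $f(t,y)=\alpha_0(t)+\alpha_1(t)y+\alpha_2(t)y^2$, I would expand
\[
dM_t=M_t\Bigl(\partial_t f+\sigma^2(a Y_t+b)\partial_y f+\tfrac{1}{2}\sigma^2\partial_{yy}f+\tfrac{1}{2}\sigma^2(\partial_y f)^2\Bigr)dt+M_t\sigma\,\partial_y f\,dB_t,
\]
collect powers of $Y_t$, and obtain
\[
\alpha_2'+2\sigma^2 a\alpha_2+2\sigma^2\alpha_2^2=0,\quad \alpha_1'+\sigma^2 a\alpha_1+2\sigma^2 b\alpha_2+2\sigma^2\alpha_1\alpha_2=0,
\]
\[
\alpha_0'+\sigma^2 b\alpha_1+\sigma^2\alpha_2+\tfrac{\sigma^2}{2}\alpha_1^2=0,
\]
with terminal data $\alpha_0(1)=-\lambda\rho\sigma^2/2$, $\alpha_1(1)=0$, $\alpha_2(1)=\lambda\rho/2$. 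Next, since $M$ is already a martingale, the drift of $MY$ reduces (via integration by parts and the cross-variation $d[M,Y]_t=M_t\sigma^2(\alpha_1+2\alpha_2Y_t)dt$) to $M_t\sigma^2\bigl[(a+2\alpha_2)Y_t+(b+\alpha_1)\bigr]$. Vanishing for all realizations forces $a=-2\alpha_2$ and $b=-\alpha_1$.

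Substituting $\alpha_2=-a/2$ into the quadratic ODE collapses it to the Riccati-free equation $a'=-\sigma^2 a^2$, i.e.\ $(1/a)'=\sigma^2$; integrating backwards from $a(1)=-\lambda\rho$ (which follows from $\alpha_2(1)=\lambda\rho/2$) gives the stated closed form for $a$. Plugging $\alpha_1=-b$ into the linear ODE reduces it, after cancellation, to $b'=-\sigma^2 a b$ with $b(1)=0$, hence $b\equiv 0$ by uniqueness and therefore $\alpha_1\equiv 0$. Finally the constant-term equation becomes $\alpha_0'=-\sigma^2\alpha_2=\sigma^2 a/2$, and a direct antiderivative using the explicit form of $a$ yields the logarithmic expression for $\alpha_0$. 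The converse direction is immediate: with these explicit $a,b,\alpha_i$, both drifts vanish by construction, so $M$ and $MY$ are local martingales, and boundedness of the coefficients on $[0,1]$ together with Gaussianity of $Y$ upgrades them to true martingales.

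I do not anticipate a serious obstacle: the argument is essentially a direct computation. The only mildly delicate point is organizing the if-and-only-if carefully, since the two algebraic identities from the $MY$-drift must be used to eliminate $a$ and $b$ before the ODEs can be solved in closed form, and one needs to be attentive that the terminal condition $\alpha_1(1)=0$ (together with $b(1)=0$ inherited from it through $b=-\alpha_1$) is what pins down the trivial solution $b\equiv 0$ rather than some exponential multiple.
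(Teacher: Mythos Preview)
Your proposal is correct and follows essentially the same route as the paper: apply It\^o's formula to $M$ to obtain the three coefficient ODEs, then read off from the drift of $MY$ (using $dM_t=M_t\sigma(\alpha_1+2\alpha_2 Y_t)\,dB_t$) the algebraic relations $a=-2\alpha_2$ and $b=-\alpha_1$, substitute to collapse the system to $a'=-\sigma^2 a^2$, $b'=-\sigma^2 ab$, $\alpha_0'=\sigma^2 a/2$, and integrate with the terminal data. The only cosmetic difference is that the paper invokes Girsanov for the true-martingale upgrade of $MY$, whereas you appeal directly to Gaussianity and bounded coefficients; both are fine.
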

\begin{proof}
	Application of Ito's formula yields the following necessary and sufficient conditions on the coefficients $\alpha_i$ for $M$ to be a local martingale:
	\bea 
	2 \alpha_2(t) a(t)+ 2 \alpha_2^2(t)+\frac{\alpha_2'(t)}{\sigma^2}&=&0\label{e:ODE1};\\
	\frac{\alpha_1'(t)}{\sigma^2}+ \alpha_1(t)a(t)+2 \alpha_2(t)\alpha_1(t)+2 \alpha_2(t)b(t)&=&0;\label{e:ODE2}\\
	\frac{\alpha_0'(t)}{\sigma^2}+ \alpha_1(t)b(t)+ \alpha_2(t)+\frac{\alpha_1^2(t)}{2}&=&\label{e:ODE3}0.
	\eea
As a result
\[
dM_t=M_t (\alpha_1(t)+2\alpha_2(t)Y_t)\sigma dB_t.
\]	
Thus, in order for $MY$ to be a local martingale we must have
\[
\alpha_1(t)=-b(t)\; \mbox{ and }\; \alpha_2(t)=-\frac{a(t)}{2}.
\]
Plugging above into (\ref{e:ODE2}) yields $b'=-\sigma^2 a b$. Thus, $b\equiv 0$ since $b(1)=0$. Moreover, the remaining ODEs can be rewritten as
\bean
a'(t)&=& -\sigma^2 a^2(t);\\
\alpha_0'(t)&=&\frac{\sigma^2}{2}a(t),
\eean
which yields the claim in view of the boundary of conditions. Indeed, $Y$ is an Ornstein-Uhlenbeck process, which in turn implies that $M$ is a martingale by direct calculations. Moreover, Girsanov's theorem yields that $MY$ is a martingale as well.
\end{proof}
\begin{lemma} \label{l:martingaleU}
	Suppose $Y$ solves
	\[
	dY_t=\sigma^2\left(a(t)Y_t +b(t)\right)dt + \sigma dB_t. \quad Y_0=0,
	\]
	for some continuously differentiable functions $a$ and $b$. Then, $Y_t$ is normally distributed given $Y_s$ with mean
	\[
	Y_s e^{\sigma^2(A(t)-A(s))} +\int_s^t e^{\sigma^2(A(t)-A(r))}\sigma^2b(r)dr,
	\]
	and variance
	\[
	v^2(s,t):=\int_s^te^{2 \sigma^2(A(t)-A(r))}\sigma^2dr,
	\]
	where
	\[
	A(t):=\int_0^ta(s)ds.
	\]
	Moreover,
	\[
	M_t:=\exp\left(\alpha_0(t)+\alpha_1(t)Y_t +\alpha_2(t)Y_t^2\right)
	\]
	is a martingale, where
	\bean
	\alpha_0(t)&:=&-\frac{\lambda\rho\sigma^2}{2}t,\\
	\alpha_1(t)&:=&\lambda\rho \sigma^2 e^{\sigma^2 A(1)}\int_t^1e^{-\sigma^2 A(s)}b(s)ds, \mbox{ and }\\
	\alpha_2(t)&:=& \lambda\rho e^{\sigma^2(A(1)-A(t))}+ \half \lambda^2\rho^2\sigma^2 \int_t^1 e^{2\sigma^2(A(1)-A(s))}ds- \frac{\lambda \rho }{2},
	\eean
	and $A(t):=\int_0^t a(s)ds$, if and only if the following are satisfied:
	\bean
	a(t)&=&-\frac{\lambda\rho}{1+ \lambda \rho \sigma^2(1-t)} \mbox{ and }\\
	b(t)&=&\pm \frac{1}{2}\log(1+\lambda\rho\sigma^2(1-t))\sqrt{\frac{a(t)}{-\lambda\rho\sigma^2 (1-t)+\log (1+\lambda\rho\sigma^2(1-t))}}.
\eean
\end{lemma}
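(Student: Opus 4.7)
The plan is to verify the conditional Gaussian law via an explicit solution of the linear SDE, reduce the martingale property of $M_t$ to three ODEs through It\^o's formula, and then show that those ODEs are compatible with the prescribed integral forms of $\alpha_0,\alpha_1,\alpha_2$ precisely when $a$ and $b$ take the claimed expressions. The Gaussian statement is immediate from the integrating factor $e^{-\sigma^2 A(t)}$, which gives
\[
Y_t = Y_s e^{\sigma^2(A(t)-A(s))} + \sigma^2\!\int_s^t e^{\sigma^2(A(t)-A(r))}b(r)\,dr + \sigma\!\int_s^t e^{\sigma^2(A(t)-A(r))}\,dB_r,
\]
and the It\^o isometry extracts the variance $v^2(s,t)$. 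For the martingale claim, writing $M_t=e^{f(t,Y_t)}$ with $f(t,y)=\alpha_0(t)+\alpha_1(t)y+\alpha_2(t)y^2$ and applying It\^o's formula, the vanishing of the drift, collected by powers of $Y_t$, is equivalent to the three coupled ODEs
\[
\alpha_2'+2\sigma^2 a\alpha_2+2\sigma^2\alpha_2^2=0,\quad \alpha_1'+\sigma^2 a\alpha_1+2\sigma^2\alpha_2(\alpha_1+b)=0,\quad \alpha_0'+\sigma^2 b\alpha_1+\tfrac{\sigma^2}{2}\alpha_1^2+\sigma^2\alpha_2=0.
\]

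Plugging the explicit $\alpha_2$ into the Riccati equation, direct differentiation produces $\alpha_2'(t)=-\sigma^2 a(t)\lambda\rho e^{\sigma^2(A(1)-A(t))}-\tfrac{1}{2}\lambda^2\rho^2\sigma^2 e^{2\sigma^2(A(1)-A(t))}$; after substitution the exponentials cancel in pairs and force $a(t)=-\lambda\rho\, e^{\sigma^2(A(1)-A(t))}$. Recalling $A'=a$, this is a separable ODE $a'=-\sigma^2 a^2$ with $a(1)=-\lambda\rho$ whose unique solution is the claimed rational expression, and a short computation then gives the useful consequences
\[
e^{\sigma^2(A(1)-A(t))}=\tfrac{1}{1+\lambda\rho\sigma^2(1-t)},\qquad \alpha_2(t)=-\tfrac{a(t)}{2}.
\]
The identity $\alpha_2=-a/2$ reduces the second ODE to $\alpha_1'=\sigma^2 a\,b$, which is automatic from differentiating the integral representation of $\alpha_1$ with the explicit $a$.

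The remaining constraint is the $\alpha_0$ equation, which with $\alpha_0'=-\lambda\rho\sigma^2/2$ simplifies to
\[
b(t)\alpha_1(t)+\tfrac{1}{2}\alpha_1^2(t)=\tfrac{\lambda^2\rho^2\sigma^2(1-t)}{2\bigl(1+\lambda\rho\sigma^2(1-t)\bigr)}.
\]
Eliminating $b$ via $b=-\bigl(1+\lambda\rho\sigma^2(1-t)\bigr)\alpha_1'/(\lambda\rho\sigma^2)$ and setting $u:=\alpha_1^2$ converts this into a first-order linear ODE for $u$ with integrating factor $1+\lambda\rho\sigma^2(1-t)$; integrating subject to $u(1)=0$ yields
\[
\alpha_1^2(t)=\tfrac{\lambda\rho\bigl[\lambda\rho\sigma^2(1-t)-\log\!\bigl(1+\lambda\rho\sigma^2(1-t)\bigr)\bigr]}{1+\lambda\rho\sigma^2(1-t)},
\]
and inverting for $b$ produces the two-signed expression in the statement. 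The sign of $b$ is inherited from that of $\alpha_1$, which is internally consistent because $\alpha_1$ is a positively weighted integral of $b$. Uniform boundedness of $\alpha_1,\alpha_2$ on $[0,1]$ together with the Gaussian tails from Step~1 upgrade the local martingale to a true martingale.

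I expect the principal technical obstacle to be the algebraic bookkeeping that reduces the $\alpha_0$ equation to a linear ODE in $u=\alpha_1^2$ and then produces the explicit $\pm$ formula for $b$; the Riccati step is conceptually clean but requires careful tracking of the exponential factor $e^{\sigma^2(A(1)-A(t))}$, and the coordination of signs between $\alpha_1$ and $b$ is precisely what accounts for the two equilibria appearing in Theorem~\ref{t:eqMM}.
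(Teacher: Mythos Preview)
Your argument is correct and close in spirit to the paper's, but the two diverge at the step that pins down $a$. You substitute the prescribed $\alpha_2$ directly into the Riccati ODE coming from It\^o's formula and read off $a(t)=-\lambda\rho\,e^{\sigma^2(A(1)-A(t))}$; the paper instead takes the \emph{integrated} martingale condition $M_s=\bbE[M_t\mid\cF_s]$, evaluates the conditional expectation via the moment generating function of a noncentral chi-square, and matches the coefficient of $Y_s^2$ to obtain a functional equation for $\phi(t)=e^{\sigma^2(A(1)-A(t))}$ that factors as $(1-\phi-\lambda\rho\sigma^2\int_t^1\phi^2)^2=0$. From that point on the two proofs coincide: both deduce $\alpha_2=-a/2$, observe that the $\alpha_1$-ODE becomes $\alpha_1'=\sigma^2 a b$ and is automatic, and then reduce the $\alpha_0$-equation to a linear ODE in $\alpha_1^2$ that is solved with the stated formula for $b$.

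Your Riccati route is slightly more pedestrian but entirely self-contained and avoids the MGF computation; the paper's route gets to the perfect-square identity in one line but presupposes the Gaussian conditional law derived in Step~1. One small point worth tightening: your phrase ``the exponentials cancel in pairs'' hides that the Riccati, after substitution, factors as $(2\alpha_2-\lambda\rho\phi)\bigl(a+\alpha_2+\tfrac{\lambda\rho\phi}{2}\bigr)=0$, and one has to argue (e.g.\ via the linear ODE $g'=\tfrac{1}{2}\lambda\rho\sigma^2\phi\,g$ for $g:=2\alpha_2-\lambda\rho\phi$ with $g(1)=0$) that the second factor cannot be invoked on a set where the first is nonzero. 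This is implicit in your claim but deserves a sentence.
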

\begin{proof}
The first claim is well-known and follows from direct calculations.

We can identify $\alpha_2$ by using the moment generating function of a chi-squared random variable as follows:
Note that
\[
M_t:=\exp\left(\alpha_0(t)+\alpha_2(t)\left(Y_t+\frac{\alpha_1(t)}{2\alpha_2(t)}\right)^2-\frac{\alpha_1^2(t)}{4\alpha_2(t)}\right).
\]
By using the moment generating function of chi-squared r.v. and equating the coefficients of $Y_s^2$ in $M_s=\bbE[M_t|\cF_s]$, we find
\[
\alpha_2(t)e^{2\sigma^2(A(t)-A(s))}=\alpha_2(s)(1-2\alpha_2(t)v^2(s,t)).
\]
Next take $t=1$ in above and use the definition  for $\alpha_2$ to arrive at
\[
\phi^2(t)=\left[2\phi(t)+ \lambda\rho\sigma^2 \int_t^1\phi^2(u)du-1\right]\left(1-\lambda\rho\sigma^2 \int_t^1\phi^2(u)du\right),
\]
where 
\[
\phi(t)= e^{\sigma^2 (A(1)-A(t))}.
\]
The above is equivalent to
\[
\left(1-\phi(t)-\lambda \rho \sigma^2\int_t^1\phi^2(u)du\right)^2=0.
\]
Thus, using the fact that $\phi(1)=1$, we obtain
\[
\phi(t)=\frac{1}{1+\lambda \rho \sigma^2 (1-t)},
\]
 which implies
\[
a(t)=-\frac{\lambda \rho}{1+\lambda \rho \sigma^2 (1-t)}.
\]
Also observe that $\alpha_i s$ must also satisfy (\ref{e:ODE1})-(\ref{e:ODE3}) since $M$ is a local martingale. Note that (\ref{e:ODE2}) is redundant since $\alpha_2=-a/2$ by direct manipulation. Thus,  (\ref{e:ODE3}) simplifies to 
\[
-\frac{\rho\lambda}{2}+\alpha_1(t)b(t)-\frac{a(t)}{2}+\frac{\alpha_1^2(t)}{2}=0.
\]
By substituting $b=\frac{\alpha_1'}{\sigma^2 a}$ and $v=\alpha^2_1$,
we get the ODE 
\[
\frac{v'}{\sigma^2 a}+v=\lambda \rho + a,
\]
which has the following solution:
\[
v(t)=-a(t)\left(\lambda\rho\sigma^2 (1-t)-\log (1+\lambda\rho\sigma^2(1-t))\right).
\]
Note that $v$ is always positive since the term in the parentheses is decreasing to 0 at t=1.  

Calculating $b$ via $b=\frac{\alpha_1'}{\sigma^2 a}$ yields the claimed form of $b$.

Finally, since $Y$ is an Ornstein-Uhlenbeck process,  that $M$ is a martingale follows from direct calculations via the transition function. 
\end{proof}

{
\section{Strategic trader's optimality}
\begin{proposition} \label{p:ioptimality} Let  $\lambda$ be a right-continuous and piecewise constant function on $[0,1]$ and $\phi(t)=\phi + \chf_{[t=1]}\Delta \phi$. Consider the price process defined by
	\[
	P_t=\phi(t)+\int_0^t \lambda(s)\{dZ_s+dX_s\}
	\]
	and assume that  i) $\Delta \phi=c_1 P_{1-} +c_0$ for some  constant $c_1$, $c_0$. %random variables $c_1$, $c_0$ such that $c_1,c_0\in \cF^Y_s$ for some $s<1$. 
	 Denote the  trader's valuation of $V$ by $\tilde{V}$, i.e $\tilde{V}=V$ for the insider and $\tilde{V}=\mu$ for the uninformed strategic trader. 
	\begin{enumerate}
		\item The trader has finite profit if and only if the following conditions are satisfied:
		\begin{itemize}
			\item[C1] $\lambda$ is non-decreasing on $[0,1)$,  and $\frac{2 \lambda(1)}{\lambda(1-)}\geq (1+c_1)^2$. 
			\item[C2] If $ \frac{2 \lambda(1)}{\lambda(1-)}= (1+c_1)^2$, then  $c_0=-\mu c_1$ when $\tilde{V}=\mu$, and $c_0=c_1=0$,  when $\tilde{V}=V$. 
		\end{itemize}
		\item If the trader has finite profit,  any  admissible strategy is optimal if and only if
		\begin{align}
		 \Delta Y_1&=\frac{{\tilde{V}}-c_0-(1+c_1)P_{1-}}{2\lambda(1)},\, \mbox{a.s.}, \label{e:dY1opt}\\
		P_{1-}&=\frac{\tilde{V}((1+c_1)\lambda(1-)-2\lambda(1))-c_0(1+c_1)\lambda(1-)}{\lambda(1-)(1+c_1)^2-2\lambda(1)},\, \mbox{a.s.}, \mbox{ if } \frac{2 \lambda(1)}{\lambda(1-)}> (1+c_1)^2, \label{eq:iopt1}
		\end{align}
		and for any  point $t_i$ of discontinuity of $\lambda$ before $1$, 
		\be \label{eq:iopt2} 
		P_{{t_i}-}=\tilde{V}, \,\mbox{a.s.}.
		\ee
		Moreover, (\ref{eq:iopt1}) and (\ref{eq:iopt2}) are equivalent to
		\begin{align}
		Y_{1-}&=\frac{-\tilde{V}(1+c_1)\lambda(1-)c_{1}-c_0(1+c_1)\lambda(1-)}{\lambda(1-)(\lambda(1-)(1+c_1)^2-2\lambda(1))} +\frac{\tilde{V}-\phi}{\lambda_0}, \, \mbox{a.s.}, \mbox{ if } \frac{2 \lambda(1)}{\lambda(1-)}>(1+c_1)^2,\label{eq:Y1opt} \\
		Y_{t_i}&=\frac{\tilde{V}-\phi}{\lambda(0)}, \,\mbox{a.s.}.\label{eq:Yiopt}
		\end{align}
	Furthermore, the value function of the trader is given by
	$$
              \psi(0,0)+ \sum_{i=1}^{n}\frac{\sigma^2}{2}(1-t_{i})(\lambda(t_i)-\lambda(t_{i-1})) +y^*\frac{(c_1\tilde{V}+c_0)^2}{4\lambda_{1}-2\lambda_{1-}(1+c_1)^2},
        $$
        where 
        \[
            y^*=\left\{\begin{array}{l} 1\mbox{, in the case } \frac{2 \lambda_{1}}{\lambda_{1-}}>(1+c_1)^2\\ 0\mbox{, otherwise.}\end{array}\right.
        \]

	\end{enumerate} 
\end{proposition}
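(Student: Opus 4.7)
The plan is to reduce the problem to a sequence of concave quadratic optimisations by exploiting the piecewise-constant structure of $\lambda$ and combining them with a Back-style It\^o identity on each subinterval where $\lambda$ is constant.

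First I would use integration by parts together with the finite-variation property of $X$ and $\Delta P_s = \lambda(s)\Delta X_s + \Delta\phi(s)$ to rewrite the terminal wealth as
\[
W_1 = \int_0^1 (\tilde V - P_{s-})\,dX_s - \sum_{0<s\le 1} \lambda(s)(\Delta X_s)^2 - \Delta X_1(c_1 P_{1-}+c_0).
\]
Let $0=t_0<t_1<\cdots<t_n\le 1$ denote the discontinuities of $\lambda$. On each open interval $(t_{i-1},t_i)$ where $\lambda$ is constant, applying It\^o to $(P-\tilde V)^2$ yields
\[
\int_{t_{i-1}}^{t_i-}(\tilde V - P_t)\,dX_t = \frac{(P_{t_{i-1}}-\tilde V)^2-(P_{t_i-}-\tilde V)^2}{2\lambda(t_{i-1})}+\frac{\lambda(t_{i-1})\sigma^2(t_i-t_{i-1})}{2}+\text{(local martingale)}.
\]

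Next I would optimise the terminal bulk. The time-$1$ contribution to $W_1$ is the concave quadratic $\Delta X_1[\tilde V-(1+c_1)P_{1-}-c_0]-\lambda(1)(\Delta X_1)^2$, maximised by \eqref{e:dY1opt} with value $(\tilde V-(1+c_1)P_{1-}-c_0)^2/(4\lambda(1))$. Adding this to the It\^o identity on $(t_{n-1},1)$ gives a quadratic in $P_{1-}$ whose leading coefficient is $(1+c_1)^2/(4\lambda(1))-1/(2\lambda(1-))$. Finiteness of the supremum therefore forces $2\lambda(1)\ge(1+c_1)^2\lambda(1-)$; in the borderline case the remaining linear coefficient must also vanish, giving $c_1\tilde V+c_0=0$, which splits into the two sub-cases of C2 according to whether $\tilde V=\mu$ (uninformed) or $\tilde V=V$ (insider, so $V$ is non-degenerate forcing $c_0=c_1=0$). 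Under strict inequality the unique maximiser is \eqref{eq:iopt1}, contributing $(c_1\tilde V+c_0)^2/(4\lambda(1)-2(1+c_1)^2\lambda(1-))$ to the value.

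For interior jump times $t_i\in(0,1)$ I would argue by backward induction. Given a continuation value after $t_i$ of the form $(P_{t_i}-\tilde V)^2/(2\lambda(t_i))+C$, the bulk $\Delta X_{t_i}$ enters concavely and is optimally zero, leaving the same quadratic in $P_{t_i-}$. Combining with the It\^o identity on $(t_{i-1},t_i)$ produces a quadratic in $P_{t_i-}$ with coefficient $[\lambda(t_{i-1})-\lambda(t_i)]/[2\lambda(t_{i-1})\lambda(t_i)]$; finiteness thus forces $\lambda(t_{i-1})\le\lambda(t_i)$ on $[0,1)$ (C1) and, when the inequality holds, pins $P_{t_i-}=\tilde V$ a.s., which is \eqref{eq:iopt2}. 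The equivalences with \eqref{eq:Y1opt} and \eqref{eq:Yiopt} follow by inverting $P_{t-}=\phi+\int_0^{t-}\lambda(s)\,dY_s$ on the initial interval (where $\lambda\equiv\lambda(0)$) together with the inductive observation that optimality preserves $Y_{t_i-}=Y_{t_{i-1}}$ at each interior jump. Summation by parts $\sum_i \lambda(t_{i-1})(t_i-t_{i-1})-\lambda(0)=\sum_i(1-t_i)(\lambda(t_i)-\lambda(t_{i-1}))$ then recovers the stated value function.

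The main obstacle is verifying admissibility of the candidate optimisers in both directions: one must exhibit, on each $(t_{i-1},t_i)$, a finite-variation $\cF^I$-adapted strategy of Back-style form $dX_t=[\text{target}-P_t]\,dt/[\lambda(t_{i-1})(t_i-t)]$ that pins $P_{t_i-}$ (respectively $P_{1-}$) to the prescribed value almost surely and satisfies the no-doubling $L^2$ bound, and then verify via the localised It\^o--martingale step in the reverse direction that no other admissible strategy exceeds the upper bound obtained from the quadratic maximisations.
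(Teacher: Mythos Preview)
Your approach is correct and essentially the same as the paper's: both apply It\^o to $(P-\tilde V)^2/(2\lambda)$ on each interval of constancy of $\lambda$, reduce the problem to concave quadratic optimisations over bulk sizes and target prices (reading off C1--C2 from boundedness of those quadratics), and verify attainability via a Brownian-bridge strategy on each subinterval together with the no-doubling condition to upgrade the local martingale. The only cosmetic difference is that the paper packages the It\^o step into a single piecewise potential $\psi(t,y)=\sum_i\psi^i(t,y)\chf_{[t_i,t_{i+1})}$ and computes $\psi(1-,Y_{1-})-\psi(0,0)$ in one shot, whereas you organise the same computation as a backward induction; your Abel summation is exactly the identity linking the two presentations.
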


\begin{proof}
We will adapt Back's arguments in \cite{Back92} to our case.  Let $(t_i)_{i=1}^n$ be the times at which $\lambda$  jumps before time $1$.  The arguments of Back can be directly used to show that it is suboptimal for the insider's strategy to jump at times other than $t_i$s. Thus, without loss of generality, $X$ can jump only at $t_i$s or at $1$.

 With the convention that $t_{n+1}=1$ and $t_0=0$ define the function 
      $$
        \psi(t,y)=\sum_{i=0}^n\psi^i(t,y)\chf_{[t_i,t_{i+1})}(t),
      $$
      where
      $$
        \psi^i(t,y)=\frac{1}{2\lambda_i}(\lambda_i (y-Y_{t_i})+ P_{t_i}-\tilde{V})^2+\frac{\sigma^2}{2}\lambda_i(1-t),
      $$
      with $\lambda_i=\lambda(t_{i})$. Direct calculations yield
      \bean
       \psi(1-,Y_{1-})&=&\psi(0,0)+ \int_0^{1-}(P_t -\tilde{V})dY_t\\
       &&+\half \sum_{i=1}^{n}\left[\frac{1}{\lambda_i}(P_{t_i}-\tilde{V})^2-\frac{1}{\lambda_{i-1}}(P_{{t_i}-}-\tilde{V})^2-2(P_{t_i}-\tilde{V})\Delta Y_{t_i}+\sigma^2(1-t_{i})(\lambda_i-\lambda_{i-1})\right].      
      \eean
      
      Using this and admissibility properties of $X$, in particular $d[X,X]_t=(\Delta X_t)^2$,      the insider's optimization problem becomes
\begin{eqnarray*}
\sup_{X\in {\mathcal{A}}}\bbE^{\tilde{v}} [W^{X}_1]&=& \sup_{X\in {\mathcal{A}}}\bbE^{\tilde{v}}  \left[\int_0 ^1 \left(\tilde{V}- P_t\right) dX_t \right] \\
&=& \bbE^{\tilde{v}}\left[\psi(0,0)\right]+\half \sum_{i=1}^{n}\sigma^2(1-t_{i})(\lambda_i-\lambda_{i-1})\\ 
&-&\inf_{X\in {\mathcal{A}}}\bbE^{\tilde{v}}\Bigg[\psi(1-,Y_{1-}) -(\tilde{V}-P_1)\Delta Y_1\\
&&-\half \sum_{i=1}^{n}\left(\frac{1}{\lambda_i}(P_{{t_i}-}-\tilde{V})^2-\lambda_i \left(\Delta Y_{t_i}\right)^2-\frac{1}{\lambda_{i-1}}(P_{{t_i}-}-\tilde{V})^2\right)\Bigg] 
\end{eqnarray*}
where $\bbE^{\tilde{v}}=\bbE^{{v}}$ if $\tilde{V}=V$ and $\bbE$ otherwise. In above the last equality is due to (\ref{a:Xintegrability}). Clearly, $\Delta Y_{t_i}=0$ is optimal for $i\leq n$. Note, however, that the same cannot be said for $\Delta Y_1$.  Indeed, since 
$$ 
    P_1=P_{1-}+\Delta\phi+ \lambda_{n+1}\Delta Y_1=(1+c_1) P_{1-} +c_0+ \lambda_{n+1}\Delta Y_1,
$$ the term $ -(\tilde{V}-P_1)\Delta Y_1$ is minimised at
\[
\Delta Y_1=\frac{{\tilde{V}}-c_0-(1+c_1)P_{1-}}{2\lambda_{n+1}}.
\]
Therefore, with this choice of jumps, the optimisation problem of the insider becomes
\begin{eqnarray}
	\sup_{X\in {\mathcal{A}}}\bbE^{\tilde{v}} [W^{X}_1]	&=& \bbE^{\tilde{v}}\left[\psi(0,0)\right]+\half \sum_{i=1}^{n}\sigma^2(1-t_{i})(\lambda_i-\lambda_{i-1})\nn\\ 
	&-&\inf_{X\in {\mathcal{A}}}\bbE^{\tilde{v}}\Bigg[\frac{1}{2\lambda_n}(P_{1-}-\tilde{V})^2-\frac{1}{4\lambda_{n+1}}((1+c_1)P_{1-}+c_0-\tilde{V})^2 \label{e:minbulk}\\
	&&-\half \sum_{i=1}^{n}\left(\frac{1}{\lambda_i}(P_{{t_i}}-\tilde{V})^2-\frac{1}{\lambda_{i-1}}(P_{{t_i}}-\tilde{V})^2\right)\Bigg] \nn
\end{eqnarray}

Suppose we have $\lambda_i < \lambda_{i-1}$ for some  $i\leq n$. Then by choosing a strategy to make $Y_{{t_i}}$ large, the strategic trader will drive the price $P_{{t_i}}$ arbitrarily far away from $\tilde{V}$, thus, achieving infinite profits. Consequently, $\lambda_i$ are increasing for $i \leq n$ and, therefore, it is optimal to achieve  $P_{t_i}=\tilde{V}$ for all $i\leq n$.

Using the same arguments, we can deduce that $\frac{2 \lambda_{n+1}}{\lambda_n}=\frac{2 \lambda(1)}{\lambda(1-)}\geq (1+c_1)^2$ and that  C2 holds whenever $\frac{2 \lambda(1)}{\lambda(1-)}= (1+c_1)^2$.

On the other hand, if $\frac{2 \lambda_{n+1}}{\lambda_n}> (1+c_1)^2$, the maximizer of
$\frac{1}{4\lambda_{n+1}}((1+c_1)P_{1-}+c_0-\tilde{V})^2-\frac{1}{2\lambda_n}(P_{1-}-\tilde{V})^2$ is given by
\be \label{e:optpt1}
P_{{1}-}= \frac{\tilde{V}((1+c_1)\lambda_n-2\lambda_{n+1})-c_0(1+c_1)\lambda_n}{\lambda_n(1+c_1)^2-2\lambda_{n+1}}=\frac{\tilde{V}((1+c_1)\lambda_{1-}-2\lambda_{1})-c_0(1+c_1)\lambda_{1-}}{\lambda_{1-}(1+c_1)^2-2\lambda_{1}}.
\ee

This yields that the upper bound for the value function of the insider is 
$$
  \psi(0,0)+ \sum_{i=1}^{n}\frac{\sigma^2}{2}(1-t_{i})(\lambda_i-\lambda_{i-1}) +y^*\frac{(c_1\tilde{V}+c_0)^2}{4\lambda_{n+1}-2\lambda_n(1+c_1)^2}.
$$
In above, 
\[
y^*=\left\{\begin{array}{l} 1\mbox{, in the case } \frac{2 \lambda_{n+1}}{\lambda_n}>(1+c_1)^2\\ 0\mbox{, otherwise.}\end{array}\right.
\]

%Note that this upper bound is reached if and only if $P$ satisfies (\ref{e:optpt1}). 

Note that if $t_n>0$, $P_{1-}-\tilde{V}=\lambda(1-)(Y_{1-}-Y_{t_n})=\lambda(1-)(Y_{1-}-
\frac{\tilde{V}-\phi}{\lambda(0)})$. Moreover, if $t_n=0$, we still have $P_{1-}-\tilde{V}=\lambda(1-)(Y_{1-}-
\frac{\tilde{V}-\phi}{\lambda(0)})$ since $\lambda(1-)=\lambda(0)$. Thus, the $Y_{1-}$ corresponding to the equality (\ref{e:optpt1}) in case $ \frac{2 \lambda_{1}}{\lambda_{1-}}>(1+c_1)^2$ is given by
\[
 \lambda(1-)\Big(Y_{1-}-
\frac{\tilde{V}-\phi}{\lambda(0)}\Big)=\frac{-\tilde{V}(1+c_1)\lambda_{1-}c_{1}-c_0(1+c_1)\lambda_{1-}}{\lambda_{1-}(1+c_1)^2-2\lambda_{1}}.
\]
%Moreover, plugging the optimal $Y_{1-}$ into \eqref{e:optpt1} yields \eqref{eq:iopt1}.

Hence, to conclude the proof, it is enough to show the existence of an admissible strategy for the insider achieving this upper bound. Observe that $Y$ and $P$ are related by
\[
Y_{t_i-}-Y_{t_{i-1}}= \frac{1}{\lambda_{i-1}} \left[P_{{t_i}-}-P_{{t_{i-1}}-}\right], \qquad i =1, \ldots, n+1,
\]
with the convention that $P_{0-}=P_0, \phi(0-)=\phi(0)$.
Thus, the above conditions on $P$ are equivalent to 
\bean
Y_{t_1}=c_1&:=&\frac{\tilde{V}-\phi}{\lambda_0} \\
Y_{t_i}-Y_{t_{i-1}}=c_i&:=&0,\qquad i =2, \ldots, n,\\
Y_{1-}-Y_{t_{n}}=c_{n+1}&:=&\frac{-\tilde{V}(1+c_1)\lambda_{1-}c_{1}-c_0(1+c_1)\lambda_{1-}}{\lambda_{1-}(\lambda_{1-}(1+c_1)^2-2\lambda_{1})}.
\eean
Direct calculations yield (\ref{eq:Y1opt}) and (\ref{eq:Yiopt}).

A strategy that drives $Y$ to the above levels is given by
\[
\alpha_t= \sigma^2 \frac{Y_t-Y_{t_{i-1}}-c_i}{t_i-t}, \qquad t \in [t_{i-1}, t_i), \, i=1, \ldots, n+1.
\]
Indeed, this strategy makes $Y_{-}$ a Brownian bridge over the interval $[t_{i-1}, t_i]$ from $Y_{t_{i-1}}$ to $Y_{t_{i-1}}+c_i$. In particular, $Y_{-}$ is Gaussian conditioned on $\tilde{V}=v$. Since $\lambda$ as well as $\phi$ is deterministic on $[0,1)$, $P$ is a Gaussian process on $[0,1)$, too. This implies $\bbE^{\tilde{v}}[(P_t-\phi(t))^2]$ is a finite continuous function of $t$ on $[0,1)$ and, therefore, the above strategy is admissible. 
\end{proof}

\begin{remark}\label{r:DMM1}
	In the setting of Proposition \ref{p:ioptimality}, if the strategic (informed or otherwise) trader acts optimally then
	\be \label{e:lossDMM} 
	\frac{1}{2\lambda(1-)}(P_{1-}-\tilde{V})^2-(\tilde{V}-P_1)\Delta Y_1 \leq 0.
	\ee
	Indeed, (\ref{e:dY1opt}) implies $\tilde{V}-P_1=\lambda(1)\Delta Y_1$. Therefore,
	\[
	\frac{1}{2\lambda(1-)}(P_{1-}-\tilde{V})^2-(\tilde{V}-P_1)\Delta Y_1 =\frac{1}{2\lambda(1-)}(P_{1-}-\tilde{V})^2-\frac{((1+c_1)P_{1-}+c_0-\tilde{V})^2}{4\lambda(1)}.
	\]
	However, the right hand side above is minimised by the strategic trader in view of (\ref{e:minbulk}). Note that the minimum is at most $0$ since she  can trade so that $P_{1-}=\tilde{V}$. 
\end{remark}
\begin{remark} \label{r:fictitiousgain}
	Consider the gains process $\hat{G}$ associated to a fictitious market maker that holds all the orders submitted by the strategic trader  and the noise traders and trades at the market quotes. This gains process corresponds to the sum of all the gains of the market makers. Observe that
	\[
	\hat{G}_1=-\int_0^{1-}Y_s dP_s+ Y_{1-}(P_{1-}-V)+ \Delta Y_1(P_1-V)=-\int_0^1 (V-P_t)dY_t +\sigma^2 \int_0^1 \lambda(t)dt.
	\]
	Thus,  when the strategic trader is informed, 
	\[
	\begin{split}
\hat{G}_1&=\sigma^2 \int_0^1 \lambda(t)dt+\psi(1-,Y_{1-})-\psi(0,0)-(V-P_1)\Delta Y_1  \\
&-\half \sum_{i=1}^{n}\left[\frac{1}{\lambda_i}(P_{t_i}-{V})^2-\frac{1}{\lambda_{i-1}}(P_{{t_i}-}-{V})^2-2(P_{t_i}-{V})\Delta Y_{t_i}\right]\\
&-\half\sigma^2\int_0^{1-}(1-t)d\lambda(t)\\
&=\frac{1}{2\lambda(1-)}(P_{1-}-V)^2-(V-P_1)\Delta Y_1 -\frac{1}{2\lambda(0)}(P_0-V)^2 \\
& +\half\sigma^2 \int_0^1 \lambda(t)dt\\
&-\half \sum_{i=1}^{n}\left[\frac{1}{\lambda_i}(P_{t_i}-{V})^2-\frac{1}{\lambda_{i-1}}(P_{{t_i}-}-{V})^2-2(P_{t_i}-{V})\Delta Y_{t_i}\right]
	\end{split}
	\]
	Since the insider's optimal strategy has no jumps before time $1$ and $P_{t_i}=V$ by optimality considerations, the last term vanishes.  Thus, 
	\be \label{e:BGIf}
	\hat{G}_1= \frac{1}{2\lambda(1-)}(P_{1-}-V)^2-(V-P_1)\Delta Y_1 -\frac{1}{2\lambda(0)}(P_0-V)^2  +\half\sigma^2 \int_0^1 \lambda(t)dt.
	\ee
	
	If the strategic trader is uninformed, 
	\[
		\hat{G}_1=-\int_0^{1-}Y_s dP_s+ Y_{1-}(P_{1-}-\mu)+ \Delta Y_1(P_1-\mu)+ Y_1(\mu-V).
	\]
Thus,  similar considerations yield
	\be \label{e:BGuf}
\hat{G}_1= \frac{(P_{1-}-\mu)^2}{2\lambda(1-)}-(\mu-P_1)\Delta Y_1 -\frac{(P_0-\mu)^2 }{2\lambda(0)}+ Y_1(\mu-V)
 +\half\sigma^2 \int_0^1 \lambda(t)dt.
\ee
\end{remark}
}
%%%%%%%%%%%%%%%%%%%%%%%%%%%%%%%%%%%%%%%%%%%%%%%%%%%%%%%%%%%%%%%%%%%%%%%%%%%%%%%%%%%%%%%%%%%%%%%%%%%%%%%%

%%%%%%%%%%%%%%%%%%%%%%%%%%%%%%%%%%%%%%%%%%%%%%%%%%%%%%%%%%%%%%%%%%%%%%%%%%%%%%%%%%%%%%%

\end{document}